%% ****** Start of file apstemplate.tex ****** %
%%
%%
%%   This file is part of the APS files in the REVTeX 4 distribution.
%%   Version 4.1r of REVTeX, August 2010
%%
%%
%%   Copyright (c) 2001, 2009, 2010 The American Physical Society.
%%
%%   See the REVTeX 4 README file for restrictions and more information.
%%
%
% This is a template for producing manuscripts for use with REVTEX 4.0
% Copy this file to another name and then work on that file.
% That way, you always have this original template file to use.
%
% Group addresses by affiliation; use superscriptaddress for long
% author lists, or if there are many overlapping affiliations.
% For Phys. Rev. appearance, change preprint to twocolumn.
% Choose pra, prb, prc, prd, pre, prl, prstab, prstper, or rmp for journal
%  Add 'draft' option to mark overfull boxes with black boxes
%  Add 'showpacs' option to make PACS codes appear
%  Add 'showkeys' option to make keywords appear
\documentclass[
aps,%
pre,%
reprint,%
showpacs,%
superscriptaddress,%
groupedaddress,%
floatfix,%
nofootinbib%
]{revtex4-1}
%\documentclass[aps,prl,preprint,superscriptaddress]{revtex4-1}
%\documentclass[aps,prl,reprint,groupedaddress]{revtex4-1}

% You should use BibTeX and apsrev.bst for references
% Choosing a journal automatically selects the correct APS
% BibTeX style file (bst file), so only uncomment the line
% below if necessary.
%\bibliographystyle{apsrev4-1}
%\usepackage[fleqn]{amsmath}
%\usepackage{amsfonts,amsthm,amssymb,mathrsfs}
\usepackage{float}
\usepackage{verbatim}
\usepackage{amsmath,amsfonts,amsthm,amssymb,times,mathrsfs,txfonts}
\usepackage[hidelinks]{hyperref}
\usepackage{bm}% bold math
\usepackage{multirow}
\usepackage{enumitem}
%% Functional analysis
% Number sets
\newcommand{\field}[1]{\mathbb{#1}}
% Functional spaces (fs)
\newcommand{\fs}[1]{\mathsf{#1}}

\DeclareMathOperator*{\supp}{supp}

\DeclareMathOperator{\Ker}{Ker}
\DeclareMathOperator{\Ran}{Ran}
%\DeclareMathOperator{\deg}{deg}

% Linear algebra

\newcommand{\tp}{\intercal}% transpose operation

\newcommand{\ovl}[1]{\overline{#1}}
\newcommand{\bigO}[1]{\mathop{\mathscr{O}}(#1)}

% Real, Imaginary
%\renewcommand{\Re}{\text{Re}}
%\renewcommand{\Im}{\text{Im}}
\let\Re\relax
\DeclareMathOperator{\Re}{Re}
\let\Im\relax
\DeclareMathOperator{\Im}{Im}
\newcommand{\vv}[1]{\boldsymbol{#1}}
\newcommand{\vs}[1]{\boldsymbol{#1}}

% Laplace, Fourier transform operators
\newcommand{\OP}[1]{\mathscr{#1}}

\DeclareMathOperator{\fourier}{\mathscr{F}}
% some functions

\DeclareMathOperator{\Si}{Si}
\DeclareMathOperator{\Cin}{Cin}
\DeclareMathOperator{\Ci}{Ci}
\DeclareMathOperator{\si}{si}
\DeclareMathOperator{\sinc}{sinc}
\DeclareMathOperator{\besselj}{j}
\DeclareMathOperator{\legendre}{L}

\newtheorem{theorem}{Theorem}[section]
\newtheorem{prop}[theorem]{Proposition}
\newtheorem{corr}[theorem]{Corollary}
\newtheorem{lemma}[theorem]{Lemma}

\newtheorem{rem}{Remark}[section]
\renewcommand{\Re}{\text{Re}}
\renewcommand{\Im}{\text{Im}}

\newcommand{\wtilde}[1]{\widetilde{#1}}

\usepackage[pdftex]{graphicx}
% declare the path(s) where your graphic files are
\graphicspath{{./figs/}}
\usepackage[caption=false]{subfig}

%\newtheorem{rmk}{Remark}
%\newdefinition{rmk}{Remark}
\begin{document}

% Use the \preprint command to place your local institutional report
% number in the upper righthand corner of the title page in preprint mode.
% Multiple \preprint commands are allowed.
% Use the 'preprintnumbers' class option to override journal defaults
% to display numbers if necessary
%\preprint{}

%Title of paper
\title{Nonlinearly Bandlimited Signals}

% repeat the \author .. \affiliation  etc. as needed
% \email, \thanks, \homepage, \altaffiliation all apply to the current
% author. Explanatory text should go in the []'s, actual e-mail
% address or url should go in the {}'s for \email and \homepage.
% Please use the appropriate macro for each each type of information

% \affiliation command applies to all authors since the last
% \affiliation command. The \affiliation command should follow the
% other information
% \affiliation can be followed by \email, \homepage, \thanks as well.
\author{Vishal Vaibhav}
\email[]{vishal.vaibhav@gmail.com}
%\affiliation{Delft Center for Systems and Control, 
%Delft University of Technology, Mekelweg 2. 2628 CD Delft, 
%The Netherlands}% <-this % stops a space
\noaffiliation{}
\date{\today}

\begin{abstract}
In this paper, we study the inverse scattering problem for a class of signals
that have a compactly supported reflection coefficient. The problem boils down
to the solution of the Gelfand-Levitan-Marchenko (GLM) integral equations with a
kernel that is bandlimited. By adopting a sampling theory approach
to the associated Hankel operators in the Bernstein spaces, a constructive proof
of existence of a solution of the GLM equations is obtained under various
restrictions on the nonlinear impulse response (NIR). The formalism developed in this
article also lends itself well to numerical computations yielding algorithms
that are shown to have algebraic rates of convergence. In particular, the use Whittaker-Kotelnikov-Shannon 
sampling series yields an algorithm that converges as $\bigO{N^{-1/2}}$ whereas the use
of Helms and Thomas (HT) version of the sampling expansion yields an algorithm that
converges as $\bigO{N^{-m-1/2}}$ for any $m>0$ provided the regularity
conditions are fulfilled. The complexity of the algorithms
depend on the linear solver used. The use of conjugate-gradient (CG) method yields an
algorithm of complexity $\bigO{N_{\text{iter.}}N^2}$ per sample of the signal 
where $N$ is the number of sampling basis functions used and $N_{\text{iter.}}$
is the number of CG iterations involved. The HT version of the sampling
expansions facilitates the development of algorithms of complexity 
$\bigO{N_{\text{iter.}}N\log N}$ (per sample of the signal) by exploiting the special structure as well as the 
(approximate) sparsity of the matrices involved. The algorithms are numerically 
validated using Schwartz class functions as NIRs that are either bandlimited or
effectively bandlimited. The results suggest that the HT variant of our
algorithm is spectrally convergent for an input of the aforementioned class.
\end{abstract}

% insert suggested PACS numbers in braces on next line
% insert suggested keywords - APS authors don't need to do this
%\keywords{}

%\maketitle must follow title, authors, abstract, \pacs, and \keywords
\maketitle

% body of paper here - Use proper section commands
% References should be done using the \cite, \ref, and \label commands
%\section*{Notations}
%%%%%%%%%%%%%%%%%%%%%%%%%%%%%%%%%%%%%%%%%%%%%%%%%%%%%%%

\section{Introduction}
In this paper, we address the inverse scattering problem for a class of signals
such that the continuous part of their nonlinear Fourier spectrum has a 
compact support and the discrete part is empty. Such signals are called 
\emph{nonlinearly bandlimited} in analogy with bandlimited signals in 
conventional Fourier analysis and they are entirely radiative in nature 
by definition. Such signals lend themselves well to the design of a fast 
inverse nonlinear Fourier transform algorithm~\cite{V2018BL,V2017INFT1} in the differential approach
of inverse scattering. In practical applications, such signals are often 
used as a convenient approximation of signals that have an effectively localized
continuous spectrum. This problem, for instance in the Hermitian class, arises in the design of 
nonuniform fiber Bragg gratings to compensate for second and third order
dispersion in optical fibers~\cite{SSFSS1998,FZM1999}. The target reflection coefficient 
in these problems is a compactly supported chirped profile. 
In the non-Hermitian class, the design of grating-assisted co-directional couplers, a 
device used to couple light between two different guided modes of an optical fiber 
(see~\cite{FZ2000,BS2003} and references therein) requires the solution of a
similar problem. Such signals have
also attracted interest in optical communication where it is
proposed to encode information in the continuous part of nonlinear Fourier
spectrum in an attempt to mitigate nonlinear signal distortions at higher power
levels~\cite{TPLWFK2017}.

In all of the applications mentioned above, accuracy of the numerical
algorithms form a bottleneck either at higher powers in the non-Hermitian class or
at reflectivities approaching unity in the Hermitian class. There is a vast amount
literature on numerical methods for the solution of the Gelfand-Levitan-Marchenko 
(GLM) integral equations notable among them are the integral
layer-peeling~\cite{RH2003}, T\"oplitz
inner-bordering~\cite{BFPS2007,FBPS2015,V2018TL} and 
the Nystr\"om method~\cite{MST2007}. From a practical viewpoint, these
algorithms work for a large class of problems; however, these methods cannot
provide accuracies upto the machine precision with the exception of the method due to
Trogdon and Olver~\cite{TO2013}. This method relies on the formulation of the 
inverse scattering problem as a Rieman-Hilbert problem and it has been demonstrated to be
spectrally convergent. Its domain of application is not limited to the class
of signals considered in this article; however, the complexity of this
algorithm remains high at the same time it is somewhat complicated to implement.

The inverse scattering problem is 
generally formulated on an unbounded domain which poses a serious problem for the
underlying quadrature schemes in the Nystr\"om method or for the overlap integrals 
in the degenerate Kernel method (see Atkinson~\cite{A2009} for an introduction to
these methods). In this paper, following Khare and George~\cite{KG2003} (see
also Vaibhav~\cite{V2018PSWFs}), we propose a sampling theory based approach to the 
discretization of the GLM equations which has the advantage that the basis functions 
are naturally adapted to unbounded domains. The bandlimited nature of the 
functions facilitate accurate quadrature on unbounded domains~\cite{M2001}. It is 
important to emphasize that the method thus obtained requires sampling of the 
impulse response on an equispaced grid which has some clear advantages in 
preserving the inherent symmetries of the system. 

In the sampling approach presented in this paper, we use the classical 
Whittaker-Kotelnikov-Shannon sampling series and the Helms and Thomas (HT) 
version~\cite{HT1962,J1966} of the sampling expansion. The associated basis functions
provide a natural framework for the representation of the Hankel operators 
involved which makes the theoretical analysis related to existence of 
solution or issues of convergence somewhat easier. Further, the Bernstein 
spaces~\cite{M2001} provide a natural setting for rigorous analysis 
of the GLM equations. 

The algorithms presented in this article are shown to have
algebraic orders of convergence. In particular, the use the HT
version of the sampling expansion affords an accuracy of $\bigO{N^{-m-1/2}}$ 
(provided certain regularity conditions are fulfilled) where $N$ is the number 
of basis function used and $m>0$ is a parameter that 
can be chosen arbitrarily. The complexity of these algorithms depends on the
linear solver used. In order to compute one sample of the signal with a direct 
solver, the algorithm would require $\bigO{N^3}$ operations whereas an iterative
solver based on the conjugate-gradient method yields the same result in
$\bigO{N_{\text{iter.}}N^2}$ operations. It must be emphasized that at any step, good seed 
solutions are readily available (from the previous step) when the signal is
being computed on a sufficently fine grid so that quantity $N_{\text{iter.}}$ does
not become prohibitively large. Further, the HT version of the sampling 
series leads to a dramatic decrease in complexity within the iterative approach 
if one takes into account the special
structure as well as the (approximate) sparsity of the matrices involved. The
sparsity structure can be controlled by introducing a tolerance $\epsilon>0$
which introduces an error of $\bigO{N\epsilon}$ while reducing 
the complexity to $\bigO{N_{\text{iter.}}(\epsilon)N\log N}$ per sample of the signal.

The rest of the paper is organized as follows: Sec.~\ref{sec:GLM-BL} discusses
the GLM equations in the functional spaces introduced in Sec.~\ref{sec:prelims}.
The exposition is organized such that the properties of the Hankel operators is
studied in Sec.~\ref{sec:Hnakel-BL} which is then used to discuss the GLM
equation in Sec.~\ref{sec:sampling-approach}. Sec.~\ref{sec:HT} discusses
the application of the HT version of the sampling expansion. Sec.~\ref{sec:num} 
deals with the numerical and algorithmic aspects of the ideas
developed in the preceding section. Sec

\section{Gelfand-Levitan-Marchenko Equations with Bandlimited Kernels}
\label{sec:GLM-BL}
The coupled Gelfand-Levitan-Marchenko (GLM) integral equations arise in 
connection with the inverse scattering problem for the Hermitian as well as the 
non-Hermitian $2\times2$ Zakharov-Shabat scattering 
problem~\cite{ZS1972,AKNS1974,AS1981}. As stated earlier, we consider a class of
signals such that its reflection coefficient $\rho(\xi)\,(\xi\in\field{R})$ has
a compact support, say, in $[-\sigma,\sigma]$ where $\sigma$ is referred to as the
\emph{bandlimiting parameter}. The nonlinear impulse response (NIR), defined by
\begin{equation}
p(\tau)=\frac{1}{2\pi}\int_{-\sigma}^{\sigma}\rho(\xi) e^{i\xi\tau}d\xi,
\end{equation}
is, evidently, a bandlimited function. 

Let $q(t)$ denote the inverse nonlinear Fourier transform (NFT) of $\rho(\xi)$.
The GLM equations corresponding to $p(\tau)$ can be stated as
\begin{equation}\label{eq;GLM-start}
\begin{split}
&\kappa {A}_2^*(\tau,t)=\int_{\tau}^{\infty}{A}_1(\tau,s){p}(s+t)ds,\\
&{A}_1^*(\tau,t) = p(\tau+t) +\int_{\tau}^{\infty}A_2(\tau,s){p}(s+t)ds,
\end{split}
\end{equation}
where $\kappa=+1$ (Hermitian scattering problem) and $\kappa=-1$ (non-Hermitian
scattering problem). The scattering potential is recovered from 
\begin{equation}
q(\tau)=-2A_1(\tau,\tau),
\end{equation}
together with the estimate
\begin{equation}
\int^{\infty}_{\tau}|q(s)|^2ds=2\kappa A_2(\tau,\tau).
\end{equation}
Let us enumerate two interesting properties that will be
useful later:
\begin{itemize}
\item \textit{Shift in time domain}: If $q(t)$ is the 
inverse NFT of $\rho(\xi)$, then the inverse NFT
of $\rho(\xi)e^{2i\xi t_0}$ is $q(t+t_0)$. 
\item\textit{Scaling in frequency domain}: If the inverse NFT of $\rho(\xi)$ is $q(t)$, then the inverse
NFT of $\rho(\lambda\xi)$ is $\lambda^{-1} q(t/\lambda)$.
\end{itemize}
The shifting property allows us to fix $\tau=\tau_0$ in~\eqref{eq;GLM-start} and 
simply keep varying the variable $t_0$ to obtain the scattering potential over 
the entire real line. Therefore, we may set $\tau=0$ in~\eqref{eq;GLM-start} without 
the loss of generality and focus on the following form of the GLM equations:
\begin{equation}\label{eq;GLM-mod}
\begin{split}
&\kappa {A}_2^*(t)=\int_{0}^{\infty}{p}(s+t){A}_1(s)ds,\\
&{A}_1^*(t) = p(t) +\int_{0}^{\infty}{p}(s+t)A_2(s)ds,
\end{split}
\end{equation}
%%%%%%%%%%%%%%%%%%%%%%%%%%%%%%%%%%%%%%%%%%%%
%%%%%%%%%%%%%%%%%%%%%%%%%%%%%%%%%%%%%%%%%%%%
%%%%%%%%%%%%%%%%%%%%%%%%%%%%%%%%%%%%%%%%%%%%
\subsection{Preliminaries}\label{sec:prelims}
The set of real numbers (integers) is denoted by $\field{R}$ ($\field{Z}$) and 
the set of non-zero positive real numbers (integers) by $\field{R}_+$ 
($\field{Z}_+$). The set of complex numbers are denoted by $\field{C}$,
and, for $\zeta\in\field{C}$, $\Re(\zeta)$ and $\Im(\zeta)$ refer to the real
and the imaginary parts of $\zeta$, respectively. The complex conjugate of 
$\zeta\in\field{C}$ is denoted by $\zeta^*$ and $\sqrt{\zeta}$ denotes its
square root with a positive real part. The upper-half (lower-half) of $\field{C}$ 
is denoted by $\field{C}_+$ ($\field{C}_-$) and it closure by $\ovl{\field{C}}_+$
($\ovl{\field{C}}_-$).

The Fourier transform of a function $f(t)$ is defined as
\begin{equation*}
F(\xi)=\fourier[f](\xi)=
\int_{\field{R}}f(t)e^{-i\xi t}dt.
\end{equation*}
The characteristic function of a set $\Omega\subset\field{R}$ is denoted by
\begin{equation}
\chi_{\Omega}=
\begin{cases}
1,& x\in\Omega,\\
0,& \text{otherwise}.
\end{cases}
\end{equation}
The Lebesgue spaces over the domain $\Omega\subset\field{R}$ are denoted by 
$\fs{L}^{\nu}(\Omega)\,(1\leq\nu\leq\infty)$ and corresponding norm by
$\|\cdot\|_{\fs{L}^{\nu}(\Omega)}$. If the domain is not mentioned, it assumed to
be $\field{R}$ unless otherwise stated. 

For a rigorous analysis of the GLM equations 
with bandlimited kernels, it is
convenient to work with the \emph{Bernstein spaces}~\cite[Chap.~2]{M2001} (also
see~\cite[Chap.~3]{N1975}), $\fs{B}_{\sigma}^{\nu}$ with 
$1\leq\nu\leq\infty$, defined as the class of entire
functions of exponential type-$\sigma$ whose restriction to the $\field{R}$ belong
to $\fs{L}^{\nu}$. Further, these spaces satisfy the following embedding property:
$\fs{B}_{\sigma}^{\nu}\subset \fs{B}_{\sigma}^{\nu'}\subset \fs{B}_{\sigma}^{\infty}$ where
$1\leq\nu\leq\nu'\leq\infty$. For $f\in\fs{B}_{\sigma}^{\nu}$ and any $h>0$, the following
inequality holds
\begin{equation}\label{eq:B-spaces-norm}
\|f\|_{\fs{L}^{\nu}}\leq\sup_{t\in\field{R}}
\left(\sum_{n\in\field{Z}}\left|f\left(t-nh\right)\right|^{\nu}\right)^{1/\nu}
\leq (1+\sigma h)\|f\|_{\fs{L}^{\nu}}.
\end{equation}
This inequality proves extremely useful in establishing certain bounds and it
appears mostly with the parameter $h=\pi/\sigma$, the grid
spacing for \emph{Nyquist sampling} of $\sigma$-bandlimited functions. Further,
we recall from Boas~\cite[Thm.~6.7.1]{Boas1954}, if 
$f\in\fs{B}^{\nu}_{\sigma}\, (1\leq\nu<\infty)$, then
\begin{equation}
\int^{\infty}_{-\infty}|p(x+iy)|^{\nu}dx\leq 
e^{\nu\sigma|y|}\int^{\infty}_{-\infty}|p(x)|^{\nu}dx,
\end{equation}
and $\lim_{|t|\rightarrow\infty}f(t)=0$. 

Next let us define the \emph{Hardy classes} $\fs{H}^2_{\pm}$ 
which are a class of functions analytic in upper (lower) half of the 
complex plane such that the expressions (which qualify as norms)
\begin{equation}
\begin{split}
&\|f\|_{\fs{H}^2_+}=\sup_{\eta\in\field{R}_+}\left(\int_{\field{R}}f(\xi+i\eta)|^2d\xi\right)^{1/2},\\
&\|f\|_{\fs{H}^2_-}=\sup_{\eta\in\field{R}_-}\left(\int_{\field{R}}f(\xi+i\eta)|^2d\xi\right)^{1/2}.
\end{split}
\end{equation}
are bounded, respectively. The Paley-Wiener theorem allows one to characterize
these spaces solely in terms of their boundary functions as follows:
\begin{equation}
\begin{split}
&\fs{H}^2_+=\{f\in\fs{L}^2|\,\fourier^{-1}[f]|_{\Omega_+}=0\},\\
&\fs{H}^2_-=\{f\in\fs{L}^2|\,\fourier^{-1}[f]|_{\Omega_-}=0\},\\
\end{split}
\end{equation}
so that $\fs{L}^2=\fs{H}^2_+\oplus\fs{H}^2_-$. For $f\in\fs{L}^2$, the
decomposition into $\fs{H}^2_{\pm}$ reads as
\begin{equation}
\begin{split}
f^{(+)}&=\left(\fourier\circ\chi_{\Omega_-}\circ\fourier^{-1}\right)f=\frac{1}{2}\left(f+i\OP{H}[f]\right),\\
f^{(-)}&=\left(\fourier\circ\chi_{\Omega_+}\circ\fourier^{-1}\right)f=\frac{1}{2}\left(f-i\OP{H}[f]\right),
\end{split}
\end{equation}
respectively.
\begin{lemma}
If $p\in\fs{B}_{\sigma}^2$, then $\rho\in\fs{L}^1\cap\fs{L}^2$ with support in
$[-\sigma,\sigma]$.
\end{lemma}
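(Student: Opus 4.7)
The plan is to invoke the classical Paley–Wiener characterization of the Bernstein space $\fs{B}_{\sigma}^{2}$ together with the compact-support Cauchy–Schwarz estimate. Specifically, the defining relation
\[
p(\tau)=\frac{1}{2\pi}\int_{-\sigma}^{\sigma}\rho(\xi)\,e^{i\xi\tau}\,d\xi
\]
identifies $p$ as (a constant multiple of) the inverse Fourier transform of $\rho\chi_{[-\sigma,\sigma]}$, so that $\rho$ is, up to normalization, the boundary Fourier transform of $p$.

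First, I would apply the Paley–Wiener theorem: a function $p$ belongs to $\fs{B}_{\sigma}^{2}$ if and only if its Fourier transform $\fourier[p]$ lies in $\fs{L}^{2}$ and is supported in $[-\sigma,\sigma]$. Matching this against the defining integral above forces $\rho\in\fs{L}^{2}$ with $\supp\rho\subseteq[-\sigma,\sigma]$, together with the Parseval identity $\|\rho\|_{\fs{L}^{2}}=\sqrt{2\pi}\,\|p\|_{\fs{L}^{2}}$ (up to the $2\pi$ convention in use). This immediately delivers the $\fs{L}^{2}$ and support claims.

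Second, to pass to $\fs{L}^{1}$, I would use that $\rho$ is now an $\fs{L}^{2}$ function supported on the bounded interval $[-\sigma,\sigma]$. Cauchy–Schwarz then gives
\[
\|\rho\|_{\fs{L}^{1}}=\int_{-\sigma}^{\sigma}|\rho(\xi)|\,d\xi
\leq \sqrt{2\sigma}\,\|\rho\|_{\fs{L}^{2}([-\sigma,\sigma])}
=\sqrt{2\sigma}\,\|\rho\|_{\fs{L}^{2}},
\]
which is finite. Hence $\rho\in\fs{L}^{1}\cap\fs{L}^{2}$, completing the argument.

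There is no real obstacle here; the only thing to be careful about is bookkeeping the $2\pi$ normalization between the paper's convention for $\fourier$ and the integral defining $p$, so that the Paley–Wiener identification of $\rho$ with the Fourier-side boundary function is stated with the correct constants. Once that is aligned, both statements follow by quoting standard results.
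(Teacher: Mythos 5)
Your proof is correct and follows essentially the same route as the paper's: Paley--Wiener identifies $\rho$ as the compactly supported $\fs{L}^2$ Fourier-side function, and Cauchy--Schwarz on the bounded interval $[-\sigma,\sigma]$ gives the $\fs{L}^1$ bound. If anything, your displayed inequality is stated more carefully than the paper's (which omits a square root on the $\fs{L}^2$ factor), so no changes are needed.
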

\begin{proof}
If $p\in\fs{B}_{\sigma}^2$, then $\rho\in\fs{L}^2$ with support in
$[-\sigma,\sigma]$. Then, using Cauchy-Schwartz inequality, we have
\begin{equation}
\left(\int_{-\sigma}^{\sigma}|\rho(\xi)|d\xi\right)
\leq\sqrt{2\sigma}\int_{-\sigma}^{\sigma}|\rho(\xi)|^2d\xi
\leq\sqrt{2\sigma}\|p\|_{\fs{L}^2}.
\end{equation}
\end{proof}
The functions in $\fs{B}^{\infty}_{\sigma}$ can be regarded as Fourier-Laplace
transforms of certain class of distributions supported in $[-\sigma,\sigma]$. If 
$\rho(\xi)$ is a function of bounded variation on $(-\sigma,\sigma)$, denoted by
$\fs{BV}(-\sigma,\sigma)$, such that
$\rho(-\sigma+0)=\rho(\sigma-0)$, then $p(z)$ satisfies the
following estimate
\begin{equation}
|p(z)|\leq\frac{C}{1+|z|}e^{\sigma |\Im(z)|},\quad z\in\field{C},
\end{equation}
for some $C>0$. Such function belong $\fs{B}^2_{\sigma}$ but not
$\fs{B}^1_{\sigma}$. Further, if $\rho\in\fs{C}^{\nu}_0(\field{R})$ with support in
$[-\sigma,\sigma]$, then, there exists a $C_{\nu}>0$ such
that~\cite{Y1968}
\begin{equation}
|p(z)|\leq\frac{C}{(1+|z|)^{\nu}}e^{\sigma |\Im(z)|},\quad z\in\field{C}.
\end{equation}

\subsection{Hankel Operators with Bandlimited Kernels}\label{sec:Hnakel-BL}
Let $\Omega_+=[0,\infty)$ and define the Hankel operator
\begin{equation}\label{eq:p-GLM-OP}
\OP{P}[g](t)=\int_{\Omega_+} p(t+s)g(s)ds,\quad t\in\Omega_+.
\end{equation}
The field underlying the image of $\OP{P}$ can be extended to the entire complex
plane. Let $\Omega_-=(-\infty,0]$. For convenience, we may 
also work with the form $\wtilde{\OP{P}}[g](t)=\OP{P}[g](-t)$ so that, 
for $g$ supported in $\Omega_+$, 
\begin{equation}\label{eq:hankel-conv}
\wtilde{\OP{P}}[g](t)=\int_{\Omega_+} \tilde{p}(t-s)g(s)ds
=(\tilde{p}\star g)(t),\quad t\in\Omega_-,
\end{equation}
where $\tilde{p}(t)=p(-t)$ and ``$\star$'' denotes convolution. In the Fourier domain, 
the Hankel operator $\OP{P}$ can be expressed as
\begin{equation}
\OP{H}_{\tilde{\rho}}=\fourier\circ\chi_{\Omega_-}\circ\wtilde{\OP{P}}\circ\fourier^{-1}
\end{equation}
so that
\begin{equation}\label{eq:henkel-op-2}
\OP{H}_{\tilde{\rho}}[G](\xi)
=\left(\fourier\circ\chi_{\Omega_-}\circ\fourier^{-1}\right)[\tilde{\rho}G](\xi),\quad\xi\in\field{R},
\end{equation}
where $G(\xi)=\fourier[g](\xi)$ with $g$ supported in $\Omega_+$.

%%%%%%%%%%%%%%%%%%%%%%%%%%%%%%%%%%%%%%%%%%%%%%%%%%%%%%%%%%%%%%%%%%%%%%%%%%%%%%%%%%%%%%%%%
\begin{prop}[Boundedness of Hankel operators]\label{lemma:bounded-hankel}
Define $\mathcal{I}_{\nu}=\|p\chi_{\Omega_+}\|_{\fs{L}^{\nu}}$.
\begin{enumerate}[label=(\alph*)]
\item\label{lemma:bound-a} 
If $p\in\fs{L}^1$, then~\eqref{eq:p-GLM-OP} defines a bounded linear operator
$\OP{P}:\fs{L}^{\nu}(\Omega_+)\rightarrow\fs{L}^{\nu}(\Omega_+)$ for $\nu=1,2$.

\item\label{lemma:bound-b} 
If $p\in\fs{B}^1_{\sigma}$, then~\eqref{eq:p-GLM-OP} defines a bounded linear operator
$\OP{P}:\fs{L}^{\nu}(\Omega_+)\rightarrow\fs{B}^{\nu}_{\sigma}$ for $\nu=1,2$. 

\item\label{lemma:bound-c} 
If $p\in\fs{B}^2_{\sigma}$ with $\rho\in\fs{L}^{\infty}$,
then~\eqref{eq:p-GLM-OP} and~\eqref{eq:henkel-op-2} 
define bounded linear operators
$\OP{P}:\fs{L}^{2}(\Omega_+)\rightarrow\fs{B}^{2}_{\sigma}$ and 
$\OP{H}_{\tilde{\rho}}:\fs{H}^{2}_-\rightarrow\fs{H}^{2}_+$, respectively.
\end{enumerate}
\end{prop}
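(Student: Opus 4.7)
The plan is to treat the three parts in a unified way through Fourier analysis: $\OP{P}$ is essentially a convolution followed by a half-line restriction, and in the Fourier domain it reduces to multiplication by $\rho$ (or $\tilde{\rho}$) followed by a projection. Parts (a) and (b) will reduce to Young's inequality combined with a Paley--Wiener-type argument, whereas part (c) must be handled directly in the Fourier domain because $\fs{B}^2_{\sigma}$ functions need not be integrable.

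For part (a), I would extend $g\in\fs{L}^{\nu}(\Omega_+)$ by zero to all of $\field{R}$ and use the identification $\wtilde{\OP{P}}[g](t)=(\tilde{p}\star g)(t)$ from \eqref{eq:hankel-conv}. Since $p\in\fs{L}^1$ implies $\tilde{p}\in\fs{L}^1$, Young's inequality yields $\|\tilde{p}\star g\|_{\fs{L}^{\nu}}\leq\|p\|_{\fs{L}^1}\|g\|_{\fs{L}^{\nu}}$ for $\nu\in\{1,2\}$. Restricting to $\Omega_-$ (equivalently, restricting $\OP{P}[g]$ to $\Omega_+$) can only decrease the $\fs{L}^{\nu}$ norm, so $\OP{P}$ is bounded on $\fs{L}^{\nu}(\Omega_+)$ with operator norm at most $\|p\|_{\fs{L}^1}$. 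For part (b), since $p\in\fs{B}^1_{\sigma}\subset\fs{L}^1$, boundedness into $\fs{L}^{\nu}(\Omega_+)$ is inherited from (a), and the remaining task is to upgrade the codomain to $\fs{B}^{\nu}_{\sigma}$. Using that $\rho$ is bounded and supported in $[-\sigma,\sigma]$ (hence integrable), Fubini justifies the interchange
\begin{equation*}
\OP{P}[g](t)=\frac{1}{2\pi}\int_{-\sigma}^{\sigma}\rho(\xi)G(-\xi)e^{i\xi t}\,d\xi,
\end{equation*}
where $G=\fourier[g]$. The integrand is supported in $[-\sigma,\sigma]$, so by the Paley--Wiener theorem $\OP{P}[g]$ extends to an entire function of exponential type at most $\sigma$; coupled with the $\fs{L}^{\nu}$ bound, this places the image in $\fs{B}^{\nu}_{\sigma}$.

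For part (c), the $\fs{L}^1$ route fails because $\fs{B}^2_{\sigma}\not\subset\fs{L}^1$ in general, so I would work directly with the Fourier-domain representation \eqref{eq:henkel-op-2}. For $G\in\fs{H}^2_-$, the hypothesis $\rho\in\fs{L}^{\infty}$ gives $\tilde{\rho}G\in\fs{L}^2$ with $\|\tilde{\rho}G\|_{\fs{L}^2}\leq\|\rho\|_{\fs{L}^{\infty}}\|G\|_{\fs{L}^2}$. Since $\fourier\circ\chi_{\Omega_-}\circ\fourier^{-1}$ is the orthogonal projection from $\fs{L}^2$ onto $\fs{H}^2_+$, it follows that $\OP{H}_{\tilde{\rho}}:\fs{H}^2_-\to\fs{H}^2_+$ is bounded with norm at most $\|\rho\|_{\fs{L}^{\infty}}$. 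Transferring this estimate back to the time domain via Plancherel and the correspondence $g\in\fs{L}^2(\Omega_+)\leftrightarrow G\in\fs{H}^2_-$ yields the $\fs{L}^2$ bound for $\OP{P}$, and the image lies in $\fs{B}^2_{\sigma}$ because its Fourier transform inherits the compact support of $\rho$.

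The main obstacle is largely bookkeeping rather than any deep inequality: one must track the reflections, sign conventions, and the distinction between $\OP{P}$ and $\wtilde{\OP{P}}$ (and between $\rho$ and $\tilde{\rho}$) carefully enough that the Fourier/time-domain identifications are unambiguous, and justify the Fubini swap in (b). Beyond Young's inequality, Plancherel, and a standard Paley--Wiener identification of Bernstein spaces, no heavier tools are required.
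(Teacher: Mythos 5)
Your proposal is correct, and parts (a)--(b) take a genuinely different route from the paper while part (c) coincides with it. For (a) you replace the paper's direct kernel estimates (Tonelli for $\nu=1$ and a Schur-test splitting $|p|=|p|^{1/2}\cdot|p|^{1/2}$ under Cauchy--Schwarz for $\nu=2$) with Young's inequality applied to the convolution form \eqref{eq:hankel-conv}; this handles both exponents at once, but it yields the operator-norm bound $\|p\|_{\fs{L}^1}$ rather than the sharper $\mathcal{I}_1=\|p\chi_{\Omega_+}\|_{\fs{L}^1}$ that the paper's pointwise estimate $\int_0^{\infty}|p(y+s)|\,dy\le\mathcal{I}_1$ delivers --- and it is the condition $\mathcal{I}_1<1$, not $\|p\|_{\fs{L}^1}<1$, that is invoked downstream in Theorem~\ref{thm:OP-inv-K}(a), so the sharper constant is worth retaining. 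For (b) you argue in the Fourier domain and invoke Paley--Wiener, whereas the paper stays on the complex-analytic side (analyticity of $p(z)$, Bernstein's inequality, and Boas's growth theorem to verify exponential type $\sigma$ directly); your route makes membership in $\fs{B}^{\nu}_{\sigma}$ transparent once the interchange of integrals is justified, but note that for $\nu=2$ the function $g$ need not be integrable, so the Fubini step requires a density argument (e.g.\ approximate $g$ by $g\chi_{[0,R]}$ and pass to the limit using $\rho\in\fs{L}^2(-\sigma,\sigma)$ and $G\in\fs{L}^2$), which you flag but should carry out explicitly. Part (c) is the paper's argument verbatim: multiplication by $\tilde{\rho}$ is bounded on $\fs{L}^2$ with norm $\|\rho\|_{\fs{L}^{\infty}}$, the Riesz projection is an orthogonal projection hence contractive, and Plancherel transfers the bound back to $\OP{P}$.
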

\begin{proof}
In order to prove~\ref{lemma:bound-a}, consider $g\in\fs{L}^1(\Omega_+)$, then
\begin{equation*}
\begin{split}
\|\OP{P}[g]\|_{\fs{L}^{1}(\Omega_{+})}
&\leq\int_{0}^{\infty}
\left|\int^{\infty}_{0}p(y+s)g(s)ds\right|dy,\\
&\leq\int_{0}^{\infty}\left(\int^{\infty}_{0}|p(y+s)|dy\right)|g(s)|ds,\\
&\leq\mathcal{I}_{1}\int_{0}^{\infty}|g(s)|ds
=\mathcal{I}_{1}\|g\|_{\fs{L}^{1}(\Omega_{+})}.
\end{split}
\end{equation*}
Next, let $g\in\fs{L}^2(\Omega_{+})$, then
\begin{equation*}
\begin{split}
&\|\OP{P}[g]\|^2_{\fs{L}^{2}(\Omega_{+})}
\leq\int_{0}^{\infty}
\left|\int^{\infty}_{0}p(y+s)g(s)ds\right|^2dy,\\
&\leq\int_{0}^{\infty}dy\left[
\int^{\infty}_{0}|p(y+s)|ds\int^{\infty}_{0}|p(y+s)||g(s)|^2ds\right],\\
&\leq\mathcal{I}_{1}\int_{0}^{\infty}
\left(\int^{\infty}_{0}|p(y+s)|dy\right)|g(s)|^2ds
\leq\mathcal{I}^2_{1}\|g\|^2_{\fs{L}^{2}(\Omega_{+})}.
\end{split}
\end{equation*}
Therefore, we have
\begin{equation}
\|\OP{P}[g]\|_{\fs{L}^{\nu}(\Omega_{+})}
\leq\mathcal{I}_{1}\|g\|_{\fs{L}^{\nu}(\Omega_{+})},\quad\nu=1,2.
\end{equation}
This completes the proof of statement~\ref{lemma:bound-a}. Note that it is also 
possible to show that
\begin{equation}\label{eq:L1-estimate}
\|\OP{P}[g]\|_{\fs{L}^{\nu}}
\leq\|p\|_{\fs{L}^{\nu}}\|g\|_{\fs{L}^{\nu}(\Omega_{+})},\quad\nu=1,2.
\end{equation}

To prove~\ref{lemma:bound-b}, let $z=x+iy\in\field{C}$; then, the analyticity 
property of $\OP{P}[g](z)$ follows from the analyticity of $p(z)$ and the 
Bernstein's inequality~\cite[Chap.~3]{N1975} which ensures that its derivative
$p'\in\fs{B}^1_{\sigma}$. The boundedness of
$\OP{P}:\fs{L}^{\nu}(\Omega_+)\rightarrow\fs{L}^{\nu}$ follows
from~\eqref{eq:L1-estimate}. What remains to show is that $\OP{P}[g](z)$ is of 
exponential type-$\sigma$. Observing
\begin{equation}
|\OP{P}[g](z)|\leq Ce^{\sigma|y|}\int^{\infty}_{0}|g(s)|ds,
\end{equation}
the result follows for $\nu=1$. For $\nu=2$, we have
\begin{equation*}
\begin{split}
|\OP{P}[g](z)|^2
&\leq\int_{0}^{\infty}
\left|\int^{\infty}_{0}p(z+s)g(s)ds\right|^2dy,\\
&\leq\int^{\infty}_{0}|p(z+s)|^2ds\int^{\infty}_{0}|g(s)|^2ds.
\end{split}
\end{equation*}
From Boas~\cite[Thm.~6.7.1]{Boas1954} and using the fact that
$\fs{B}^1_{\sigma}\subset\fs{B}^2_{\sigma}$, we have 
\begin{equation*}
\int^{\infty}_{0}|p(z+s)|^2ds\leq 
e^{2\sigma|y|}\int^{\infty}_{0}|p(x+s)|^2ds,
\end{equation*}
which yields the estimate
\begin{equation}
|\OP{P}[g](z)|\leq 
Ce^{\sigma|y|}\|g\|_{\fs{L}^{2}(\Omega_{+})},
\end{equation}
for some $C>0$. Finally, for any $g\in\fs{L}^{1}$ and $h>0$,
\begin{equation*}
\begin{split}
\sup_{t\in\field{R}}
\left(\sum_{n\in\field{Z}}\left|\OP{P}[g]\left(t-nh\right)\right|\right)
&\leq\sup_{t\in\field{R}}
\left(\sum_{n\in\field{Z}}\left|p\left(t-nh\right)\right|\int_0^{\infty}|g(s)|ds\right)\\
&\leq (1+\sigma h)\|p\|_{\fs{L}^1}\|g\|_{\fs{L}^1(\Omega_+)}.
\end{split}
\end{equation*}

To prove~\ref{lemma:bound-c}, first we consider $\OP{H}_{\tilde{\rho}}$. 
It is an elementary exercise to verify that, for any $G\in\fs{H}_-^2$,
\begin{equation}
\|\OP{H}_{\tilde{\rho}}[G]\|_{\fs{L}^2}
\leq\|\tilde{\rho}\|_{\fs{L}^{\infty}}\|G\|_{\fs{L}^2}.
\end{equation}
Turning to $\OP{P}$, if $p\in\fs{B}^2_{\sigma}$, then it is straightforward to
see that $\OP{P}[g]\in\fs{B}^{\infty}_{\sigma}$ for any
$g\in\fs{L}^2$ with support in $\Omega_+$. Using Plancheral's theorem, we have
\begin{equation}
\|\OP{P}[g]\|_{\fs{L}^2}
\leq(2\pi)^{-1/2}\|\tilde{\rho}G\|_{\fs{L}^2}
\leq \|\tilde{\rho}\|_{\fs{L}^{\infty}}\|g\|_{\fs{L}^2(\Omega_+)}.
\end{equation}
\end{proof}
%%%%%%%%%%%%%%%%%%%%%%%%%%%%
In the following, we would like to develop a representation of the
Hankel operators with bandlimited kernels by exploiting the sampling expansion 
of such functions. The translates of the sinc function form an 
orthonormal basis in $\fs{B}^2_{\sigma}$. Let us introduced the normalized 
form of these basis function as
\begin{equation}
\psi_n(t) = \sqrt{\frac{\sigma}{\pi}}\sinc[\sigma(t-t_n)]
=\sqrt{\frac{\sigma}{\pi}}\frac{\sin(\sigma t-n\pi)}{(\sigma t-n\pi)},
\end{equation}
so that the orthonormality condition can be stated as
\begin{equation}
\int_{-\infty}^{\infty}\psi_m(t)\psi_n(t)dt=\delta_{mn}.
\end{equation}
For $p\in\fs{B}^1_{\sigma}$, we can write 
\begin{equation}\label{eq:S-series}
p(t+s)=\sum_{n\in\field{Z}}\sqrt{\frac{\pi}{\sigma}}
p\left(t+\frac{n\pi}{\sigma}\right)\psi_n(s).
\end{equation}
This series converges absolutely and uniformly with respect to $s\in\field{R}$
where we have fixed $t\in\field{R}$. Using this representation
in~\eqref{eq:p-GLM-OP}, we have
\begin{equation}\label{eq:OP-P-series}
\begin{split}
\OP{P}[g](t)
&=\sum_{n\in\field{Z}}\sqrt{\frac{\pi}{\sigma}}
p\left(t+\frac{n\pi}{\sigma}\right)\int^{\infty}_{0}g(s)\psi_n(s)ds\\
&\equiv\sum_{n\in\field{Z}}\sqrt{\frac{\pi}{\sigma}}
p\left(t+\frac{n\pi}{\sigma}\right)\hat{g}_n.
\end{split}
\end{equation}
In view of the expression above, we introduce the Hankel operator 
\begin{equation}\label{eq:psi-OP}
\OP{S}[g](y)=\int^{\infty}_{0}\psi_0(y+s)g(s)ds=\hat{g}(y).
\end{equation}
then $\hat{g}_n=\hat{g}(-n\pi/\sigma)$. Note that $\psi_0\in\fs{B}_{\sigma}^2$
with its Fourier transform $\sqrt{{\pi}/{\sigma}}\,\chi_{[-\sigma,\sigma]}\in\fs{L}^{\infty}$, therefore,
it is a bounded linear operator from $\fs{L}^2(\Omega_+)$ to
$\fs{B}_{\sigma}^2$ (see Lemma~\ref{lemma:bounded-hankel}). Also, we have
\begin{equation}\label{eq:OP-S-norm}
\|\OP{S}\|_{\fs{L}^2}\leq\sqrt{{\pi}/{\sigma}}\|\chi_{[-\sigma,\sigma]}\|_{\fs{L}^{\infty}}
\leq\sqrt{{\pi}/{\sigma}}.
\end{equation}
In the following, we 
assume that $g\in\fs{B}^{2}_{\sigma}$. Let us show that the 
series on the right hand side of~\eqref{eq:OP-P-series} 
converges absolutely and uniformly for $t\in\field{R}$. Observing 
that $|\hat{g}_n|\leq\|g\|_{\fs{L}^{2}}$, we have
\begin{equation}
\begin{split}
\sum_{n\in\field{Z}}\sqrt{\frac{\pi}{\sigma}}
\left|p\left(t+\frac{n\pi}{\sigma}\right)\hat{g}_n\right|
&\leq\|g\|_{\fs{L}^{2}}\sum_{n\in\field{Z}}
\sqrt{\frac{\pi}{\sigma}}\left|p\left(t+\frac{n\pi}{\sigma}\right)\right|\\
&\leq (1+\pi)\|g\|_{\fs{L}^{2}}\|p\|_{\fs{L}^{1}},
\end{split}
\end{equation}
which follows from~\eqref{eq:B-spaces-norm} and the fact that $p\in\fs{B}^1_{\sigma}$. 
Further, the truncated series
\begin{equation}\label{eq:OP-P-finite}
\begin{split}
\OP{P}_N[g](t)
&=\sum_{|n|\leq N}\sqrt{\frac{\pi}{\sigma}}
p\left(t+\frac{n\pi}{\sigma}\right)\hat{g}_n,\quad N\in\field{R}_+,
\end{split}
\end{equation}
converges in $\fs{L}^{\nu}$-norm ($\nu=1,2$). To see this, consider
\begin{equation}\label{eq:OP-p-convg}
\begin{split}
\|\OP{P}_N[g]-\OP{P}_{N-1}[g]\|_{\fs{L}^{\nu}}
\leq\sqrt{{\pi}/{\sigma}}\|p\|_{\fs{L}^{\nu}}(|\hat{g}_{N}|+\hat{g}_{-N}|).
\end{split}
\end{equation}
On account of $\hat{g}\in\fs{B}^2_{\sigma}$, 
\[
\lim_{N\rightarrow\infty}|\hat{g}_{\pm N}|=0.
\]
In particular, we have $|\hat{g}_{-N}|\leq (1/\pi)\|g\|_{\fs{L}^{2}}N^{-1/2}$.
The expression on right hand side in~\eqref{eq:OP-p-convg} goes to $0$ as 
$N\rightarrow\infty$; therefore, $\OP{P}_N[g]$ is a Cauchy sequence in 
$\fs{L}^{\nu}$ and the limit belongs to $\fs{B}^1_{\sigma}$.

The sampling series in~\eqref{eq:S-series} also holds for
$p\in\fs{B}^2_{\sigma}$. In order to prove the absolute and uniform convergence of the
series in~\eqref{eq:OP-P-series} with respect to $t\in\field{R}$, we first prove that
$(\hat{g}_n)_{n\in\field{Z}}$ belongs to $\ell^2$. On account of 
$\hat{g}\in\fs{B}^2_{\sigma}$, we have
\begin{equation*}
\|\hat{\vv{g}}\|_{\ell^2}=\left(\sum_{n\in\field{Z}}|\hat{g}_n|^2\right)^{1/2}
\leq(1+\pi)\|\hat{g}\|_{\fs{L}^2}
\leq(1+\pi)\sqrt{\pi/\sigma}\|g\|_{\fs{L}^2},
\end{equation*}
which follows from~\eqref{eq:B-spaces-norm} and~\eqref{eq:OP-S-norm} so that
\begin{equation*}
\begin{split}
\sqrt{\frac{\pi}{\sigma}}\left(\sum_{n\in\field{Z}}
\left|p\left(t+\frac{n\pi}{\sigma}\right)\hat{g}_n\right|^2\right)^{\frac{1}{2}}
&\leq\sqrt{\frac{\pi}{\sigma}}\left(\sum_{n\in\field{Z}}
\left|p\left(t+\frac{n\pi}{\sigma}\right)\right|^2\right)^{\frac{1}{2}}
\|\hat{\vv{g}}\|_{\ell^{2}}\\
&\leq(1+\pi)^2(\pi/\sigma)\|g\|_{\fs{L}^2}\|p\|_{\fs{L}^2}.
\end{split}
\end{equation*}
The convergence of $\OP{P}_N[g]$ in
${\fs{L}^{2}(\Omega_{+})}$ follows the same line of reasoning as that of the 
previous case. The discussion so far can be summarized in the following
proposition:
\begin{prop}
For $p\in\fs{B}^{2}_{\sigma}$, the partial sums defined in~\eqref{eq:OP-P-finite} 
converge absolutely and uniformly (with respect to $t\in\field{R}$) for every $g\in\fs{B}^{2}_{\sigma}$ as 
$N\rightarrow\infty$. Moreover, the partial sums also converge in the 
$\fs{L}^2$-norm. If $p\in\fs{B}^{1}_{\sigma}$, then the partial sums converge 
in the $\fs{L}^1$-norm. 
\end{prop}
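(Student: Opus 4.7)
The proposition is a bookkeeping statement that packages the three convergence assertions developed in the paragraphs immediately preceding it, so the plan is to unify the estimates already assembled rather than to introduce new machinery. Nothing in the argument relies on anything beyond Proposition~\ref{lemma:bounded-hankel}, the inequality~\eqref{eq:B-spaces-norm}, the Boas decay result from Section~\ref{sec:prelims}, and the telescoping bound~\eqref{eq:OP-p-convg}.

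For the absolute and uniform pointwise convergence in $t\in\field{R}$ under the stronger hypothesis $p\in\fs{B}^2_\sigma$, the first step is to observe that $\hat{g}=\OP{S}[g]\in\fs{B}^2_\sigma$ by Proposition~\ref{lemma:bounded-hankel}\ref{lemma:bound-c}; then applying~\eqref{eq:B-spaces-norm} with $h=\pi/\sigma$ to $\hat{g}$ yields $(\hat{g}_n)_{n\in\field{Z}}\in\ell^2$, while the same inequality applied to $p$ itself gives $(p(t+n\pi/\sigma))_{n\in\field{Z}}\in\ell^2$ uniformly in $t$. A Cauchy--Schwarz step in the summation index $n$ of~\eqref{eq:OP-P-series} then produces a summable majorant independent of $t$, which is exactly what is needed for absolute pointwise and uniform convergence. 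The $\fs{B}^1_\sigma$ alternative is handled by the same template but with the crude bound $|\hat{g}_n|\leq\|g\|_{\fs{L}^2}$ and an $\ell^1$ estimate on the translates of $p$ coming again from~\eqref{eq:B-spaces-norm}.

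For the norm convergence I plan to show the partial sums are Cauchy in $\fs{L}^\nu$. Directly from~\eqref{eq:OP-P-finite} the telescoping difference $\OP{P}_N[g]-\OP{P}_{N-1}[g]$ is a sum of just two translated copies of $p$, which together with translation invariance of the $\fs{L}^\nu$-norm gives
\begin{equation*}
\|\OP{P}_N[g]-\OP{P}_{N-1}[g]\|_{\fs{L}^\nu}\leq\sqrt{\pi/\sigma}\,\|p\|_{\fs{L}^\nu}\bigl(|\hat{g}_N|+|\hat{g}_{-N}|\bigr).
\end{equation*}
Since $\hat{g}\in\fs{B}^2_\sigma$ vanishes at infinity by the Boas estimate recalled in Section~\ref{sec:prelims}, the right-hand side tends to zero, so $\{\OP{P}_N[g]\}$ is Cauchy; this yields $\fs{L}^2$-convergence when $p\in\fs{B}^2_\sigma$ and $\fs{L}^1$-convergence when $p\in\fs{B}^1_\sigma$. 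The only real subtlety — and thus the only place that warrants any care — is identifying the $\fs{L}^\nu$-limit with the pointwise limit so that both halves of the proposition refer to a single object $\OP{P}[g]$; I would address this by extracting an a.e.-convergent subsequence from the Cauchy sequence and comparing it with the uniform pointwise convergence already obtained.
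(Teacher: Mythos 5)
Your proposal reproduces the paper's own development essentially verbatim: the absolute and uniform convergence is obtained exactly as in the text (the crude bound $|\hat g_n|\le\|g\|_{\fs{L}^2}$ paired with the $\ell^1$ bound from~\eqref{eq:B-spaces-norm} on the translates of $p$ when $p\in\fs{B}^1_\sigma$, and the $\ell^2$--$\ell^2$ Cauchy--Schwarz pairing when $p\in\fs{B}^2_\sigma$), and the norm convergence is argued through the same telescoping estimate~\eqref{eq:OP-p-convg}. Your closing remark about identifying the $\fs{L}^\nu$-limit with the pointwise limit via an a.e.-convergent subsequence is a small but genuine addition that the paper omits.

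One caveat, which you inherit from the paper rather than introduce: the inference ``$\|\OP{P}_N[g]-\OP{P}_{N-1}[g]\|_{\fs{L}^\nu}\to 0$, therefore $\{\OP{P}_N[g]\}$ is Cauchy'' is not valid as stated --- consecutive differences of the partial sums of the harmonic series also vanish. What is actually needed is a bound on $\|\OP{P}_N[g]-\OP{P}_M[g]\|_{\fs{L}^\nu}$ by the tail $\sum_{M<|n|\le N}|\hat g_n|$, and the decay available here is only $|\hat g_n|=\bigO{|n|^{-1/2}}$ (or $\bigO{|n|^{-1}}$ under the stronger hypotheses of Prop.~\ref{prop:jagerman-type}), which is not summable. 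For $\nu=2$ the gap can be closed by viewing the tail as the operator~\eqref{eq:OP-P-discrete} acting on the $\ell^2$ sequence $(\hat g_n)_{M<|n|\le N}$ and invoking the $\ell^2\to\fs{L}^2$ bound, though that route requires the extra assumption $\rho\in\fs{L}^{\infty}$; for $\nu=1$ one genuinely needs $(\hat g_n)\in\ell^1$, which calls for decay of $g$ beyond membership in $\fs{B}^2_\sigma$. Since you explicitly present the ``differences tend to zero, hence Cauchy'' step as the crux of the norm-convergence argument, you should either supply the tail estimate or state the additional hypothesis under which it holds; doing so would strengthen, not merely reproduce, the paper's argument.
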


Next, we would like to address the problem of estimating the truncation error
which is given by
\begin{equation}\label{eq:trunc-OP-P}
\begin{split}
T_N(t)
&=\sum_{|n|>N}\sqrt{\frac{\pi}{\sigma}}
p\left(t+\frac{n\pi}{\sigma}\right)\hat{g}_n,\quad N\in\field{R}_+,
\end{split}
\end{equation}
under a stronger decay condition on $p$.
\begin{prop}\label{prop:jagerman-type}
For $p$ satisfying an estimate of the form 
\begin{equation}
|p(z)|\leq\frac{C}{(1+|z|)^{k+1}}e^{\sigma |\Im(z)|},\quad z\in\field{C},
\end{equation}
where $k\geq1$ and $g$ satisfying a similar estimate with the index $k'\geq1$, the truncation
error $T_N(t)$ of the partial sums in~\eqref{eq:OP-P-finite} satisfies the
estimate
\begin{equation}
|T_N(t)|\leq
\frac{2(\sigma/\pi)^{k}\mathcal{E}_k(t)}{(N+1)^{k}\sqrt{1-4^{-k}}}\frac{C'}{\sqrt{N+1}},
\end{equation}
for some constant $C'>0$, where
\begin{equation}
\mathcal{E}_{k}(t)=\sqrt{\frac{\pi}{\sigma}}
\left(\int_{\field{R}}s^{2k}|p(s+t)|^2ds\right)^{\frac{1}{2}}.
\end{equation}
\end{prop}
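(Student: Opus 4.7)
The plan is to start from a weighted Cauchy--Schwarz split of the truncated series~\eqref{eq:trunc-OP-P} and then bound each factor separately: the $p$-factor via the Plancherel--P\'olya inequality~\eqref{eq:B-spaces-norm}, which produces $\mathcal{E}_k(t)$; the $\hat{g}_n$-factor via a dyadic decomposition of the index set $\{|n|>N\}$, which produces both $(N+1)^{-k}$ and the geometric-series constant $(1-4^{-k})^{-1/2}$; and the residual $(N+1)^{-1/2}$ is to come from a tail estimate on $|\hat{g}_n|$ that exploits the decay hypothesis on $g$.

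Setting $s_n=n\pi/\sigma$ and regrouping each summand as $\sqrt{\pi/\sigma}\,[s_n^{k}p(t+s_n)]\cdot[s_n^{-k}\hat{g}_n]$, Cauchy--Schwarz yields
\begin{equation*}
|T_N(t)|^{2}\leq\frac{\pi}{\sigma}\left(\sum_{|n|>N}s_n^{2k}|p(t+s_n)|^{2}\right)\left(\sum_{|n|>N}s_n^{-2k}|\hat{g}_n|^{2}\right).
\end{equation*}
The decay hypothesis on $p$ implies that $s\mapsto s^{k}p(t+s)$ is entire of exponential type $\sigma$ with $\fs{L}^{2}$-restriction to $\field{R}$ (the polynomial weight $s^{k}$ is harmless against the $(1+|z|)^{-k-1}$ decay), hence belongs to $\fs{B}_{\sigma}^{2}$. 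Applying~\eqref{eq:B-spaces-norm} with $h=\pi/\sigma$ to this function and dominating the restricted sum by the full one gives
\begin{equation*}
\sum_{|n|>N}s_n^{2k}|p(t+s_n)|^{2}\leq(1+\pi)^{2}\int_{\field{R}}s^{2k}|p(t+s)|^{2}\,ds=(1+\pi)^{2}(\sigma/\pi)\mathcal{E}_k(t)^{2},
\end{equation*}
which, together with the $\sqrt{\pi/\sigma}$ prefactor, delivers the $(\sigma/\pi)^{k}\mathcal{E}_k(t)$ piece of the target bound up to an overall constant absorbed into $C'$.

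For the second factor I would first establish a pointwise decay estimate of the form $|\hat{g}_n|\leq C''/(1+|n|)$. To do this I would split the integral $\hat{g}(y)=\int_0^{\infty}\psi_0(y+s)g(s)\,ds$ into the regions $\{s\leq|y|/2\}$ and $\{s>|y|/2\}$: on the first the sinc decay of $\psi_0$ combined with the $\fs{L}^{1}$ control of $g$ yields an $O(|y|^{-1})$ bound, while on the second the polynomial decay of $g$ yields an $O(|y|^{-k'})=O(|y|^{-1})$ contribution since $k'\geq 1$. Next I would partition $\{|n|>N\}$ into dyadic annuli $A_m=\{n:2^{m}(N+1)\leq|n|<2^{m+1}(N+1)\}$, bound $s_n^{-2k}\leq(\sigma/\pi)^{2k}(2^{m}(N+1))^{-2k}$ uniformly on $A_m$, aggregate $\sum_{n\in A_m}|\hat{g}_n|^{2}$ using the above tail estimate (which harvests an extra $(2^{m}(N+1))^{-1}$), and sum the geometric series $\sum_{m\geq 0}4^{-km}=(1-4^{-k})^{-1}$. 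Taking square roots and collecting constants produces the factor $(\sigma/\pi)^{k}/[(N+1)^{k}\sqrt{1-4^{-k}}]$ together with the residual $(N+1)^{-1/2}$ and a constant $C'$ depending on $\|g\|$, $k'$ and $\sigma$.

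The main obstacle is the third step: producing a tail estimate on $|\hat{g}_n|$ sharp enough to supply the extra $(N+1)^{-1/2}$ in the final bound. The difficulty is that the sinc kernel $\psi_0$ is only conditionally integrable, so as $y\to-\infty$ the argument $y+s$ sweeps across zero while $s$ varies in $[0,\infty)$ and uniform decay of $|\psi_0(y+s)|$ fails near $s=|y|$. The polynomial decay of $g$ at rate $k'\geq 1$ is precisely what is needed to suppress the contribution from this dangerous region, and balancing it against the contribution away from $s=|y|$ is the key calculation that fixes the constant $C'$ in the stated estimate.
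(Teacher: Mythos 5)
Your proposal is correct, but it distributes the work between the two Cauchy--Schwarz factors differently from the paper. The paper uses the \emph{unweighted} split $|T_N|^2\leq\frac{\pi}{\sigma}\bigl(\sum_{|n|>N}|p(t+s_n)|^2\bigr)\bigl(\sum_{|n|>N}|\hat g_n|^2\bigr)$, extracts the entire factor $2(\sigma/\pi)^k(N+1)^{-k}(1-4^{-k})^{-1/2}$ from the $p$-tail by citing Jagerman's truncation estimate (using $t^kp\in\fs{B}^2_{\sigma}$), and gets $|\hat g_n|=\bigO{n^{-1}}$ from the sinc reproducing identity $\int_0^{\infty}\psi_n g = g(n\pi/\sigma)-\int_{-\infty}^{0}\psi_n g$, so that only the residual $(N+1)^{-1/2}$ comes from the $\hat g$-tail. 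You instead move the weight $s_n^{\pm k}$ into the Cauchy--Schwarz pairing, bound the weighted $p$-sum by the \emph{full} Plancherel--P\'olya sum (sacrificing all $N$-decay on that side), and recover the whole $(N+1)^{-k-1/2}$ from the weighted $\hat g$-tail; your dyadic summation is essentially an inline re-proof of Jagerman's estimate, and your direct split of $\hat g(y)$ at $s=|y|/2$ replaces the reproducing-kernel identity (which requires $g\in\fs{B}^2_{\sigma}$) by a pointwise kernel estimate. Both routes yield the stated bound, with the caveat that your argument reproduces the explicit prefactor $2(\sigma/\pi)^k/\sqrt{1-4^{-k}}$ only up to constants absorbed into the unspecified $C'$ --- which the statement permits. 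One small imprecision: on the region $s>|y|/2$ you cannot pair the sup of $|g|$ with $\|\psi_0\|_{\fs{L}^1}$ since $\psi_0\notin\fs{L}^1$ (as you note); the clean fix is Cauchy--Schwarz against the $\fs{L}^2$-tail of $g$, which gives $\bigO{|y|^{-k'-1/2}}$ rather than your stated $\bigO{|y|^{-k'}}$, and in either case suffices for the needed $\bigO{|y|^{-1}}$.
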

\begin{proof}
Using Cauchy-Schwartz
inequality in~\eqref{eq:trunc-OP-P}, we have
\begin{equation}\label{eq:TN}
\begin{split}
|T_N(t)|^2
\leq{\frac{\pi}{\sigma}}
\left(\sum_{|n|>N}\left|p\left(t+\frac{n\pi}{\sigma}\right)\right|^2\right)
\left(\sum_{|n|>N}|\hat{g}_n|^2\right).
\end{split}
\end{equation}
Observing that $t^{k}p(t)\in\fs{B}^2_{\sigma}$ and following 
Jagerman~\cite{J1966}, we can obtain the estimate
\begin{equation}
|T_N(t)|\leq
\frac{2(\sigma/\pi)^{k}\mathcal{E}_k(t)}{(N+1)^{k}\sqrt{1-4^{-k}}}
\left(\sum_{|n|>N}|\hat{g}_n|^2\right)^{1/2},
\end{equation}
for any $g\in\fs{B}_{\sigma}^2$. Noting that $g\in\fs{B}_{\sigma}^1$ and
\begin{equation*}
\begin{split}
\hat{g}_N
&=\int_0^{\infty}\psi_N(s)g(s)ds=g(N\pi/\sigma)-\int^0_{-\infty}\psi_N(s)g(s)ds,\\
|\hat{g}_N|&\leq|g(N\pi/\sigma)|
+\sqrt{\sigma/\pi^3}N^{-1}\int^0_{-\infty}|g(s)|ds=\bigO{N^{-1}},
\end{split}
\end{equation*}
therefore, $|\hat{g}_{\pm N}|=\bigO{N^{-1}}$. Now,
\[
\sum_{|n|>N}\frac{1}{N^2}\leq2\int_{N+1}^{\infty}\frac{ds}{s^2}=\frac{2}{N+1},
\]
so that
\[
\left(\sum_{|n|>N}|\hat{g}_n|^2\right)^{1/2}\leq\frac{C'}{\sqrt{N+1}}.
\]
for some constants $C'>0$. Plugging-in this estimate in~\eqref{eq:TN}, the result follows.
% \begin{equation*}
% |T_N(t)|\leq
% \frac{2(\sigma/\pi)^{k}\mathcal{E}_k(t)}{(N+1)^{k}\sqrt{1-4^{-k}}}\frac{C'}{\sqrt{N+1}},
% \end{equation*}
% for some constant $C'>0$.
\end{proof}
It is interesting to note that the truncation error does not improve by
strengthening the regularity condition of $g$ because $k'$ does not feature in the
estimate.

The representation in~\eqref{eq:OP-P-series} can also be used to define a linear
operator on $\ell^2$
\begin{equation}\label{eq:OP-P-discrete}
\begin{split}
\OP{P}[\vs{\alpha}](t)&=\sum_{n\in\field{Z}}
\sqrt{\frac{\pi}{\sigma}}p\left(t+\frac{n\pi}{\sigma}\right)\alpha_n,\quad\vs{\alpha}\in\ell^2.
\end{split}
\end{equation}
The infinite series converges absolutely and uniformly for $t\in\field{R}$ if
$p\in\fs{B}^2_{\sigma}$. Let us show that it defines a bounded linear operator
from $\ell^2\rightarrow\fs{B}_{\sigma}^2$ if $\rho\in\fs{L}^{\infty}$. Rewriting 
the infinite series in the Fourier domain, we have 
\begin{equation*}
\begin{split}
&\sum_{n\in\field{Z}}
\sqrt{\frac{\pi}{\sigma}}p\left(t+\frac{n\pi}{\sigma}\right)\alpha_n\\
&\qquad=\frac{1}{2\pi}\sqrt{\frac{\pi}{\sigma}}\int_{-\sigma}^{\sigma}
\left(\sum_{n\in\field{Z}}\alpha_ne^{i(n\pi/\sigma)\xi}\right)
\rho(\xi)e^{i\xi t}d\xi.
\end{split}
\end{equation*}
Using Plancheral's theorem and the fact that $\supp\rho\subset[-\sigma,\sigma]$, we have
\begin{equation*}
\begin{split}
&\left\|\sum_{n\in\field{Z}}
\sqrt{\frac{\pi}{\sigma}}p\left(t+\frac{n\pi}{\sigma}\right)\alpha_n\right\|_{\fs{L}^2}\\
&\qquad=\frac{1}{\sqrt{2\sigma}}\left\|
\left(\sum_{n\in\field{Z}}\alpha_ne^{i(n\pi/\sigma)\xi}\right)
\rho(\xi)\right\|_{\fs{L}^2(-\sigma,\sigma)}\\
&\qquad\leq\sqrt{{\pi}/{\sigma}}\|\rho\|_{\fs{L}^{\infty}(-\sigma,\sigma)}\|\vs{\alpha}\|_{\ell^2}.
\end{split}
\end{equation*}
These observations are summarized in the following proposition.
\begin{prop}
The operator $\OP{P}$ defined by~\eqref{eq:OP-P-discrete} with 
$p\in\fs{B}^{2}_{\sigma}$ defines a bounded linear operator from
$\ell^2\rightarrow\fs{B}_{\sigma}^2$ if
$\rho\in\fs{L}^{\infty}$.
\end{prop}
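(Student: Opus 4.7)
The plan is to verify the estimate essentially sketched in the paragraph immediately preceding the proposition, filling in the two nontrivial points: (i) the interchange of the infinite sum and the integral in the Fourier representation of $p$, and (ii) the identification of the resulting object as an element of $\fs{B}^2_{\sigma}$. Both hinge on the hypothesis $\rho\in\fs{L}^{\infty}$ together with $\supp\rho\subset[-\sigma,\sigma]$ and the assumption $\vs{\alpha}\in\ell^2$.

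First I would fix $t\in\field{R}$ and write $p(t+n\pi/\sigma)=(2\pi)^{-1}\int_{-\sigma}^{\sigma}\rho(\xi)e^{i\xi(t+n\pi/\sigma)}d\xi$. The preceding proposition already establishes absolute and uniform convergence of the series defining $\OP{P}[\vs{\alpha}](t)$ when $\vs{\alpha}$ is replaced by the sampling coefficients $\hat{g}_n$ of some $g\in\fs{B}^2_{\sigma}$; essentially the same Cauchy-Schwarz argument, combined with $(1+\sigma h)$-type control from~\eqref{eq:B-spaces-norm} applied to $p\in\fs{B}^2_{\sigma}$, gives absolute convergence of $\sum_n\sqrt{\pi/\sigma}\,p(t+n\pi/\sigma)\alpha_n$ for any $\vs{\alpha}\in\ell^2$ (uniformly in $t$ on compacta). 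This is enough dominated-convergence data to swap sum and integral, producing the claimed identity
\begin{equation*}
\OP{P}[\vs{\alpha}](t)=\frac{1}{2\pi}\sqrt{\frac{\pi}{\sigma}}\int_{-\sigma}^{\sigma}F_{\vs{\alpha}}(\xi)\rho(\xi)e^{i\xi t}d\xi,\quad F_{\vs{\alpha}}(\xi)=\sum_{n\in\field{Z}}\alpha_n e^{i(n\pi/\sigma)\xi}.
\end{equation*}

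Next, the functions $\{(2\sigma)^{-1/2}e^{i(n\pi/\sigma)\xi}\}_{n\in\field{Z}}$ form an orthonormal basis of $\fs{L}^2(-\sigma,\sigma)$, so $F_{\vs{\alpha}}\in\fs{L}^2(-\sigma,\sigma)$ with $\|F_{\vs{\alpha}}\|_{\fs{L}^2(-\sigma,\sigma)}=\sqrt{2\sigma}\,\|\vs{\alpha}\|_{\ell^2}$. Multiplication by $\rho$ keeps the product in $\fs{L}^2(-\sigma,\sigma)$ because $\rho\in\fs{L}^{\infty}$, with
\begin{equation*}
\|F_{\vs{\alpha}}\rho\|_{\fs{L}^2(-\sigma,\sigma)}\leq\|\rho\|_{\fs{L}^{\infty}}\|F_{\vs{\alpha}}\|_{\fs{L}^2(-\sigma,\sigma)}=\sqrt{2\sigma}\,\|\rho\|_{\fs{L}^{\infty}}\|\vs{\alpha}\|_{\ell^2}.
\end{equation*}
Plancherel's theorem then yields the desired $\fs{L}^2$-bound on $\OP{P}[\vs{\alpha}]$, which combined with the explicit constants gives exactly the inequality displayed in the preceding paragraph. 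For the range statement, observe that $F_{\vs{\alpha}}\rho$ is an $\fs{L}^2$-function compactly supported in $[-\sigma,\sigma]$, so its inverse Fourier transform is, by the Paley--Wiener characterization, an entire function of exponential type at most $\sigma$ whose restriction to $\field{R}$ lies in $\fs{L}^2$, i.e.\ an element of $\fs{B}^2_{\sigma}$.

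The argument is essentially routine; the only point that requires any care is the interchange of summation and integration, which I would handle by first truncating the series to $|n|\leq N$ (for which the interchange is trivial), checking that the truncated sums $\OP{P}_N[\vs{\alpha}]$ converge to $\OP{P}[\vs{\alpha}]$ in $\fs{L}^2$ by the Plancherel bound applied to the tail $\sum_{|n|>N}\alpha_n e^{i(n\pi/\sigma)\xi}$ (whose $\fs{L}^2(-\sigma,\sigma)$-norm tends to zero as $N\to\infty$ by Parseval), and then invoking pointwise absolute convergence to identify the limit with the original series. The $\ell^{\infty}$ bound $\|\rho\|_{\fs{L}^{\infty}}$ is the ingredient that makes multiplication by $\rho$ map $\fs{L}^2(-\sigma,\sigma)$ into itself, and this is where the hypothesis $\rho\in\fs{L}^{\infty}$ is used in an essential way.
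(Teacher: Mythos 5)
Your argument is correct and is essentially the paper's own proof (the paragraph immediately preceding the proposition): pass to the Fourier domain, use that $\{(2\sigma)^{-1/2}e^{i(n\pi/\sigma)\xi}\}$ is an orthonormal basis of $\fs{L}^2(-\sigma,\sigma)$ so that $\|F_{\vs{\alpha}}\|_{\fs{L}^2(-\sigma,\sigma)}=\sqrt{2\sigma}\,\|\vs{\alpha}\|_{\ell^2}$, bound the multiplication by $\rho$ via $\|\rho\|_{\fs{L}^{\infty}}$, and apply Plancherel; your added justification of the sum--integral interchange and the Paley--Wiener identification of the range as $\fs{B}^2_{\sigma}$ are refinements the paper leaves implicit. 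The only quibble is that your (correct) bookkeeping yields the bound $\|\rho\|_{\fs{L}^{\infty}}\|\vs{\alpha}\|_{\ell^2}$ rather than "exactly the inequality displayed" with the factor $\sqrt{\pi/\sigma}$, but this discrepancy in the constant is immaterial to boundedness.
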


For the solution of the inverse scattering problem, the spectral properties of
the Hankel operators are relevant. Let us state the following result which
appears in somewhat general form in~\cite[Thm.~8.10]{P2003}.
\begin{theorem}
The Hankel operator $\OP{P}$ defined by~\eqref{eq:p-GLM-OP} with 
$p\in\fs{B}^{2}_{\sigma}$ is compact on $\fs{L}^2(\Omega_+)$, 
if there exists a function $\varphi\in\fs{C}(\field{R})$ with support in
$[-\sigma,\sigma]$ such that $p(t)$ agrees with $\fourier[\varphi](t)$ on
$\Omega_+$.
\end{theorem}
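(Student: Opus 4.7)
The plan is to realize $\OP{P}$ as the operator-norm limit of a sequence of Hilbert--Schmidt (hence compact) operators, and then invoke the fact that the compact operators form a norm-closed subspace of the bounded operators on $\fs{L}^{2}(\Omega_{+})$. A preliminary observation is that the hypothesis actually pins down $p$ globally: since $\varphi\in\fs{C}(\field{R})$ is compactly supported it lies in $\fs{L}^{1}\cap\fs{L}^{2}$, so $\fourier[\varphi]\in\fs{B}^{2}_{\sigma}$ by Paley--Wiener, and two functions in $\fs{B}^{2}_{\sigma}$ that agree on $\Omega_{+}$ must coincide on all of $\field{R}$ by analytic continuation. In particular $\rho=\varphi$ is continuous and vanishes at $\pm\sigma$.

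I would then construct approximants $\varphi_{n}\in\fs{C}^{\infty}_{c}(-\sigma,\sigma)$ satisfying $\|\varphi_{n}-\varphi\|_{\fs{L}^{\infty}}\to 0$. Because $\varphi(\pm\sigma)=0$ and $\varphi$ is uniformly continuous on $[-\sigma,\sigma]$, one can first truncate $\varphi$ by a smooth cutoff supported in $(-\sigma+\eta,\sigma-\eta)$ (producing an $\fs{L}^{\infty}$-error that is $o(1)$ as $\eta\to 0$) and then mollify by a bump of radius less than $\eta$, keeping the approximants supported strictly inside $(-\sigma,\sigma)$.

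With $p_{n}:=\fourier[\varphi_{n}]$ substituted for $p$ in~\eqref{eq:p-GLM-OP}, the resulting Hankel operator $\OP{P}_{n}$ has kernel $p_{n}(t+s)$ on $\Omega_{+}\times\Omega_{+}$. A change of variables $u=t+s$, $v=t$ reduces its Hilbert--Schmidt norm to
\begin{equation*}
\|\OP{P}_{n}\|_{\mathrm{HS}}^{2}=\int_{0}^{\infty}u\,|p_{n}(u)|^{2}\,du,
\end{equation*}
which is finite because $\varphi_{n}\in\fs{C}^{\infty}_{c}$ makes $p_{n}$ rapidly decreasing. Each $\OP{P}_{n}$ is therefore Hilbert--Schmidt and compact. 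The difference $p-p_{n}=\fourier[\varphi-\varphi_{n}]$ has continuous bounded symbol $\varphi-\varphi_{n}$, so Proposition~\ref{lemma:bounded-hankel}\ref{lemma:bound-c} applied to $p-p_{n}$ (possibly with a slightly enlarged Bernstein type, which is harmless since the bound depends only on the $\fs{L}^{\infty}$ norm of the symbol) yields
\begin{equation*}
\|\OP{P}-\OP{P}_{n}\|_{\fs{L}^{2}(\Omega_{+})\to\fs{L}^{2}}\leq\|\varphi-\varphi_{n}\|_{\fs{L}^{\infty}}\longrightarrow 0,
\end{equation*}
and compactness transfers to the norm limit $\OP{P}$.

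The step I expect to require the most care is the support-preserving smooth approximation: one must exploit the endpoint vanishing of $\varphi$ to squeeze the mollified approximants inside $[-\sigma,\sigma]$ while retaining uniform convergence. The Hilbert--Schmidt estimate and the norm-closure argument are standard once the approximation is in place and the operator bound from Proposition~\ref{lemma:bounded-hankel} is invoked.
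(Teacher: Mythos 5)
Your proof is correct. Note, however, that the paper does not actually prove this theorem: it is stated as a quotation of \cite[Thm.~8.10]{P2003}, i.e.\ the integral-operator form of Hartman's theorem, so there is no in-paper argument to compare against. What you have written is essentially the standard proof of that cited result, specialized to the bandlimited setting: uniform approximation of the continuous, compactly supported symbol by smooth compactly supported ones, the observation that the resulting Hankel kernels are Hilbert--Schmidt (your change of variables giving $\int_0^\infty u\,|p_n(u)|^2\,du$ is right), and a Nehari-type bound $\|\OP{P}-\OP{P}_n\|\leq\|\varphi-\varphi_n\|_{\fs{L}^\infty}$, for which the paper's own Prop.~\ref{lemma:bounded-hankel}\ref{lemma:bound-c} suffices because that estimate depends only on the sup norm of the symbol, not on the Bernstein type. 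Two minor remarks. First, the effort spent squeezing the mollified approximants inside $(-\sigma,\sigma)$ is unnecessary: neither the Hilbert--Schmidt property nor the sup-norm operator bound cares about the size of the support, so mollifying the raw $\varphi$ (support in a slightly larger interval) already works, as you yourself concede when you allow an enlarged type. Second, the identification $\rho=\varphi$ by analytic continuation silently absorbs a reflection and a factor of $2\pi$ coming from the paper's convention $p=\fourier^{-1}[\rho]$ versus the hypothesis $p=\fourier[\varphi]$ on $\Omega_+$; this only rescales the constant in the operator bound and is irrelevant to the limiting argument (and, strictly speaking, the continuation step is not needed at all, since the Hankel operator only sees $p$ on $\Omega_+$, where the two kernels agree by hypothesis). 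The added value of your argument over the paper's bare citation is that it is self-contained and makes explicit that the compactness claim is exactly Hartman's theorem in disguise.
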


\begin{corr}
The Hankel operator $\OP{P}$ defined by~\eqref{eq:p-GLM-OP} 
is compact on $\fs{L}^2(\Omega_+)$ if $p\in\fs{B}^{1}_{\sigma}$ or,
alternatively, if $p(z)$ is an entire function such that
\begin{equation*}
|p(z)|\leq \frac{C}{(1+|z|)^{k+1}}e^{\sigma |\Im(z)|},\quad z\in\field{C},
\end{equation*}
holds for some $k>0$.
\end{corr}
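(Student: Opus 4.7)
The plan is to reduce both hypotheses to the theorem immediately preceding by exhibiting, in each case, a continuous function $\varphi$ supported in $[-\sigma,\sigma]$ whose Fourier transform agrees with $p$ on $\Omega_+$. Comparing the two conventions in the paper---$\rho=\fourier[p]$ on the one hand, and $p(\tau)=(2\pi)^{-1}\int_{-\sigma}^{\sigma}\rho(\xi)e^{i\xi\tau}d\xi$ on the other---one sees that the natural candidate is $\varphi(\xi)=\rho(-\xi)/(2\pi)$, which by direct computation satisfies $\fourier[\varphi](t)=p(t)$ for every $t\in\field{R}$. Thus the only thing to establish in either case is that $\rho$ is continuous on $\field{R}$ with support in $[-\sigma,\sigma]$.

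For the first hypothesis $p\in\fs{B}^1_{\sigma}$, I would invoke the Paley-Wiener theorem: since $p$ is entire of exponential type $\sigma$ with $p|_{\field{R}}\in\fs{L}^1$, its Fourier transform $\rho$ is supported in $[-\sigma,\sigma]$. The same $\fs{L}^1$ hypothesis together with the Riemann-Lebesgue lemma makes $\rho$ continuous and bounded on $\field{R}$; continuity then forces $\rho(\pm\sigma)=0$, so $\rho$ extends by zero to a continuous function on all of $\field{R}$ compatible with the support condition. The preceding theorem then applies and delivers compactness of $\OP{P}$ on $\fs{L}^2(\Omega_+)$.

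The second hypothesis I would simply reduce to the first. The estimate $|p(z)|\leq C(1+|z|)^{-(k+1)}e^{\sigma|\Im z|}$ shows through its exponential factor that $p$ is entire of exponential type at most $\sigma$, while its restriction to $\field{R}$ gives $|p(x)|\leq C(1+|x|)^{-(k+1)}$, which is integrable because $k+1>1$. Hence $p\in\fs{B}^1_{\sigma}$, and the previous argument applies. The closest thing to an obstacle is confirming the sharpness of the threshold $k>0$: this hypothesis is used exactly once, to guarantee $\fs{L}^1$ decay on the real line, and no further regularity or decay is required. I therefore expect the proof to be essentially a bookkeeping exercise around the Fourier conventions, with no substantial difficulty.
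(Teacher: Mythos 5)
Your proposal is correct and follows essentially the same route as the paper: the first case is handled by noting that $p\in\fs{B}^1_{\sigma}$ forces $\rho$ to be continuous (Riemann--Lebesgue) with support in $[-\sigma,\sigma]$ (Paley--Wiener), so the preceding compactness theorem applies, and the second case reduces to the first because the pointwise estimate with $k>0$ gives integrability on $\field{R}$ together with exponential type $\sigma$. Your explicit bookkeeping of the Fourier conventions via $\varphi(\xi)=\rho(-\xi)/(2\pi)$ is a detail the paper leaves implicit, but it changes nothing of substance.
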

\begin{proof}
In first cases, $p\in\fs{B}^{1}_{\sigma}$ ensures that
$\rho\in\fs{C}(\field{R})$. In the second case, the estimate simply ensures that
$p\in\fs{B}^{1}_{\sigma}$.
\end{proof}
%%%%%%%%%%%%%%%%%%%%%%%%%%%%%%%%%%%%%%%%%%%%%%%%%%%%%%%%%%%%%%%%%%%%%%%%%%%%%%%%%%%%%%
%%%%%%%%%%%%%%%%%%%%%%%%%%%%%%%%%%%%%%%%%%%%%%%%%%%%%%%%%%%%%%%%%%%%%%%%%%%%%%%%%%%%%%
%%%%%%%%%%%%%%%%%%%%%%%%%%%%%%%%%%%%%%%%%%%%%%%%%%%%%%%%%%%%%%%%%%%%%%%%%%%%%%%%%%%%%%
%%%%%%%%%%%%%%%%%%%%%%%%%%%%%%%%%%%%%%%%%%%%%%%%%%%%%%%%%%%%%%%%%%%%%%%%%%%%%%%%%%%%%%
\subsection{Sampling Approach to Inverse Scattering}\label{sec:sampling-approach}
The Hermitian conjugate of $\OP{P}$, denoted by $\OP{P}^{\dagger}$, with respect
to the inner product in $\fs{L}^2(\Omega_{+})$ works out to be
\begin{equation}
\OP{P}^{\dagger}[g](y)=\int_{0}^{\infty}p^*(y+s)g(s)ds.
\end{equation}
Define $\OP{K}=\OP{P}^{\dagger}\circ\OP{P}$, so that
\begin{equation}\label{eq:OP-K-GLM}
\begin{split}
\OP{K}[g](y)
&=\int_{0}^{\infty}ds\int_{0}^{\infty}dx\,p^*(y+s)p(s+x)g(x)\\
&=\int_{0}^{\infty}\mathcal{K}(y,x)g(x)dx,
\end{split}
\end{equation}
where the kernel function $\mathcal{K}(y,x;t)$ is given by
\begin{equation}
\mathcal{K}(y,x;\tau)
=\int_{0}^{\infty}ds\,p^*(y+s)p(s+x).
\end{equation}
The properties of the operator $\OP{K}$ can be deduced easily from that of
$\OP{P}$. If $p\in\fs{B}_{\sigma}^{1}$, the operator 
$\OP{K}$ defines a bounded linear operator on 
$\fs{L}^{\nu}(\Omega_{+}), (\nu=1,2)$ with the estimate
\begin{equation}
\|\OP{K}[g]\|_{\fs{L}^{\nu}(\Omega_{+})}
\leq\mathcal{I}^2_{1}\|g\|_{\fs{L}^{\nu}(\Omega_{+})}.
\end{equation}
Furthermore, it is a compact, self-adjoint and positive operator with respect to 
$\fs{L}^2(\Omega_{+})$. 

If $p\in\fs{B}_{\sigma}^{2}$ with $\rho\in\fs{L}^{\infty}$, $\OP{K}$ defines a
bounded, self-adjoint and positive linear operator on $\fs{L}^2(\Omega_{+})$. 
\begin{equation}
\|\OP{K}[g]\|_{\fs{L}^{\nu}(\Omega_{+})}
\leq\|\rho\|^2_{\fs{L}^{\infty}}\|g\|_{\fs{L}^{\nu}(\Omega_{+})}.
\end{equation}
The GLM equations in~\eqref{eq;GLM-start} can now be stated as
\begin{equation}\label{eq:fredholm}
{A}_j(y)={\Phi}_j(y)+\kappa\OP{K}[{A}_j](y),\quad j=1,2,
\end{equation}
which is a Fredholm integral equation of the second kind where
\begin{equation}
\left\{\begin{aligned}
&{\Phi}_1(y)=p^*(y),\\
&{\Phi}_2(y)=\kappa\OP{P}^{\dagger}[p](y).
\end{aligned}\right.
\end{equation}

\begin{theorem}\label{thm:OP-inv-K}
Let the operator $\OP{K}$ be defined by~\eqref{eq:OP-K-GLM} and $\OP{I}$ denote
the identity operator.
\begin{enumerate}[label=(\alph*)]
\item Let $p\in\fs{B}_{\sigma}^1$. If $\mathcal{I}_1<1$, then 
$(\OP{I}-\OP{K})^{-1}$ is a bounded linear operator 
on $\fs{L}^{\nu}(\Omega_{+})\,(\nu=1,2)$ with the estimate
\[
\|(\OP{I}-\OP{K})^{-1}\|_{\fs{L}^{\nu}(\Omega_{+})}
\leq\left(1-\mathcal{I}^2_1\right)^{-1}.
\]
\item Let $p\in\fs{B}_{\sigma}^2$ with $\rho\in\fs{L}^{\infty}$, then 
$(\OP{I}+\OP{K})^{-1}$ is a bounded 
linear operator on $\fs{L}^{2}(\Omega_{+})$ with 
\[
\|(\OP{I}+\OP{K})^{-1}\|_{\fs{L}^{2}(\Omega_{+})}\leq1.
\]
\end{enumerate}
\end{theorem}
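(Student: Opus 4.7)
The plan is to treat the two parts separately: part (a) is a Neumann series argument powered by the operator estimate $\|\OP{K}\|_{\fs{L}^{\nu}(\Omega_+)}\leq\mathcal{I}_1^{2}$ stated just before the theorem, while part (b) uses the self-adjointness and positivity of $\OP{K}=\OP{P}^{\dagger}\OP{P}$ on $\fs{L}^2(\Omega_+)$.

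For part (a), I would begin by recording that the hypothesis $p\in\fs{B}^1_{\sigma}$ together with the earlier estimate yields the operator-norm bound $\|\OP{K}\|_{\fs{L}^{\nu}(\Omega_+)}\leq\mathcal{I}_1^{2}$ for $\nu=1,2$. Since $\mathcal{I}_1<1$ implies $\mathcal{I}_1^{2}<1$, the Neumann series
\begin{equation*}
\sum_{n=0}^{\infty}\OP{K}^{n}
\end{equation*}
converges absolutely in the operator norm on $\fs{L}^{\nu}(\Omega_+)$. A routine telescoping argument shows that this series defines a two-sided inverse of $(\OP{I}-\OP{K})$, and summing the geometric majorant $\sum_{n\geq 0}\mathcal{I}_1^{2n}=(1-\mathcal{I}_1^{2})^{-1}$ produces the asserted norm estimate. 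The only point requiring comment is that the same bound works for both $\nu=1$ and $\nu=2$, which is immediate since $\mathcal{I}_1^{2}$ does not depend on $\nu$.

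For part (b), I would exploit that $\OP{K}=\OP{P}^{\dagger}\OP{P}$ is self-adjoint and positive semi-definite on $\fs{L}^{2}(\Omega_+)$. For every $g\in\fs{L}^{2}(\Omega_+)$,
\begin{equation*}
\langle(\OP{I}+\OP{K})g,g\rangle=\|g\|^{2}_{\fs{L}^{2}(\Omega_+)}+\|\OP{P}[g]\|^{2}_{\fs{L}^{2}}\geq\|g\|^{2}_{\fs{L}^{2}(\Omega_+)}.
\end{equation*}
By Cauchy--Schwartz this gives the lower bound $\|(\OP{I}+\OP{K})g\|_{\fs{L}^{2}(\Omega_+)}\geq\|g\|_{\fs{L}^{2}(\Omega_+)}$, so $(\OP{I}+\OP{K})$ is injective with closed range. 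Since $(\OP{I}+\OP{K})$ is self-adjoint, its range is the orthogonal complement of $\Ker(\OP{I}+\OP{K})=\{0\}$, hence dense, and so it is surjective. The same lower bound immediately yields $\|(\OP{I}+\OP{K})^{-1}\|_{\fs{L}^{2}(\Omega_+)}\leq 1$.

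The argument is essentially soft functional analysis, so there is no serious obstacle; the only mild subtlety is to make sure in part (b) that boundedness of $\OP{K}$ (supplied by the hypothesis $\rho\in\fs{L}^{\infty}$) is genuinely used, since without it the expression $\OP{I}+\OP{K}$ would not define a bounded operator and the closed-range/self-adjoint dichotomy could not be applied as stated. An alternative, more conceptual route to (b) would be to invoke the spectral theorem: positivity of $\OP{K}$ forces $\sigma(\OP{K})\subset[0,\|\OP{K}\|]$, hence $\sigma(\OP{I}+\OP{K})\subset[1,1+\|\OP{K}\|]$, from which the inverse exists with norm at most $1$.
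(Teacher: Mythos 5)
Your proposal is correct and follows essentially the same route as the paper: part (a) is the standard Neumann-series/geometric-series bound driven by $\|\OP{K}\|_{\fs{L}^{\nu}(\Omega_+)}\leq\mathcal{I}_1^2<1$, and part (b) is the same positivity argument giving the lower bound $\|(\OP{I}+\OP{K})g\|\geq\|g\|$, followed by closed range plus self-adjointness to obtain surjectivity and the norm bound $1$. The only difference is cosmetic (you additionally sketch a spectral-theorem alternative for (b)), so no further comment is needed.
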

\begin{proof}
To prove (a), we recall from the standard theory of linear operators, that 
if $\|\OP{K}\|_{\fs{L}^1(\Omega_+)}<1$, the operator 
$(\OP{I}-\OP{K})$ is invertible. The estimate for the inverse follows from
the observation that 
$\|\OP{K}\|_{\fs{L}^1(\Omega_+)}\leq\mathcal{I}^2_1$ 
when $p\in\fs{B}_{\sigma}^1$. 

To prove (b), let $p\in\fs{B}^2_{\sigma}$ with $\rho\in\fs{L}^{\infty}$. Under 
this condition, $\OP{K}$ exists as a bounded linear operator on 
$\fs{L}^2(\Omega_{+})$. It is easy to verify that $(\OP{I}+\OP{K})$ is 
positive and, as a result, bounded from below:
\[
\|(\OP{I}+\OP{K})[f]\|_{\fs{L}^2(\Omega_{+})}\geq \|f\|_{\fs{L}^2(\Omega_{+})},
\]
for every $f\in\fs{L}^2(\Omega_{+})$. Consequently,
$\Ker(\OP{I}+\OP{K})=\{0\}$. This establishes that $(\OP{I}+\OP{K})$ has a bounded inverse 
on its range which is closed~\cite[Chap.~1, Thm.~1.2]{K2012}. Now, noting
$(\OP{I}+\OP{K})$ is self-adjoint, we have
\[
{\Ran(\OP{I}+\OP{K})}
=[\Ker(\OP{I}+\OP{K})]^{\perp}=\{0\}^{\perp}=\fs{L}^2(\Omega_+).
\] 
Let $g=(\OP{I}+\OP{K})[f]$, then  
\begin{multline*}
\|(\OP{I}+\OP{K})^{-1}[g]\|_{\fs{L}^{2}(\Omega_{+})}\leq\|f\|_{\fs{L}^2(\Omega_{+})}\\
\leq\|(\OP{I}+\OP{K})[f]\|_{\fs{L}^2(\Omega_{+})}=\|g\|_{\fs{L}^2(\Omega_{+})}
\end{multline*}
yields $\|(\OP{I}+\OP{K})^{-1}\|_{\fs{L}^{2}(\Omega_{+})}\leq1$.
\end{proof}
\begin{rem}
When $\kappa=1$, it is known from the analysis of the Zakharov-Shabat scattering
problem that $|\rho(\xi)|<1$ for $\xi\in\field{R}$. Therefore, it follows that for 
$p\in\fs{B}_{\sigma}^2$, the operator $(\OP{I}-\OP{K})$ is
always invertible in $\fs{L}^{2}(\Omega_{+})\,(\nu=1,2)$ provided the reflection
coefficient is admissible.
\end{rem}
Turning to the discrete representation of GLM equations, let us introduce
\begin{equation}
\alpha^{(j)}_n=\int_{0}^{\infty}A_j(s)\psi_n(s)ds,\quad j=1,2,
\end{equation}
so that the GLM equations can be written as
\begin{equation}\label{eq:GLM-discrete}
\begin{split}
&\kappa A^{*}_2(t) = \sum_{n\in\field{Z}}\sqrt{\frac{\pi}{\sigma}}
p\left(t+\frac{n\pi}{\sigma}\right)\alpha^{(1)}_n,\\
&A_1^{*}(t) = p(t)+\sum_{n\in\field{Z}}\sqrt{\frac{\pi}{\sigma}}
p\left(t+\frac{n\pi}{\sigma}\right)\alpha^{(2)}_n.
\end{split}
\end{equation}
Define
\begin{equation}\label{eq:p-M-mat}
\begin{split}
&\hat{p}_l=\int_{0}^{\infty}p(s)\psi_l(s)ds,\\
&\mathcal{M}_{lm}
=\sqrt{\frac{\pi}{\sigma}}\int_{0}^{\infty}p\left(s+\frac{m\pi}{\sigma}\right)\psi_l(s)dt,
\end{split}
\end{equation}
so that
\begin{equation}\label{eq:GLM-discrete-full}
\left\{\begin{aligned}
&\kappa \alpha^{(2)*}_l  = \sum_{m\in\field{Z}}\mathcal{M}_{lm}\alpha^{(1)}_m,\\
&\alpha^{(1)*}_l = \hat{p}_l+\sum_{m\in\field{Z}}\mathcal{M}_{lm}\alpha^{(2)}_m,
\end{aligned}\right.\quad l\in\field{Z}.
\end{equation}
which can be stated in a compact form by introducing the infinite column vectors
$\vs{\alpha}_j=(\alpha^{(j)}_n)_{n\in\field{Z}}$ and
$\hat{\vv{p}}=(\hat{p}_n)_{n\in\field{Z}}$:
\begin{equation}
\kappa\vs{\alpha}^{*}_2=\mathcal{M}\vs{\alpha}_1,\quad
\vs{\alpha}^{*}_1=\hat{\vv{p}}+\mathcal{M}\vs{\alpha}_2.
\end{equation}
Before attempting to solve the GLM equations, we analyze the properties of the
``mass'' matrix $\mathcal{M}$. Using sampling expansions, we can write
\begin{equation}
\mathcal{M}_{lm}
={\frac{\pi}{\sigma}}\sum_{n\in\field{Z}}\mathcal{Q}_{ln}p\left((m+n)\frac{\pi}{\sigma}\right)
=\sum_{n\in\field{Z}}\mathcal{Q}_{ln}\mathcal{P}_{nm}
\end{equation}
where $\mathcal{P}$ is the Hankel matrix defined by 
\begin{equation}\label{eq:P-mat}
\mathcal{P}_{nm}={\frac{\pi}{\sigma}}p\left((m+n)\frac{\pi}{\sigma}\right),
\end{equation}
and, $\mathcal{Q}$ is a real symmetric matrix defined by
\begin{equation}\label{eq:Q-mat}
\mathcal{Q}_{nl}=\int_{0}^{\infty}\psi_n(s)\psi_l(s)ds,
\end{equation}
which we would refer to as the \emph{quadrature matrix}. An estimate for the 
values of each its entries can be easily obtained using the 
Cauchy-Schwartz inequality:
\begin{equation}
|\mathcal{Q}_{nl}|\leq
\|\psi_n\|_{\fs{L}^2(\Omega_{+})}
\|\psi_l\|_{\fs{L}^2(\Omega_{+})}<1.
\end{equation}
It turns out that the entries of $\mathcal{Q}$ can be
computed exactly in terms of the Sine and Cosine integrals (see
Appendix~\ref{sec:q-mat}):
\begin{equation}
\mathcal{Q}_{ln}=
\begin{cases}
\frac{1}{2}-\frac{1}{\pi}\Si(2|n|\pi), & l=n\\
\frac{(-1)^{l+n}}{2\pi^2(l-n)}\left[\Cin(2l\pi)-\Cin(2n\pi)\right],& l\neq n.
\end{cases}
\end{equation}
Using this quadrature matrix, we can also write
\begin{equation}
\hat{p}_{l}
=\sqrt{\frac{\pi}{\sigma}}\sum_{n\in\field{Z}}\mathcal{Q}_{ln}p\left(\frac{n\pi}{\sigma}\right)
=\sum_{n\in\field{Z}}\mathcal{Q}_{ln}{p}_{n},
\end{equation}
where $p_n=\sqrt{\pi/\sigma}p(n\pi/\sigma)$. 
%%%%%%%%%%%%%%%%%%%%%%%%%%%%%%%%%%%%%%%%%%%%%%%%%%%%%%%%%%%%%%%%%%%%%%%%%%%%%%%%%%%%%%%%%
\begin{prop}
Assume $p\in\fs{B}^2_{\sigma}$ with $\rho\in\fs{L}^{\infty}$ and let the infinite matrices $\mathcal{M}$, 
$\mathcal{P}$ and $\mathcal{Q}$ be defined by
~\eqref{eq:p-M-mat},~\eqref{eq:P-mat} and~\eqref{eq:Q-mat}, respectively.
\begin{enumerate}[label=(\alph*)]
\item The matrix $\mathcal{M}$ defines a bounded linear operator on $\ell^2$.
\item The real symmetric matrix $\mathcal{Q}$ is positive definite and defines a
bounded positive linear operator on $\ell^2$.
\item The complex symmetric matrix $\mathcal{P}$ defines a bounded Hankel matrix on $\ell^2$.
\end{enumerate}
\end{prop}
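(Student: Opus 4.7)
The plan is to establish (b) and (c) separately and then derive (a) from the factorization $\mathcal{M}=\mathcal{Q}\mathcal{P}$ noted just above the statement.

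For (b), I would exploit that $\mathcal{Q}$ is the Gram matrix of the restrictions to $\Omega_+$ of the orthonormal basis $\{\psi_n\}$ of $\fs{B}^2_\sigma$. Introducing the synthesis map $\Psi:\ell^2\to\fs{B}^2_\sigma$, $\vs{\alpha}\mapsto\sum_n\alpha_n\psi_n$ (an isometric isomorphism), and its adjoint $\Psi^*:\fs{L}^2\to\ell^2$, $f\mapsto(\langle f,\psi_n\rangle)_n$ (of norm at most one, since it factors as the orthogonal projection onto $\fs{B}^2_\sigma$ followed by the unitary $\Psi^{-1}$), one verifies directly that $\mathcal{Q}=\Psi^*\circ M_{\chi_{\Omega_+}}\circ\Psi$, where $M_{\chi_{\Omega_+}}$ denotes multiplication by $\chi_{\Omega_+}$. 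This yields $\|\mathcal{Q}\|_{\ell^2\to\ell^2}\leq 1$ and, together with the real-valuedness of the $\psi_n$, symmetry and self-adjointness. Positivity follows from
\begin{equation*}
\langle\mathcal{Q}\vs{\alpha},\vs{\alpha}\rangle_{\ell^2}=\|\chi_{\Omega_+}\Psi\vs{\alpha}\|^2_{\fs{L}^2(\Omega_+)}\geq 0,
\end{equation*}
and positive-definiteness because vanishing of this quantity would force the entire function $\Psi\vs{\alpha}\in\fs{B}^2_\sigma$ to vanish on the half-line $\Omega_+$, hence everywhere, so that $\vs{\alpha}=0$ by the isometry.

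For (c), the Hankel structure of $\mathcal{P}$ together with the Fourier representation of $p$ yields, for $\vs{\beta}\in\ell^2$,
\begin{equation*}
(\mathcal{P}\vs{\beta})_n=\frac{1}{2\sigma}\int_{-\sigma}^{\sigma}\rho(\xi)F_{\vs{\beta}}(\xi)e^{in\pi\xi/\sigma}d\xi,\quad F_{\vs{\beta}}(\xi)=\sum_{m\in\field{Z}}\beta_m e^{im\pi\xi/\sigma}.
\end{equation*}
Since $\{(2\sigma)^{-1/2}e^{in\pi\xi/\sigma}\}_{n\in\field{Z}}$ is an orthonormal basis of $\fs{L}^2(-\sigma,\sigma)$, Parseval's identity gives $\|F_{\vs{\beta}}\|^2_{\fs{L}^2(-\sigma,\sigma)}=2\sigma\|\vs{\beta}\|^2_{\ell^2}$ and
\begin{equation*}
\|\mathcal{P}\vs{\beta}\|^2_{\ell^2}=\frac{1}{2\sigma}\|\rho F_{\vs{\beta}}\|^2_{\fs{L}^2(-\sigma,\sigma)}\leq\|\rho\|^2_{\fs{L}^\infty}\|\vs{\beta}\|^2_{\ell^2},
\end{equation*}
so $\|\mathcal{P}\|_{\ell^2\to\ell^2}\leq\|\rho\|_{\fs{L}^\infty}$. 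Complex symmetry is immediate from the Hankel pattern $\mathcal{P}_{nm}=\mathcal{P}_{mn}$.

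Part (a) then follows from the identification $\mathcal{M}=\mathcal{Q}\mathcal{P}$, giving $\|\mathcal{M}\|_{\ell^2\to\ell^2}\leq\|\mathcal{Q}\|\,\|\mathcal{P}\|\leq\|\rho\|_{\fs{L}^\infty}$. The main technical obstacle is promoting the scalar identity $\mathcal{M}_{lm}=\sum_n\mathcal{Q}_{ln}\mathcal{P}_{nm}$, derived just above the statement, to the operator-level identity $\mathcal{M}\vs{\alpha}=\mathcal{Q}(\mathcal{P}\vs{\alpha})$ on $\ell^2$; this amounts to interchanging two sums in $(\mathcal{M}\vs{\alpha})_l=\sum_m\sum_n\mathcal{Q}_{ln}\mathcal{P}_{nm}\alpha_m$, which is justified by the $\fs{L}^2$-convergence of the sampling series for $p(\cdot+m\pi/\sigma)\in\fs{B}^2_\sigma$ established in the preceding proposition, combined with a Cauchy-Schwartz bound on the doubly indexed sum using the $\ell^2$ sequence $(\mathcal{P}\vs{\alpha})_n$.
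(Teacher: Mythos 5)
Your proof is correct, and it differs from the paper's in two of the three parts. For (a), the paper estimates $\mathcal{M}$ directly: it writes $\mathcal{M}_{lm}$ as the operator $\OP{S}$ applied to $p(\cdot+m\pi/\sigma)$ sampled at $l\pi/\sigma$, forms the function $\mathcal{C}\in\fs{B}^2_{\sigma}$, and invokes the Bernstein-space inequality~\eqref{eq:B-spaces-norm} together with $\|\OP{S}\|_{\fs{L}^2}\leq\sqrt{\pi/\sigma}$, arriving at $\|\mathcal{M}\|_{\ell^2}\leq(1+\pi)(\pi/\sigma)\|\rho\|_{\fs{L}^{\infty}}$; you instead route (a) through the factorization $\mathcal{M}=\mathcal{Q}\mathcal{P}$ and submultiplicativity, which yields the cleaner bound $\|\rho\|_{\fs{L}^{\infty}}$ but obliges you to justify the operator-level identity $\mathcal{M}=\mathcal{Q}\mathcal{P}$ --- your sketch of that justification is essentially right, though the object whose sampling series you need to converge in $\fs{L}^2$ is the superposition $\mathcal{C}=\sqrt{\pi/\sigma}\sum_m\alpha_m p(\cdot+m\pi/\sigma)$ (for which $\mathcal{C}=\Psi(\mathcal{P}\vs{\alpha})$), not the individual translates $p(\cdot+m\pi/\sigma)$. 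For (c), the paper simply asserts that the symbol of the Hankel matrix is $\rho(\sigma\theta/\pi)$ and appeals to the bounded-symbol criterion; your Parseval computation makes this one line fully explicit and elementary, and it is valid for the bi-infinite indexing used here (where the Hankel matrix is a flip composed with a Laurent operator). For (b), your argument via $\mathcal{Q}=\Psi^{*}\circ M_{\chi_{\Omega_+}}\circ\Psi$ is the same quadratic-form computation the paper performs, but you additionally close a small gap the paper leaves open: the displayed identity $\vs{\alpha}^{\dagger}\mathcal{Q}\vs{\alpha}=\int_0^{\infty}|\sum_n\alpha_n\psi_n|^2ds$ only gives positive semi-definiteness, and the strict positivity claimed in the statement needs exactly your observation that an element of $\fs{B}^2_{\sigma}$ vanishing on a half-line vanishes identically.
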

\begin{proof}
Let $p\in\fs{B}^2_{\sigma}$. Then, for $l,m\in\field{Z}$, we have
\begin{equation*}
\begin{split}
\mathcal{M}_{lm}&=\sqrt{\frac{\pi}{\sigma}}\OP{S}\left[p\left(\cdot+\frac{m\pi}{\sigma}\right)\right]
\left(\frac{l\pi}{\sigma}\right)\\
&=\sqrt{\frac{\pi}{\sigma}}\int_{0}^{\infty}
\psi_0\left(s-\frac{l\pi}{\sigma}\right)
p\left(s+\frac{m\pi}{\sigma}\right)ds.
\end{split}
\end{equation*}
For any $\vv{\alpha}\in\ell^2$, let us note that 
\[
\mathcal{C}(s)\equiv\sqrt{\frac{\pi}{\sigma}}
\sum_{m\in\field{Z}}\alpha_mp\left(s+\frac{m\pi}{\sigma}\right)\in\fs{B}^2_{\sigma},
\]
so that
\begin{equation*}
\begin{split}
\left(\sum_{l\in\field{Z}}\left|\sum_{m\in\field{Z}}\mathcal{M}_{lm}\alpha_m\right|^2\right)^{\frac{1}{2}}
&=\left(\sum_{l\in\field{Z}}
\left|\int^{\infty}_{0}\psi_0\left(s-\frac{l\pi}{\sigma}\right)
\mathcal{C}(s)ds\right|^2\right)^{\frac{1}{2}}\\
&\leq(1+\pi)\|\OP{S}\|_{\fs{L}^2}\|\mathcal{C}\|_{\fs{L}^2}\\
&\leq(1+\pi)({\pi}/{\sigma})\|\rho\|_{\fs{L}^{\infty}}\|\vs{\alpha}\|_{\ell^2},
\end{split}
\end{equation*}
which follows from~\eqref{eq:B-spaces-norm}, 
Prop.~\ref{lemma:bounded-hankel} and~\eqref{eq:OP-S-norm}. Using similar arguments 
for $\mathcal{Q}_{lm}$, we have
\begin{equation*}
\begin{split}
\left(\sum_{l\in\field{Z}}\left|\sum_{m\in\field{Z}}\mathcal{Q}_{lm}\alpha_m\right|^2\right)^{\frac{1}{2}}
&\leq(1+\pi)({\pi}/{\sigma})\|\vs{\alpha}\|_{\ell^2}.
\end{split}
\end{equation*}
This shows that the symmetric real matrix $\mathcal{Q}$ is bounded.
It also turns out to be a positive definite matrix:
\begin{equation*}
\vs{\alpha}^{\dagger}\mathcal{Q}\vs{\alpha}
=\int_{0}^{\infty}\left|\sum_{n\in\field{Z}}\alpha_n\psi_n(s)\right|^2ds
\leq\|\vs{\alpha}\|^2_{\ell^2},
\end{equation*}
yielding $\|\mathcal{Q}\|_{\ell^2}\leq1$.

In order to prove the last statement, let us note that the symbol of the 
Hankel matrix $\mathcal{P}$ can be worked out to be
$\rho(\sigma\theta/\pi),\,\theta\in[-\pi,\pi]$; therefore, on account of
$\rho\in\fs{L}^{\infty}$, the matrix $\mathcal{P}$ turns out to be a bounded 
Hankel matrix.
%  therefore,
% \begin{equation}
% \|\mathcal{M}\|_{\ell^2}\leq\|\mathcal{P}\|_{\ell^2}.
% \end{equation}
\end{proof}
The solution of the coupled system in~\eqref{eq:GLM-discrete-full} can be obtained by defining the
$2\times2$ block matrix equation
\begin{equation}
\begin{pmatrix}
I & -\mathcal{M}\\
-\kappa\mathcal{M}^* & I
\end{pmatrix}
\begin{pmatrix}
\vs{\alpha}^*_1\\
\vs{\alpha}_2
\end{pmatrix}
=
\begin{pmatrix}
\hat{\vv{p}}\\
\vv{0}
\end{pmatrix}.
\end{equation}
Using $\mathcal{M}=\mathcal{Q}\mathcal{P}$, we may symmetrize the linear
system by introducing $\vs{\alpha}_j=\mathcal{Q}^{1/2}\vs{\beta}_j,\,j=1,2$ so
that
\begin{equation*}
\begin{pmatrix}
I & -\mathcal{Q}^{1/2}\mathcal{P}\mathcal{Q}^{1/2}\\
-\kappa\mathcal{Q}^{1/2}\mathcal{P}^*\mathcal{Q}^{1/2} & I
\end{pmatrix}
\begin{pmatrix}
\vs{\beta}^*_1\\
\vs{\beta}_2
\end{pmatrix}
=
\begin{pmatrix}
\mathcal{Q}^{1/2}{\vv{p}}\\
\vv{0}
\end{pmatrix},
\end{equation*}
where the column vector
$\vv{p}=\sqrt{\pi/\sigma}(p(n\pi/\sigma))_{n\in\field{Z}}$. Putting
$\mathcal{G}=\mathcal{Q}^{1/2}\mathcal{P}\mathcal{Q}^{1/2}$, we have
\begin{equation}
\begin{pmatrix}
                 I & -\mathcal{G}\\
-\kappa\mathcal{G}^{\dagger} &  I
\end{pmatrix}
\begin{pmatrix}
\vs{\beta}^*_1\\
\vs{\beta}_2
\end{pmatrix}
=
\begin{pmatrix}
\mathcal{Q}^{1/2}{\vv{p}}\\
\vv{0}
\end{pmatrix}.
\end{equation}
The block matrix on the right hand side is Hermitian for $\kappa=+1$ and it can
be reduced to a Hermitian form for the case $\kappa=-1$ by rearranging:
\begin{equation}
\begin{pmatrix}
                 -I & \mathcal{G}\\
\mathcal{G}^{\dagger} &            I
\end{pmatrix}
\begin{pmatrix}
\vs{\beta}^*_1\\
\vs{\beta}_2
\end{pmatrix}
=
\begin{pmatrix}
-\mathcal{Q}^{1/2}{\vv{p}}\\
\vv{0}
\end{pmatrix}.
\end{equation}
On eliminating $\vs{\beta}_2$, we have
\begin{equation}
(I-\kappa\mathcal{G}^{\dagger}\mathcal{G})\vs{\beta}_1
=\mathcal{Q}^{1/2}{\vv{p}}^*.
\end{equation}
Setting $\kappa=+1$, in order for $(I-\mathcal{G}^{\dagger}\mathcal{G})$ to be
invertible, it suffices to have $\|\mathcal{P}\|_{\ell^2}<1$.  
\begin{rem}
The symbol of the Hankel matrix $\mathcal{P}$ can be worked out to be
$\rho(\sigma\theta/\pi),\,\theta\in[-\pi,\pi]$; therefore, the requirement 
$\|\mathcal{P}\|_{\ell^2}<1$ can be fulfilled if $|\rho|<1$ for all 
$\theta\in[-\pi,\pi]$.
\end{rem}
Setting $\kappa=-1$, the infinite matrix $\mathcal{G}^{\dagger}\mathcal{G}$
defines a positive linear operator; therefore, $(I-\mathcal{G}^{\dagger}\mathcal{G})$ 
invertible. Finally, we have the potential given by
\begin{equation}
\begin{split}
q(0) &= -2A_1(0)=-2p^*(0)
-2\sum_{n\in\field{Z}}\sqrt{\frac{\pi}{\sigma}}
p^*\left(\frac{n\pi}{\sigma}\right)\alpha^{(2)*}_n\\
&=-2p^*(0)
-2\vv{p}^{\dagger}\vs{\alpha}^{*}_2\\
&=-2p^*(0)
-2\kappa\vv{p}^{\dagger}\mathcal{Q}^{1/2}
\mathcal{G}(I-\kappa\mathcal{G}^{\dagger}\mathcal{G})^{-1}\mathcal{Q}^{1/2}{\vv{p}}^*,
\end{split}
\end{equation}
together with its $\fs{L}^2$-norm on $\Omega_{+}$ as
\begin{equation}
\begin{split}
\|q\|^2_{\fs{L}^2(\Omega_{+})}
&=2\kappa A_2(0)=2\sum_{n\in\field{Z}}\sqrt{\frac{\pi}{\sigma}}
p^*\left(\frac{n\pi}{\sigma}\right)\alpha^{(1)*}_n\\
&=2\vv{p}^{\dagger}\vs{\alpha}^{*}_1
=2\vv{p}^{\dagger}\mathcal{Q}^{1/2}
(I-\kappa\mathcal{G}\mathcal{G}^{\dagger})^{-1}\mathcal{Q}^{1/2}{\vv{p}}.
\end{split}
\end{equation}

%%%%%%%%%%%%%%%%%%%%%%%%%%%%%%%%%%%%%%%%%%%%%%%%%%%%%%%%%%%%%%%%%%%%%%%%%%%%%%%
%%%%%%%%%%%%%%%%%%%%%%%%%%%%%%%%%%%%%%%%%%%%%%%%%%%%%%%%%%%%%%%%%%%%%%%%%%%%%%%
%%%%%%%%%%%%%%%%%%%%%%%%%%%%%%%%%%%%%%%%%%%%%%%%%%%%%%%%%%%%%%%%%%%%%%%%%%%%%%%
Turning to the numerical aspects, let us introduce the truncated version of the
GLM equations that can be implemented as a numerical scheme. To this end, define
\begin{equation}\label{eq:GLM-discrete1}
\begin{split}
&\kappa A^{(N)*}_2(t) = \sum_{|n|\leq N}\sqrt{\frac{\pi}{\sigma}}
p\left(t+\frac{n\pi}{\sigma}\right)\alpha^{(1,N)}_n,\\
&A_1^{(N)*}(t) = p(t)+\sum_{|n|\leq N}\sqrt{\frac{\pi}{\sigma}}
p\left(t+\frac{n\pi}{\sigma}\right)\alpha^{(2,N)}_n.
\end{split}
\end{equation}
so that 
\begin{equation}
\left\{\begin{aligned}
&\kappa \alpha^{(2,N)*}_l  = \sum_{|m|\leq N}\mathcal{M}_{lm}\alpha^{(1,N)}_m,\\
&\alpha^{(1,N)*}_l = p_l+\sum_{|m|\leq N}\mathcal{M}_{lm}\alpha^{(2,N)}_m,
\end{aligned}\right.\quad |l|\leq N.
\end{equation}
or, equivalently,
\begin{equation}
\kappa\vs{\alpha}^{(N)*}_2=\mathcal{M}_N\vs{\alpha}^{(N)}_1,\quad
\vs{\alpha}^{(N)*}_1=\hat{\vv{p}}_N+\mathcal{M}_N\vs{\alpha}^{(N)}_2,
\end{equation}
which simplifies to
\begin{equation}
(I_N-\kappa \mathcal{M}^*_N\mathcal{M}_N)\vs{\alpha}^{(N)}_1=\hat{\vv{p}}^*_N.
\end{equation}
The potential is then obtained from
\begin{equation}
\begin{split}
q(0) &\approx
-2p^*(0)
-2\vv{p}^{\dagger}_N\vs{\alpha}^{(N)*}_2,
\end{split}
\end{equation}
together with its $\fs{L}^2$-norm on $\Omega_{+}$ as
\begin{equation}
\begin{split}
\|q\|^2_{\fs{L}^2(\Omega_{+})}
&\approx
2\vv{p}^{\dagger}_N\vs{\alpha}^{(N)*}_1.
\end{split}
\end{equation}
Let us note that by introducing a truncated quadrature matrix 
$\mathcal{Q}_N$, the linear system can be symmetrized by putting
$\mathcal{G}_N=\mathcal{Q}_N^{1/2}\mathcal{P}_N\mathcal{Q}_N^{1/2}$ 
in the same manner as described earlier. For the moment, we hold off 
the symmetrization procedure as it involves quadrature errors on account of 
the fact that the quadrature formula with finite $\mathcal{Q}_N$ is 
not exact. Now we turn to the convergence analysis of the numerical procedure 
described above. To this end, we consider the total numerical 
error $R^{(N)}_j(t)\,(j=1,2)$ given by
\begin{equation}\label{eq:GLM-discrete-error}
\begin{split}
\kappa R^{(N)*}_2(t) 
&= \sum_{|n|\leq N}\sqrt{\frac{\pi}{\sigma}}
p\left(t+\frac{n\pi}{\sigma}\right)\left(\alpha^{(1)}_n-\alpha^{(1,N)}_n\right)\\
&\qquad+\sum_{|n|>N}\sqrt{\frac{\pi}{\sigma}}
p\left(t+\frac{n\pi}{\sigma}\right)\alpha^{(1)}_n,\\
R_1^{(N)*}(t) 
&= \sum_{|n|\leq N}\sqrt{\frac{\pi}{\sigma}}
p\left(t+\frac{n\pi}{\sigma}\right)\left(\alpha^{(2)}_n-\alpha^{(2,N)}_n\right)\\
&\qquad+\sum_{|n|>N}\sqrt{\frac{\pi}{\sigma}}
p\left(t+\frac{n\pi}{\sigma}\right)\alpha^{(2)}_n.
\end{split}
\end{equation}
Define
\begin{equation}\label{eq:GLM-discrete-EN}
E_N(t)=\sqrt{\frac{\pi}{\sigma}}
\left(\sum_{|n|>N}\left|p\left(t+\frac{n\pi}{\sigma}\right)\right|^2\right)^{\frac{1}{2}},
\end{equation}
then, it is easy work out the estimates
\begin{equation}\label{eq:error-estimate1}
\begin{split}
|R^{(N)}_2(t)|
&\leq(1+\pi)\sqrt{{\pi}/{\sigma}}\|p\|_{\fs{L}^2}\|\vs{\alpha}_1-\vs{\alpha}^{(N)}_1\|_{\ell^2}\\
&\qquad+ E_N(t)\left(\sum_{|n|>N}|\alpha^{(1)}_n|^2\right)^{\frac{1}{2}},\\
|R_1^{(N)}(t)|
&\leq(1+\pi)\sqrt{{\pi}/{\sigma}}\|p\|_{\fs{L}^2}\|\vs{\alpha}_2-\vs{\alpha}^{(N)}_2\|_{\ell^2}\\
&\qquad+ E_N(t)\left(\sum_{|n|>N}|\alpha^{(2)}_n|^2\right)^{\frac{1}{2}}.
\end{split}
\end{equation}
By a slight abuse of notations, let the vectors $\vs{\alpha}^{(N)*}_j\,(j=1,2)$
and $\hat{\vs{p}}_N$ represent infinite dimensional vectors with entries corresponding to $|n|>N$
taken to be identically zero. Then, it is straightforward to work out
\begin{equation}\label{eq:discrete-error}
\begin{split}
\begin{pmatrix}
\vs{\alpha}^*_1 - \vs{\alpha}^{(N)*}_1\\
\vs{\alpha}_2   - \vs{\alpha}^{(N)}_2
\end{pmatrix}
&=
\begin{pmatrix}
I & -\mathcal{M}\\
-\kappa\mathcal{M}^* & I
\end{pmatrix}^{-1}
\begin{pmatrix}
\hat{\vv{p}}-\hat{\vv{p}}_N\\
\vv{0}
\end{pmatrix}\\
&=
\begin{pmatrix}
(I-\kappa\mathcal{M}\mathcal{M}^*)^{-1}[\hat{\vv{p}}-\hat{\vv{p}}_N]\\
\kappa \mathcal{M}^*(I-\kappa\mathcal{M}\mathcal{M}^*)^{-1}[\hat{\vv{p}}-\hat{\vv{p}}_N]
\end{pmatrix}.
\end{split}
\end{equation}
Therefore, under the conditions that ensure $(I-\kappa\mathcal{M}\mathcal{M}^*)$
is invertible, it follows that, for some constant $C>0$, the estimates
\begin{equation}\label{eq:error-estimate2}
\|\vs{\alpha}_j-\vs{\alpha}^{(N)*}_j\|_{\ell^2}\leq
C\|\hat{\vv{p}}-\hat{\vv{p}}_N\|_{\ell^2},\quad j=1,2,
\end{equation}
hold. The estimates~\eqref{eq:error-estimate1} and~\eqref{eq:error-estimate2},
allow us to conclude that the truncated system converges to the true solution
under the aforementioned conditions. It is possible to make precise statement about
the rate of convergence with respect to the number of
basis functions $2N+1$ of the truncated system if we strengthen the regularity
condition on $p$: 
\begin{prop}\label{eq:convg-WKS}
For $p$ satisfying an estimate of the form 
\begin{equation}
|p(z)|\leq\frac{C}{(1+|z|)^{2}}e^{\sigma |\Im(z)|},\quad z\in\field{C},
\end{equation}
we have for fixed $t\in\field{R}$, 
\[
|R^{(N)}_j(t)|=\bigO{N^{-1/2}},\quad j=1,2,
\]
where $2N+1$ is the number of basis functions used in the sampling expansion. 
\end{prop}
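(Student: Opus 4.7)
The plan is to combine the error decomposition \eqref{eq:error-estimate1}--\eqref{eq:error-estimate2} with the sampling-theoretic bookkeeping already used in the proof of Proposition~\ref{prop:jagerman-type}, now specialized to the exponent $k=1$. The decay hypothesis implies $p\in\fs{B}^{1}_{\sigma}\cap\fs{B}^{2}_{\sigma}$ (with $\|p\|_{\fs{L}^1}$ and $\|p\|_{\fs{L}^2}$ finite), so the conditions of Theorem~\ref{thm:OP-inv-K} hold under the usual admissibility assumption and the bound \eqref{eq:error-estimate2} is available with an $N$-independent constant (this constant is inherited from the continuous inverse via a standard truncation/projection argument on $\ell^2$).

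With invertibility in hand, only three quantities have to be controlled. \textbf{First}, I would estimate $|\hat{p}_n|$ by mimicking the splitting used in the proof of Proposition~\ref{prop:jagerman-type}:
\begin{equation*}
\hat{p}_n=\int_{-\infty}^{\infty}p(s)\psi_n(s)ds-\int^{0}_{-\infty}p(s)\psi_n(s)ds
=\sqrt{\pi/\sigma}\,p(n\pi/\sigma)-\int^{0}_{-\infty}p(s)\psi_n(s)ds.
\end{equation*}
The first term is $\bigO{n^{-2}}$ by the pointwise decay of $p$. In the second term, $|\psi_n(s)|\leq\sqrt{\sigma/\pi^3}|n|^{-1}$ for $s\leq 0$ and $n\neq 0$, while $p\in\fs{L}^1(\field{R})$, giving $\bigO{n^{-1}}$. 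Hence $|\hat p_n|=\bigO{n^{-1}}$, and the same $\sum_{|n|>N}n^{-2}\leq 2/(N+1)$ trick used in Proposition~\ref{prop:jagerman-type} yields
\[
\|\hat{\vv{p}}-\hat{\vv{p}}_N\|_{\ell^2}=\bigO{N^{-1/2}},
\]
whence $\|\vs{\alpha}_j-\vs{\alpha}^{(N)*}_j\|_{\ell^2}=\bigO{N^{-1/2}}$ by \eqref{eq:error-estimate2}. \textbf{Second}, for the factor $E_N(t)$ defined in \eqref{eq:GLM-discrete-EN}, the bound $|p(t+n\pi/\sigma)|\leq C/(1+|t+n\pi/\sigma|)^{2}$ gives, for each fixed $t\in\field{R}$ and all $|n|$ sufficiently large, $|t+n\pi/\sigma|\gtrsim|n|$, so $E_N(t)^{2}\leq C'\sum_{|n|>N}n^{-4}=\bigO{N^{-3}}$, i.e.\ $E_N(t)=\bigO{N^{-3/2}}$. \textbf{Third}, since $\vs{\alpha}_j\in\ell^2$, the tail $\bigl(\sum_{|n|>N}|\alpha_n^{(j)}|^2\bigr)^{1/2}$ is $o(1)$, and even the crude bound by $\|\vs{\alpha}_j\|_{\ell^2}$ is enough.

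Substituting these three ingredients into \eqref{eq:error-estimate1}, the first summand contributes $(1+\pi)\sqrt{\pi/\sigma}\,\|p\|_{\fs{L}^2}\cdot\bigO{N^{-1/2}}=\bigO{N^{-1/2}}$, while the second summand contributes at most $\bigO{N^{-3/2}}$, so the overall error is $\bigO{N^{-1/2}}$ for $j=1,2$, as claimed.

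The only genuine obstacle is the uniform-in-$N$ control of the inverse in \eqref{eq:discrete-error}, i.e.\ showing that the constant $C$ in \eqref{eq:error-estimate2} does not degrade with $N$. This is not automatic from invertibility of $(I-\kappa\mathcal{M}\mathcal{M}^{*})$ on $\ell^2$ alone; I would handle it by interpreting the truncated system as the Galerkin compression by the orthogonal projection onto $\mathrm{span}\{\vv{e}_n:|n|\leq N\}$ and invoking collectively compact operator theory (or, equivalently, the bound $\|\mathcal{M}\mathcal{M}^{*}\|_{\ell^2}<1$ in the Hermitian case from the remark following Theorem~\ref{thm:OP-inv-K}, which passes to every principal submatrix). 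All other steps are direct applications of the pointwise decay of $p$ combined with \eqref{eq:B-spaces-norm}.
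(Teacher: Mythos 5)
Your proposal is correct and follows essentially the same route as the paper: it feeds the tail estimates $\|\hat{\vv{p}}-\hat{\vv{p}}_N\|_{\ell^2}=\bigO{N^{-1/2}}$ (via the $|\hat{p}_n|=\bigO{n^{-1}}$ splitting from Prop.~\ref{prop:jagerman-type}) and a bound on $E_N(t)$ into the decomposition \eqref{eq:error-estimate1}--\eqref{eq:error-estimate2}. The only deviations are cosmetic and harmless: you bound $E_N(t)$ by $\bigO{N^{-3/2}}$ directly from the pointwise decay where the paper quotes Jagerman's $\bigO{N^{-1}}$ (either suffices, since the first summand dominates), and you explicitly address the $N$-uniformity of the constant in \eqref{eq:error-estimate2}, a point the paper leaves implicit.
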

\begin{proof}
It is easy to see that the solution of the GLM equations exists under the 
conditions prescribed in the proposition. Under the same conditions, following
the methods discussed in the last section, it is also easy to show that  
\begin{equation*}
\|\hat{\vv{p}}-\hat{\vv{p}}_N\|_{\ell^2}=\bigO{N^{-1/2}},
\end{equation*}
and $E_N(t)=\bigO{N^{-1}}$ (see Prop.~\ref{prop:jagerman-type}) so that 
from~\eqref{eq:error-estimate1} and~\eqref{eq:error-estimate2} the result follows.
\end{proof}
%%%%%%%%%%%%%%%%%%%%%%%%%%%%%%%%%%%%%%%%%%%%%%%%%%%%%%%%%%%%%%%%%%%%%%%%%%%%%%%%%%%%%%%%%%%%%%%%%%%%
%%%%%%%%%%%%%%%%%%%%%%%%%%%%%%%%%%%%%%%%%%%%%%%%%%%%%%%%%%%%%%%%%%%%%%%%%%%%%%%%%%%%%%%%%%%%%%%%%%%%
%%%%%%%%%%%%%%%%%%%%%%%%%%%%%%%%%%%%%%%%%%%%%%%%%%%%%%%%%%%%%%%%%%%%%%%%%%%%%%%%%%%%%%%%%%%%%%%%%%%%
\subsubsection{Quadrature Errors}
Let us observe that the entries of the mass matrix $\mathcal{M}$ require a quadrature method 
which works well on infinite domains and is capable of providing higher orders
of convergence depending on the regularity of $p$. The quadrature matrix
$\mathcal{Q}$ exploits the sampling expansion to achieve this goal. Fortunately,
the need for a numerical quadrature is avoided by computing the
integrals exactly. However, truncation of this quadrature matrix introduces
numerical errors in computing $\mathcal{M}$. Note that the same difficulties
also arise in the computation of the vector $\hat{\vv{p}}$. In order to quantify
these errors, let us consider $2M+1$ number of basis functions and define
\begin{equation}
\begin{split}
p^{(M)}_l&=\sqrt{\frac{\pi}{\sigma}}
\sum_{|n|\leq M}\mathcal{Q}_{ln}p\left(\frac{n\pi}{\sigma}\right),\\
\mathcal{M}^{(M)}_{ln}
&={\frac{\pi}{\sigma}}\sum_{|m|\leq M}
\mathcal{Q}_{lm}p\left((m+n)\frac{\pi}{\sigma}\right).
\end{split}
\end{equation}
The total numerical error as a result of truncation of the linear system as
well as the quadrature matrix can be written as
\begin{equation}
\begin{pmatrix}
\vs{\alpha}^*_1 - \vs{\alpha}^{(N,M)*}_1\\
\vs{\alpha}_2   - \vs{\alpha}^{(N,M)}_2
\end{pmatrix}
=
\begin{pmatrix}
\vs{\alpha}^*_1 - \vs{\alpha}^{(N)*}_1\\
\vs{\alpha}_2   - \vs{\alpha}^{(N)}_2
\end{pmatrix}
+
\begin{pmatrix}
\vs{\alpha}^{(N)*}_1 - \vs{\alpha}^{(N,M)*}_1\\
\vs{\alpha}^{(N)}_2   - \vs{\alpha}^{(N,M)}_2
\end{pmatrix},
\end{equation}
where $\vs{\alpha}^{(N,M)*}_j$ refers to the solution of the linear system
constructed using $(2N+1)\times (2M+1)$ quadrature matrix. The relevant
quantities of this linear system are labeled as $\mathcal{M}^{(M)}_N$ and 
$\hat{\vv{p}}^{(M)}_N$ whose meanings are self-evident. We have already dealt
with the difference $\vs{\alpha}_j - \vs{\alpha}^{(N)}_j$; let us then turn
to the difference 
$\vs{\alpha}^{(N)}_j-\vs{\alpha}^{(N,M)}_j, j=1,2,$ which is given by
\begin{multline}
\begin{pmatrix}
\vs{\alpha}^{(N)*}_1 - \vs{\alpha}^{(N,M)*}_1\\
\vs{\alpha}^{(N)}_2   - \vs{\alpha}^{(N,M)}_2
\end{pmatrix}=
\begin{pmatrix}
I & -\mathcal{M}^{(M)}_N\\
-\kappa\mathcal{M}^{(M)*}_N & I
\end{pmatrix}^{-1}\\
\times
\begin{pmatrix}
(\mathcal{M}_N-\mathcal{M}_N^{(M)})\vs{\alpha}^{(N)}_2+(\hat{\vv{p}}_N-\hat{\vv{p}}^{(M)}_N)\\
\kappa(\mathcal{M}_N^*-\mathcal{M}_N^{(M)*})\vs{\alpha}^{(N)*}_1
\end{pmatrix}.
\end{multline}
In view of the expression above, it suffices to estimate
$(\mathcal{M}_N-\mathcal{M}_N^{(M)})$ and
$(\hat{\vv{p}}_N-\hat{\vv{p}}^{(M)}_N)$:
\begin{lemma}\label{lemma:Q-errors}
Let $p$ satisfy an estimate of the form 
\begin{equation}
|p(z)|\leq\frac{C}{(1+|z|)^{k+1}}e^{\sigma |\Im(z)|},\quad z\in\field{C},
\end{equation}
where $k\geq1$. Let the quadrature matrix $\mathcal{Q}$ be truncated to the size 
$(2N+1)\times (2M+1)$.
\begin{enumerate}[label=(\alph*)]
\item Let $\hat{\vv{p}}^{(M)}_N$ denote the approximation to $\hat{\vv{p}}_N$
using the quadrature matrix $\mathcal{Q}$, then the estimate
\begin{equation}
\|\hat{\vv{p}}_N-\hat{\vv{p}}^{(M)}_N\|_{\ell^2}\leq
\frac{2(\sigma/\pi)^{k}\mathcal{E}_k}{(M+1)^{k}\sqrt{1-4^{-k}}},
\end{equation}
holds where 
\begin{equation*}
\mathcal{E}_k=\sqrt{\frac{\pi}{\sigma}}
\left(\int_{\field{R}}s^{2k}|p(s)|^2ds\right)^{\frac{1}{2}}.
\end{equation*}
\item Let $\mathcal{M}^{(M)}_N$ denote the approximation to $\mathcal{M}_{N}$ using 
the quadrature matrix $\mathcal{Q}$, then the estimate
\begin{equation}
\|\mathcal{M}_{N}-\mathcal{M}^{(M)}_N\|_{\ell^2}
\leq\frac{2(\sigma/\pi)^{k-1}\|\partial_{\xi}^k\rho\|_{\fs{L}^{\infty}(-\sigma,\sigma)}}
{(M+1)^{k}\sqrt{1-4^{-k}}},
\end{equation}
holds.
\end{enumerate}
\end{lemma}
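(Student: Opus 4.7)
The plan is to reduce both estimates to an $\fs{L}^2$-norm of the tail of a sampling expansion via Bessel's inequality in the orthonormal system $\{\psi_l\}_{l\in\field{Z}}\subset\fs{L}^2(\field{R})$, and then to invoke a Jagerman-type bound as in Prop.~\ref{prop:jagerman-type}, which is available because the decay condition on $p$ guarantees $t^k p(t)\in\fs{B}^2_\sigma$.

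For part (a), I would write $(\hat{\vv{p}}_N-\hat{\vv{p}}^{(M)}_N)_l=\sum_{|n|>M}\mathcal{Q}_{ln}p_n=\int_0^\infty\psi_l(s)\,r_M(s)\,ds$, where $r_M(s)=\sum_{|n|>M}p_n\psi_n(s)$ is the tail of the WKS expansion of $p$. Bessel's inequality then yields $\|\hat{\vv{p}}_N-\hat{\vv{p}}^{(M)}_N\|^2_{\ell^2}\le\|r_M\chi_{\Omega_+}\|^2_{\fs{L}^2}\le\sum_{|n|>M}|p_n|^2$, and the dyadic splitting already employed in Prop.~\ref{prop:jagerman-type} delivers $\sum_{|n|>M}|p_n|^2\le 4(\sigma/\pi)^{2k}\mathcal{E}_k^2\big[(M+1)^{2k}(1-4^{-k})\big]^{-1}$; taking square roots yields the claim.

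Part (b) applies the same template with a ``test function'' attached. For $\vv{\alpha}$ with $\supp\vv{\alpha}\subset\{|n|\le N\}$, let $F(s)=\sqrt{\pi/\sigma}\sum_n\alpha_n p(s+n\pi/\sigma)\in\fs{B}^2_\sigma$; its sampling coefficients are $F_m=(\mathcal{P}_N\vv{\alpha})_m=(2\sigma)^{-1}\int_{-\sigma}^\sigma\rho(\xi)\Phi(\xi)e^{im\pi\xi/\sigma}\,d\xi$ with $\Phi(\xi)=\sum_n\alpha_n e^{in\pi\xi/\sigma}$, and the identity $((\mathcal{M}_N-\mathcal{M}^{(M)}_N)\vv{\alpha})_l=\int_0^\infty\psi_l(s)[F(s)-F^{(M)}(s)]\,ds$ combined with Bessel and $\|\mathcal{Q}\|_{\ell^2}\le 1$ reduces (b) to the uniform tail estimate $\sum_{|m|>M}|F_m|^2\le C^2\|\vv{\alpha}\|^2_{\ell^2}$ with $C$ of the asserted magnitude. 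The decay hypothesis on $p$ forces $\rho^{(j)}(\pm\sigma)=0$ for $j<k$, so $k$ successive integrations by parts give $|\hat{\rho}_l|\le(\sigma/|l|\pi)^k\|\partial_\xi^k\rho\|_{\fs{L}^\infty(-\sigma,\sigma)}$ for $l\neq 0$; threading this decay through the Hankel representation $F_m=\sum_n\alpha_n\hat{\rho}_{m+n}$ and a Jagerman-style dyadic splitting of $\{|m|>M\}$ produces the target.

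The main obstacle sits at this last step: the tail bound must be $N$-independent. A naive Parseval computation applied to $\partial_\xi^k(\rho\Phi)$ would bring down derivatives of the trigonometric polynomial $\Phi$ of size $N^k$ and destroy that independence, so the argument has to transfer all $k$ derivatives onto the symbol $\rho$, keep $\Phi$ untouched, and draw the smoothness exclusively from the Hankel structure of $\mathcal{P}$. The careful boundary-condition bookkeeping this requires is what ultimately produces the $(\sigma/\pi)^{k-1}\|\partial_\xi^k\rho\|_\infty$ prefactor in the claimed bound.
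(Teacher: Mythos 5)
Your part (a) is essentially the paper's argument: both reduce the error to the action of $\mathcal{Q}$ (equivalently, Bessel's inequality for $\{\psi_l\}$ restricted to $\Omega_+$, which is what gives $\|\mathcal{Q}\|_{\ell^2}<1$) on the tail vector $\left(p_n\right)_{|n|>M}$, and then invoke Jagerman's estimate, available because the decay hypothesis puts $t^kp(t)$ in $\fs{B}^2_{\sigma}$. Nothing further is needed there.

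Part (b) begins exactly as in the paper: you introduce the right auxiliary function $F(s)=\sqrt{\pi/\sigma}\sum_n\alpha_n p\left(s+n\pi/\sigma\right)\in\fs{B}^2_{\sigma}$ and reduce the claim to the tail $\sum_{|m|>M}|F(m\pi/\sigma)|^2$. But you then abandon $F$ and try to extract the decay from the coefficient representation $F_m=\sum_n\alpha_n\hat{\rho}_{m+n}$ together with $|\hat{\rho}_l|\lesssim |l|^{-k}$. That step does not close: for $|m|>M$ and $|n|\le N$ the index $m+n$ sweeps a window of width $2N+1$ centred at $m$, and whenever $N\gtrsim M$ (the case $M=N$ is precisely the one the paper recommends) this window contains indices near zero where $\hat{\rho}_l$ has no smallness, so no power of $(M+1)^{-1}$ survives uniformly in $N$ from coefficient decay alone. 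The paper's resolution stays at the level of the function: since $F\in\fs{B}^2_{\sigma}$ and $s^kF(s)\in\fs{L}^2$, Jagerman's theorem applied to $F$ itself gives
\begin{equation*}
\left(\sum_{|m|>M}\left|F\left(\frac{m\pi}{\sigma}\right)\right|^2\right)^{1/2}
\leq\frac{2(\sigma/\pi)^{k}}{(M+1)^{k}\sqrt{1-4^{-k}}}
\left(\int_{\field{R}}s^{2k}|F(s)|^2ds\right)^{1/2},
\end{equation*}
and the weighted integral is then bounded by $\|\partial_{\xi}^k\rho\|_{\fs{L}^{\infty}(-\sigma,\sigma)}\|\vs{\alpha}\|_{\ell^2}$ via Plancherel. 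This transfer of the problem from the samples to the single bandlimited superposition is the structural step your sketch is missing.

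You are right to flag the Leibniz-term danger: in the Plancherel computation $\partial_{\xi}^k(\rho\Phi)$ does produce terms in which derivatives fall on the trigonometric polynomial $\Phi$ and contribute factors of order $N^{k-j}$, and the paper's one-line assertion of the bound on $\left(\int s^{2k}|F|^2\right)^{1/2}$ does not address them. So the difficulty you isolate is genuine; but your proposed cure, routing everything through the pointwise decay of $\hat{\rho}_l$, is not the route the paper takes and, as sketched, does not repair it.
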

\begin{proof}
To prove (a), we observe that
\begin{equation}
\|\hat{\vv{p}}_N-\hat{\vv{p}}^{(M)}_N\|_{\ell^2}\leq\|\mathcal{Q}\|_{\ell^2}
\sqrt{\frac{\pi}{\sigma}}
\left(\sum_{|m|>M}\left|p\left(\frac{n\pi}{\sigma}\right)\right|^2\right)^{\frac{1}{2}}.
\end{equation}
The result then follows by noting that $\|\mathcal{Q}\|_{\ell^2}<1$ and using
Jagerman's estimate~\cite{J1966} for the remaining expression above.

To prove (b), let $\vs{\alpha}\in\ell^2$, then
\begin{equation*}
\begin{split}
&\|[\mathcal{M}_{N}-\mathcal{M}^{(M)}_N]\vs{\alpha}\|_{\ell^2}\\
&\leq\|\mathcal{Q}\|_{\ell^2}
{\frac{\pi}{\sigma}}\left[\sum_{|m|>M}
\left|\sum_{|n|\leq N}p\left((m+n)\frac{\pi}{\sigma}\right)\alpha_n\right|^2\right]^{\frac{1}{2}}.
\end{split}
\end{equation*}
Define
\begin{equation}
\mathcal{C}(s)=\sum_{n\in\field{Z}}p\left(s+\frac{n\pi}{\sigma}\right)\alpha_n,
\end{equation}
then $s^k\mathcal{C}(s)\in\fs{L}^2$ and 
\begin{equation}
\mathcal{E}_k
=\left(\int_{\field{R}}s^{2k}|\mathcal{C}(s)|^2ds\right)^{\frac{1}{2}}
\leq \|\partial_{\xi}^k\rho\|_{\fs{L}^{\infty}(-\sigma,\sigma)}
\|\vs{\alpha}\|_{\ell^2}.
\end{equation}
Now, observing that $\|\mathcal{Q}\|_{\ell^2}<1$ and
\begin{equation}
\begin{split}
\left[\sum_{|m|>M}
\left|\sum_{|n|\leq N}p\left((m+n)\frac{\pi}{\sigma}\right)\alpha_n\right|^2\right]^{\frac{1}{2}}
&\leq\left[\sum_{|m|>M}\left|\mathcal{C}\left(\frac{m\pi}{\sigma}\right)\right|^2\right]^{\frac{1}{2}},
\end{split}
\end{equation}
the result follows by using  Jagerman's estimate~\cite{J1966}.
\end{proof}
Let us conclude this section with the following remark. Based on the estimates
obtained above, it is clear that choice $M=N$ does not alter the rate of
convergence. Besides, this choice makes it possible to symmetrize the truncated linear
system which ensures that it is well conditioned. In particular, the uniform boundedness
of the inverse of
\[
\begin{pmatrix}
I & -\mathcal{M}^{(N)}_N\\
-\kappa\mathcal{M}^{(N)*}_N & I
\end{pmatrix}
\]
can be established exactly in the manner we treated the infinite case.
%%%%%%%%%%%%%%%%%%%%%%%%%%%%%%%%%%%%%%%%%%%%%%%%%%%%%%%%%%%%%%%%%%%%%%%%%%%%%%%%%%%%%%
%%%%%%%%%%%%%%%%%%%%%%%%%%%%%%%%%%%%%%%%%%%%%%%%%%%%%%%%%%%%%%%%%%%%%%%%%%%%%%%%%%%%%%
%%%%%%%%%%%%%%%%%%%%%%%%%%%%%%%%%%%%%%%%%%%%%%%%%%%%%%%%%%%%%%%%%%%%%%%%%%%%%%%%%%%%%%
%%%%%%%%%%%%%%%%%%%%%%%%%%%%%%%%%%%%%%%%%%%%%%%%%%%%%%%%%%%%%%%%%%%%%%%%%%%%%%%%%%%%%%
\subsection{Modified Sampling Approach}\label{sec:HT}
The slow convergence of the sampling series motivates us to consider modified
versions of the sampling theorem which facilitate faster convergence at the cost
of sampling beyond the Nyquist rate. A modified version of the sampling series
was proposed by Helms and Thomas~\cite{HT1962,J1966} which can be described as
follows: Introducing the bandlimiting parameter $\sigma'$ and $\delta\in(0,1)$
such that
\begin{equation}
\sigma' = \sigma/(1-\delta),
\end{equation}
where $\sigma$ defines the support for the reflection coefficient
$\rho$, i.e., $\supp\rho\subset[-\sigma,\sigma]$. Let us define
\begin{equation}
\left\{\begin{aligned}
&\sigma'_-=(1-\delta)\sigma'=\sigma,\\
&\sigma'_+=(1+\delta)\sigma'=\left(\frac{1+\delta}{1-\delta}\right)\sigma.
\end{aligned}\right.
\end{equation}
The sinc basis functions are then modified by a multiplier of the form
\begin{equation}
\theta(t)=\left[\frac{\sin\left(\frac{\delta}{m}\sigma'
t\right)}{\left(\frac{\delta}{m}\sigma' t\right)}\right]^m,\quad m\in\field{Z}_+,
\end{equation}
in order to accelerate the convergence of the sampling series. Let us define the 
basis functions
\begin{equation}\label{eq:new-basis}
\begin{split}
\phi_n(t) 
&=\sqrt{\frac{\sigma'}{\pi}}\sinc[\sigma'(t-t_n)]\theta(t-t_n)\\
&=\sqrt{\frac{\sigma'}{\pi}}\frac{\sin(\sigma' t-n\pi)}{(\sigma' t-n\pi)}
\left[\frac{\sin\frac{\delta}{m}(\sigma' t-n\pi)}
{\frac{\delta}{m}(\sigma' t-n\pi)}\right]^m.
\end{split}
\end{equation}
Clearly, $\phi_n(t)\in\fs{B}^1_{\sigma'_+}$. For fixed $t$, the Helms and
Thomas~\cite{HT1962,J1966} expansion of $p(t+s)$ reads as
\begin{equation}
p(t+s)=\sum_{n\in\field{Z}}\sqrt{\frac{\pi}{\sigma'}}
p\left(t+\frac{n\pi}{\sigma'}\right)\phi_n(s).
\end{equation}
This series converges under much weaker conditions on $p$, however, we would
still restrict ourselves to the case of $p\in\fs{B}^{\nu}_{\sigma'_-}(\equiv\fs{B}^{\nu}_{\sigma})$ with
$\nu=1,2$. For our purpose, it suffices to note that
$\{\phi_n\}_{n\in\field{Z}}$ spans $\fs{B}^{2}_{\sigma'_-}$. 

With this modified basis function, the operator defined by~\eqref{eq:psi-OP},
\begin{equation}
\OP{S}[g](y)\equiv\int^{\infty}_{0}\phi_0(y+s)g(s)ds,
\end{equation}
now defines a bounded linear operator from $\fs{L}^1(\Omega_+)$ to $\fs{L}^1$ with 
\begin{equation}
\|\OP{S}\|_{\fs{L}^1}\leq\int_{\Omega_+}|\phi_0(s)|ds<\|\theta\|_{\fs{L}^2}.
\end{equation}
Moreover, we note that $\OP{S}$ defines a bounded linear operator from
$\fs{L}^1(\Omega_+)$ to $\fs{B}^1_{\sigma'_+}$. Following Jagerman~\cite{J1966}, an 
improved version of the result presented in Prop.~\ref{prop:jagerman-type} is as follows.  
\begin{prop}
For $p$ satisfying an estimate of the form 
\begin{equation}
|p(z)|\leq\frac{C}{(1+|z|)^{k+1}}e^{\sigma'_- |\Im(z)|},\quad z\in\field{C},
\end{equation}
where $k\geq1$ and $g$ satisfying a similar estimate with the index $k'\geq m$, the truncation
error $T_N(t)$ of the partial sums in~\eqref{eq:OP-P-finite} with the basis
functions defined by~\eqref{eq:new-basis} satisfies the
estimate
\begin{equation}
|T_N(t)|\leq
\frac{2(\sigma'/\pi)^{k}\mathcal{E}_k(t)}{(N+1)^{k}\sqrt{1-4^{-k}}}\frac{C'}{(N+1)^{m+1/2}},
\end{equation}
for some constant $C'>0$.
\end{prop}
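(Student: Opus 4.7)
The plan is to mirror the proof of Proposition~\ref{prop:jagerman-type} but with the improved tail estimate for $\hat{g}_n$ that the Helms--Thomas weight $\theta$ buys us. Starting from the Cauchy--Schwartz splitting
\begin{equation*}
|T_N(t)|^2 \leq \frac{\pi}{\sigma'}
\left(\sum_{|n|>N}\left|p\!\left(t+\tfrac{n\pi}{\sigma'}\right)\right|^2\right)
\left(\sum_{|n|>N}|\hat{g}_n|^2\right),
\end{equation*}
where now $\hat{g}_n=\int_0^{\infty}\phi_n(s)g(s)\,ds$, the first factor is handled exactly as before: since $s^k p(s)\in\fs{B}^2_{\sigma'_-}\subset\fs{B}^2_{\sigma'}$ under the hypothesized decay, Jagerman's estimate~\cite{J1966} contributes the prefactor $2(\sigma'/\pi)^k\mathcal{E}_k(t)/[(N+1)^k\sqrt{1-4^{-k}}]$ unchanged.

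The novelty is to sharpen the bound on $\hat{g}_n$ from $\bigO{|n|^{-1}}$ to $\bigO{|n|^{-m-1}}$. I would split
\begin{equation*}
\hat{g}_n=\int_{-\infty}^{\infty}\phi_n(s)g(s)\,ds-\int_{-\infty}^{0}\phi_n(s)g(s)\,ds,
\end{equation*}
and treat the two pieces separately. For the first piece, the Helms--Thomas sampling identity applied to $g\in\fs{B}^2_{\sigma'_-}$ gives a value proportional to $g(n\pi/\sigma')$, and the decay hypothesis on $g$ with index $k'\geq m$ yields $|g(n\pi/\sigma')|=\bigO{|n|^{-m-1}}$. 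For the second piece, fix $n>0$ (the case $n<0$ is symmetric) and exploit the pointwise bound
\begin{equation*}
|\phi_n(s)|\leq \sqrt{\frac{\sigma'}{\pi}}\,
\frac{1}{|\sigma' s-n\pi|}\left(\frac{m}{\delta|\sigma' s-n\pi|}\right)^{m}
\end{equation*}
for $s\leq 0$, which follows directly from the definition~\eqref{eq:new-basis} of $\phi_n$ since both the sinc and the $m$-fold sinc factor decay like $1/|\cdot|$ in $|\sigma' s-n\pi|$. Since $|\sigma' s-n\pi|\geq n\pi$ on the negative half-line, integrating against $g\in\fs{L}^1$ yields $\left|\int_{-\infty}^{0}\phi_n g\right|=\bigO{|n|^{-m-1}}$.

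Combining the two estimates gives $|\hat{g}_n|\leq \tilde{C}\,|n|^{-m-1}$, so
\begin{equation*}
\sum_{|n|>N}|\hat{g}_n|^2\leq 2\tilde{C}^2\sum_{n>N}n^{-2m-2}\leq \frac{(C')^2}{(N+1)^{2m+1}}
\end{equation*}
by comparison with $\int_{N+1}^{\infty}s^{-2m-2}ds$. Substituting into the Cauchy--Schwartz splitting and taking square roots yields the stated estimate.

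The main obstacle will be the first piece: one must justify that the Helms--Thomas expansion with the particular basis $\{\phi_n\}$ reproduces (up to the normalization $\sqrt{\pi/\sigma'}$) the sample value $g(n\pi/\sigma')$ for $g\in\fs{B}^2_{\sigma'_-}$. This is essentially the content of~\cite{HT1962,J1966}, but care is needed because $\{\phi_n\}$ is not orthonormal — the oversampling factor $1/(1-\delta)$ provides the needed redundancy and the multiplier $\theta$ preserves the value at the nodes $n\pi/\sigma'$, which is the exact ingredient that makes the two halves of the above split collapse to a clean $\bigO{|n|^{-m-1}}$ bound. Once that identity is invoked, the rest is a routine repetition of the Jagerman estimate applied in~Prop.~\ref{prop:jagerman-type}.
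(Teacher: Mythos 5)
Your proposal follows the paper's own proof essentially line for line: the same Cauchy--Schwartz splitting, the same Jagerman estimate for the $p$-factor, the same decomposition $\hat{g}_n=\int_{\field{R}}\phi_n g-\int_{-\infty}^{0}\phi_n g$ with the first piece controlled by the decay of $g$ (via the interpolation property at the nodes) and the second by the pointwise bound $|\phi_n(s)|\leq\sqrt{\sigma'/\pi^3}\,(\delta\pi/m)^{-m}|n|^{-m-1}$ on the negative half-line, followed by the identical integral-comparison tail sum. The argument is correct; your closing remark about justifying the node-reproducing property of the non-orthonormal Helms--Thomas basis is a point the paper glosses over, but it does not change the route.
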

\begin{proof}
Noting that $g\in\fs{B}_{\sigma'_-}^1$  so that $\theta g\in\fs{B}_{\sigma'}^1$.
Consequently,
\begin{equation*}
\begin{split}
|\hat{g}_N|
&\leq|g(N\pi/\sigma')|
+\sqrt{\frac{\sigma'}{\pi^3}}\left(\frac{\delta\pi}{m}\right)^{-m}N^{-m-1}\int^0_{-\infty}|g(s)|ds\\
&=\bigO{N^{-1-m}},
\end{split}
\end{equation*}
therefore, $|\hat{g}_{\pm N}|=\bigO{N^{-1-m}}$. Now,
\[
\sum_{|n|>N}\frac{1}{N^{2m+2}}\leq2\int_{N+1}^{\infty}\frac{ds}{s^{2m+2}}=\frac{2}{(2m+1)(N+1)^{2m+1}},
\]
so that
\[
\left(\sum_{|n|>N}|\hat{g}_n|^2\right)^{1/2}\leq\frac{C'}{(N+1)^{m+{1}/{2}}}.
\]
for some constants $C'>0$. Other details of the proof are same as that of
Prop.~\ref{prop:jagerman-type}; therefore, we omit it.
\end{proof}
Working from equations~\eqref{eq:discrete-error} and~\eqref{eq:error-estimate2}, if
we assume $\mathcal{M}_N$ to be exact, then the proposition~\ref{eq:convg-WKS}
can be modified as follows:
\begin{prop}
Let $p$ satisfy an estimate of the form 
\begin{equation}
|p(z)|\leq\frac{C}{(1+|z|)^{k+1}}e^{\sigma'_- |\Im(z)|},\quad z\in\field{C},
\end{equation}
where $k\geq m+1/2,\,m\in\field{Z}_+$. Then, for fixed $t\in\field{R}$
and with the basis functions defined by~\eqref{eq:new-basis},
we have
\[
|R^{(N)}_j(t)|=\bigO{N^{-1/2-m}},\quad j=1,2,
\]
where $2N+1$ is the number of basis functions used in the sampling expansion. 
\end{prop}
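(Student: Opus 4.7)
The plan is to follow the proof of Proposition~\ref{eq:convg-WKS} verbatim, replacing the classical Jagerman bounds by the HT-enhanced bounds that were just established. First I would observe that the abstract error decomposition in~\eqref{eq:error-estimate1}-\eqref{eq:error-estimate2} carries over without change to the HT setting: it relied only on (i) the truncated system $(I-\kappa\mathcal{M}_N^*\mathcal{M}_N)$ inheriting a uniformly bounded inverse from the infinite system, and (ii) boundedness of the operator $\OP{S}$. Both ingredients remain valid because $\{\phi_n\}_{n\in\field{Z}}$ still spans $\fs{B}^2_{\sigma'_-}\supset\fs{B}^2_{\sigma}$, the corresponding $\OP{S}$ was already shown to be a bounded linear operator in the discussion preceding the previous proposition, and the hypothesis $k\geq m+1/2\geq 3/2$ certainly forces $p\in\fs{B}^1_{\sigma'_-}\cap\fs{B}^2_{\sigma'_-}$, so Theorem~\ref{thm:OP-inv-K} still delivers existence of $\vs{\alpha}_1,\vs{\alpha}_2\in\ell^2$ for the exact system.

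Next I would quantify the two quantities that drive~\eqref{eq:error-estimate1}. For the tail function
\[
E_N(t)=\sqrt{\frac{\pi}{\sigma'}}\left(\sum_{|n|>N}\left|p\!\left(t+\frac{n\pi}{\sigma'}\right)\right|^2\right)^{1/2},
\]
the hypothesis ensures that $s^k p(\cdot+t)\in\fs{L}^2$, so the Jagerman-type bound yields $E_N(t)=\bigO{N^{-k}}=\bigO{N^{-m-1/2}}$ for fixed $t$, exactly as in the proof of Proposition~\ref{prop:jagerman-type}. For the data perturbation $\|\hat{\vv{p}}-\hat{\vv{p}}_N\|_{\ell^2}$, I would invoke the improved decay of sampling coefficients just established in the preceding proposition: applied to $g=p$, the multiplier $\theta$ gives $|\hat{p}_{\pm N}|=\bigO{N^{-1-m}}$. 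The tail-sum comparison
\[
\sum_{|n|>N}\frac{1}{n^{2m+2}}\leq\frac{2}{(2m+1)(N+1)^{2m+1}}
\]
then produces $\|\hat{\vv{p}}-\hat{\vv{p}}_N\|_{\ell^2}=\bigO{N^{-1/2-m}}$.

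Finally, the estimate~\eqref{eq:error-estimate2} (whose constant is controlled by $\|(I-\kappa\mathcal{M}\mathcal{M}^*)^{-1}\|_{\ell^2}$ as in Theorem~\ref{thm:OP-inv-K}) converts the bound on $\|\hat{\vv{p}}-\hat{\vv{p}}_N\|_{\ell^2}$ into $\|\vs{\alpha}_j-\vs{\alpha}^{(N)}_j\|_{\ell^2}=\bigO{N^{-1/2-m}}$. Substituting both rates into~\eqref{eq:error-estimate1} and using $\vs{\alpha}_j\in\ell^2$ to dominate the residual $(\sum_{|n|>N}|\alpha^{(j)}_n|^2)^{1/2}$ by a quantity tending to $0$, the claimed bound $|R^{(N)}_j(t)|=\bigO{N^{-1/2-m}}$ follows.

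The main point of attention---more a bookkeeping subtlety than an obstacle---is ensuring that the \emph{same} regularity threshold $k\geq m+1/2$ controls both error streams simultaneously. The Cauchy--Schwarz step inside $E_N$ loses half a power relative to the pointwise decay of $p$, while the $\ell^2$-sum of tail coefficients $\hat{p}_n$ loses half a power relative to $|\hat{p}_N|=\bigO{N^{-m-1}}$; these two half-power losses cancel and produce a matched rate precisely when $k\geq m+1/2$, explaining the hypothesis.
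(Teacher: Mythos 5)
Your proposal is correct and follows essentially the same route as the paper, which itself gives no explicit proof but merely states that the result follows by rerunning the argument of Proposition~\ref{eq:convg-WKS} with the error decomposition~\eqref{eq:error-estimate1}--\eqref{eq:error-estimate2} and the HT-improved coefficient decay $|\hat{p}_{\pm N}|=\bigO{N^{-1-m}}$ from the preceding proposition; you have simply filled in the details the paper omits. The only quibble is your closing heuristic: via the Jagerman route actually used here, $E_N(t)=\bigO{N^{-k}}$ loses a full power (not half) relative to the pointwise decay $N^{-k-1}$ of $p$, but this does not affect the proof since $k\geq m+1/2$ still yields $E_N(t)=\bigO{N^{-m-1/2}}$.
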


We conclude this section with a discussion of the quadrature method for
computing the entries of the quadrature matrix
\begin{equation}
\mathcal{Q}_{nl}=\int_{0}^{\infty}\phi_n(s)\phi_l(s)ds,\quad l,n\in\field{Z},
\end{equation}
where $\{\phi_n(s)\}$ are the new basis functions introduced in this section. By a 
straightforward application of the sampling theorem, a simple quadrature rule 
can be worked out as follows:
\begin{equation}
\mathcal{Q}^{(M)}_{nl}=\frac{1}{2\sigma'_+}\sum_{k=-M}^{M}
\phi_n\left(\frac{k\pi}{2\sigma'_+}\right)
\phi_l\left(\frac{k\pi}{2\sigma'_+}\right)
\left[\frac{\pi}{2}-\Si(k\pi)\right],
\quad l,n\in\field{Z}.
\end{equation}
Defining the matrices $\mathcal{D}$ and ${\Phi}$, 
\begin{equation}
\begin{split}
\mathcal{D}_{nk}
&=\frac{1}{2\sigma'_+}\left[\frac{\pi}{2}-\Si(k\pi)\right],
\quad -M\leq m,n\leq M,\\
{\Phi}_{kl}
&=\phi_l\left(\frac{k\pi}{2\sigma'_+}\right),\quad-M\leq k\leq M,\,
-N\leq l\leq N,
\end{split}
\end{equation}
the quadrature matrix truncated to size $N\times N$ can be written as
\begin{equation}
\mathcal{Q}^{(M)}_N=\Phi^{\tp}\mathcal{D}\Phi.
\end{equation}
The quadrature errors in the present case can be treated in exactly the same
manner as before and given that the new basis functions have better decay
properties, the results contained in the lemma~\ref{lemma:Q-errors} remain valid.
%%%%%%%%%%%%%%%%%%%%%%%%%%%%%%%%%%%%%%%%%%%%%%%%%%%%%%%%%%%%%%%%%%%%%%%%%%%%%%%%%%%%%%%%%%%%%%%
%%%%%%%%%%%%%%%%%%%%%%%%%%%%%%%%%%%%%%%%%%%%%%%%%%%%%%%%%%%%%%%%%%%%%%%%%%%%%%%%%%%%%%%%%%%%%%%
%%%%%%%%%%%%%%%%%%%%%%%%%%%%%%%%%%%%%%%%%%%%%%%%%%%%%%%%%%%%%%%%%%%%%%%%%%%%%%%%%%%%%%%%%%%%%%%
%%%%%%%%%%%%%%%%%%%%%%%%%%%%%%%%%%%%%%%%%%%%%%%%%%%%%%%%%%%%%%%%%%%%%%%%%%%%%%%%%%%%%%%%%%%%%%%
\section{Numerical and Algorithmic Aspects}\label{sec:num}
Based on the analysis presented in the earlier sections, the discrete system to be solved has 
the form 
\begin{equation}
\begin{pmatrix}
I & -\mathcal{Q}\mathcal{P}\\
-\kappa\mathcal{Q}\mathcal{P}^{*} & I
\end{pmatrix}
\begin{pmatrix}
\vv{v}_1\\
\vv{v}_2
\end{pmatrix}
=
\begin{pmatrix}
\mathcal{Q}{\vv{p}}\\
\vv{0}
\end{pmatrix},
\end{equation}
where $\mathcal{Q}\in\field{R}^{N\times N}$ is a symmetric positive-definite
matrix, $\mathcal{P}\in\field{C}^{N\times N}$ is a Hankel matrix and
$\kappa\in\{+1,-1\}$. We symmetrize the linear
system by introducing $\vs{v}_j=\mathcal{Q}^{1/2}\vs{u}_j,\,j=1,2$ so
that
\begin{equation*}
\begin{pmatrix}
I & -\mathcal{Q}^{1/2}\mathcal{P}\mathcal{Q}^{1/2}\\
-\kappa\mathcal{Q}^{1/2}\mathcal{P}^*\mathcal{Q}^{1/2} & I
\end{pmatrix}
\begin{pmatrix}
\vv{u}_1\\
\vv{u}_2
\end{pmatrix}
=
\begin{pmatrix}
\mathcal{Q}^{1/2}{\vv{p}}\\
\vv{0}
\end{pmatrix}.
\end{equation*}
Putting $\mathcal{G}=\mathcal{Q}^{1/2}\mathcal{P}\mathcal{Q}^{1/2}$, we have
\begin{equation}
\begin{pmatrix}
                 I & -\mathcal{G}\\
-\kappa\mathcal{G}^{\dagger} &  I
\end{pmatrix}
\begin{pmatrix}
\vv{u}_1\\
\vv{u}_2
\end{pmatrix}
=
\begin{pmatrix}
\mathcal{Q}^{1/2}{\vv{p}}\\
\vv{0}
\end{pmatrix},
\end{equation}
which reduces to
\begin{equation}\label{eq:solve-by-CG}
(I-\kappa\mathcal{G}^{\dagger}\mathcal{G})\vs{u}_1
=\mathcal{Q}^{1/2}{\vv{p}},
\end{equation}
which is numerically well conditioned on account of the positive definite nature of
$(I-\kappa\mathcal{G}^{\dagger}\mathcal{G})$ (for $\kappa=+1$, we assume
$\|\mathcal{G}\|_s<1$). This linear system must be solved 
in order to compute $q(0)$. In order to obtain
$q(nh)$, where $h\,(>0)$ is the step size and $n\in\field{Z}$, we must translate
$p(\tau)$ by $2nh$, and, compute $\mathcal{P}$ and $\vv{p}$. If a direct
method of solving a system of linear equation is used, then the complexity of
computation per sample of the potential works out to be $\bigO{N^3}$ (excluding
the cost of computing $\mathcal{P}$ and $\vv{p}$). However, it is also possible
to use an iterative method such as the conjugate-gradient (CG) method to solve the
linear system leading to a complexity of $\bigO{N_{\text{iter.}}N^2}$. Note that
the initial seed for such a procedure can be obtained by using a direct solver at
any fixed $n$, say, $n=0$. The CG-iterations in the subsequent step can be 
seeded by using the solution obtained in the last step. We choose the
threshold for convergence of the CG iterations to be $10^{-12}$ unless
otherwise stated.

The discussion above is valid for both of the methods proposed in the earlier sections. For 
the sake of convenience, let us label the methods by the basis functions used in 
their respective sampling expansions. The first method uses the 
Whittaker-Kotelnikov-Shannon (WKS) sampling series which consists of translates 
of the sinc function. The basis is completely
determined by the bandlimiting parameter $\sigma$; therefore, we label this
method by ``WKS$_{\sigma}$''. The second method uses the Helms and Thomas~\cite{HT1962,J1966}
version of the sampling series. The basis in this case is completely determined
by three parameters: the guard-band parameter $\delta$, the index $m$ of the convergence
accelerating function and the bandlimiting parameter $\sigma'$ given by
\begin{equation}
\sigma'=\sigma/(1-\delta),\quad\delta<1.
\end{equation}
Therefore, we
label this method by ``HT$_{\sigma'}^{(m,\delta)}$''. In the formalism adopted
in Sec.~\ref{sec:HT}, $\sigma'_-=\sigma$. Here, we restrict ourselves 
to the choice $m=4$ and $\delta=0.4$. We may
often drop the subscripts and superscripts in these labels for the sake of
brevity if additional information about the underlying basis is not relevant.
Finally, let us also note that the 
basis functions involved in each of the methods are symmetrically translated
copies of the zero index basis function, i.e., setting the number of basis
functions to be $N=2N_{\text{shift}}+1$, 
the maximum translation about the origin is given by 
$|t_{\text{shift}}|=(\pi/\sigma)N_{\text{shift}}$ and 
$|t_{\text{shift}}|=(\pi/\sigma')N_{\text{shift}}$ for WKS and HT, respectively. 

\begin{figure*}[!htb]
\centering
\includegraphics[scale=1]{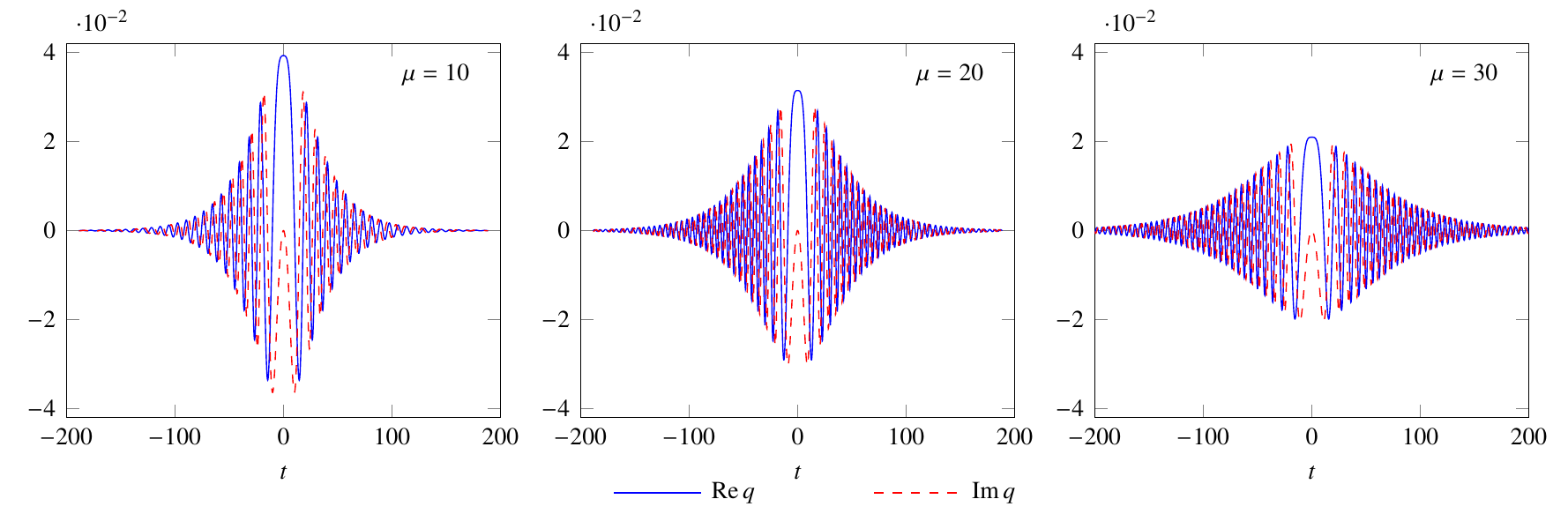}
\caption{\label{fig:sig_csech_exact}The figure shows the 
chirped secant-hyperbolic potentials with chirp parameter given by $\mu=10,20,30,$ and
the corresponding scale parameter $a=80/\pi,100/\pi,150/\pi$, respectively.}
\end{figure*}

\begin{figure*}[!thb]
\centering
\includegraphics[scale=1]{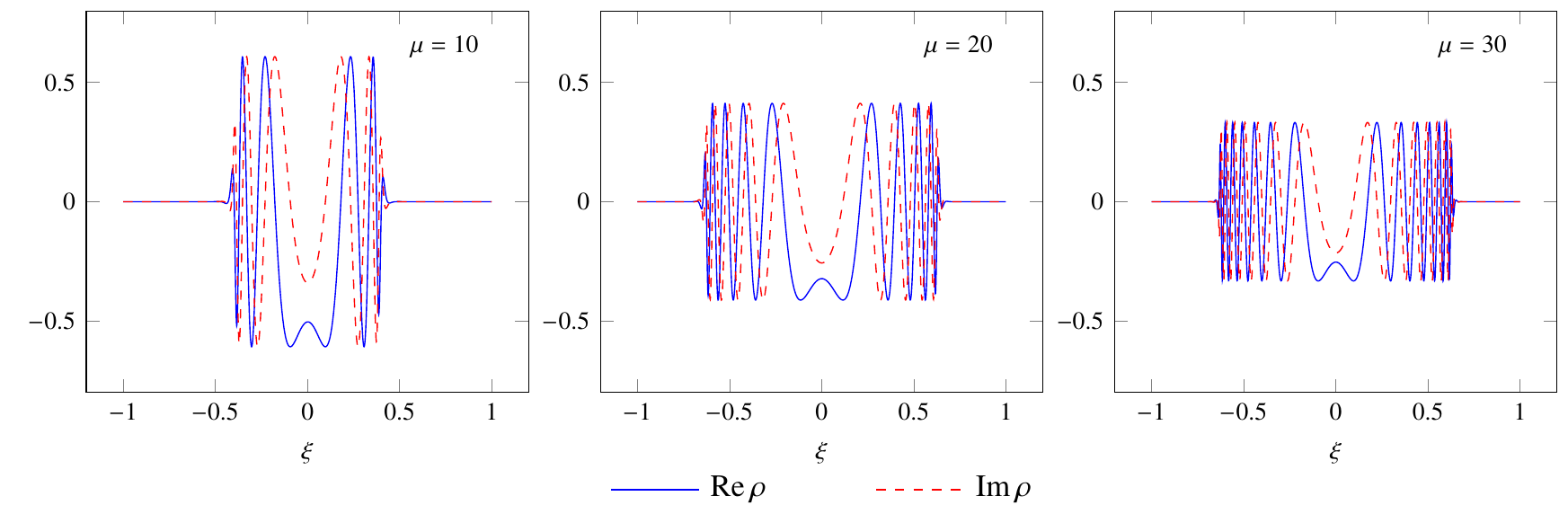}
\caption{\label{fig:csech_rho}The figure shows the reflection coefficient of 
chirped secant-hyperbolic potentials with chirp parameter given by $\mu=10,20,30,$ and
the corresponding scale parameter $a=80/\pi,100/\pi,150/\pi$, respectively.}
\end{figure*}

\begin{figure*}[!htbp]
\centering
\includegraphics[scale=1]{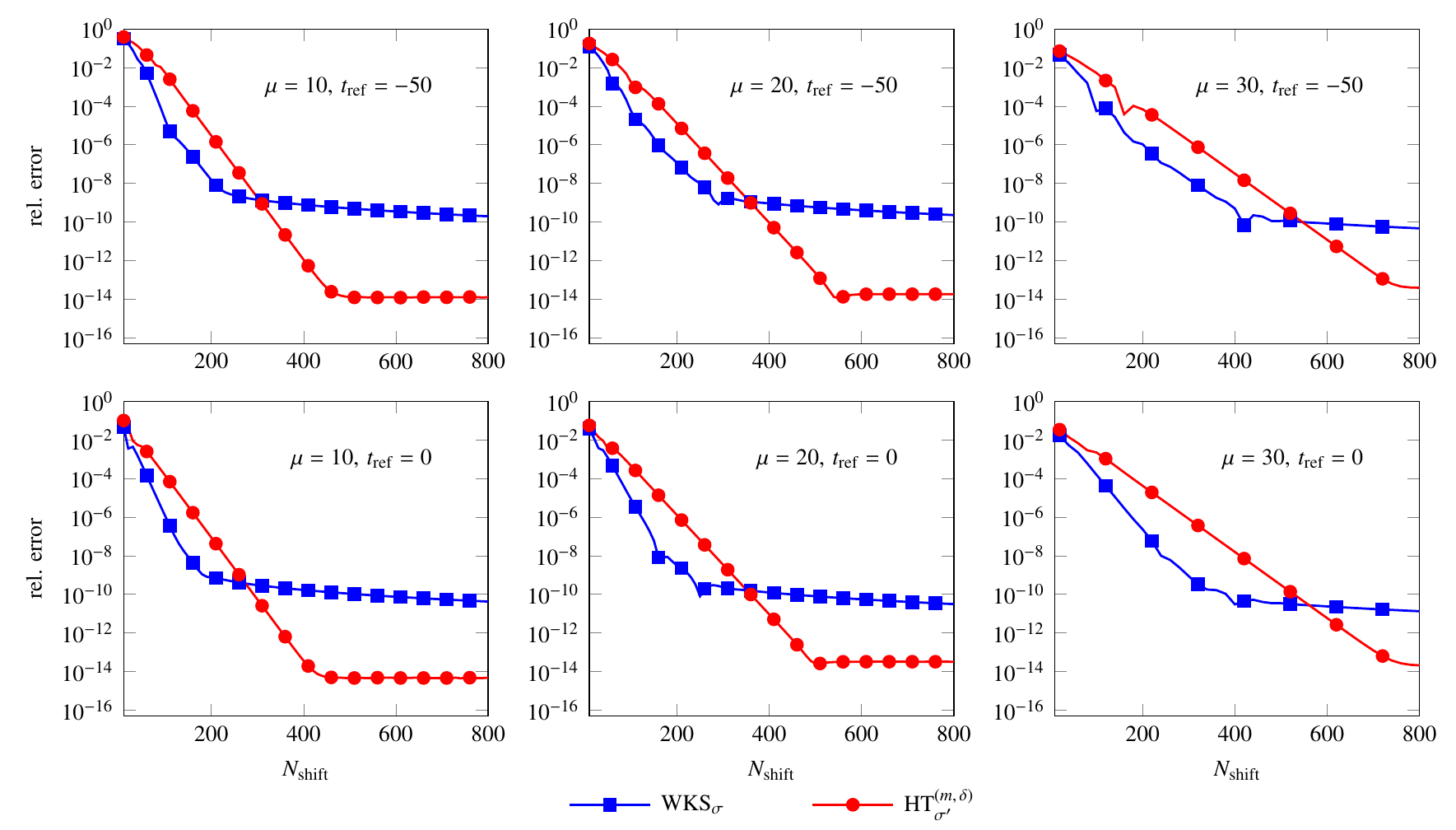}
\caption{\label{fig:convg_csech}The figure shows the convergence analysis of the
algorithms labelled WKS$_{\sigma}$ and HT$_{\sigma'}^{(m,\delta)}$ for the chirped secant-hyperbolic
potential with $\mu=10,20,30$ at the positions
$t_{\text{ref.}}=-50$ (top row) and $t_{\text{ref.}}=0$ (bottom row). The basis
functions used in either of these methods are translated symmetrically about the
origin and the number of basis functions is $N=2N_{\text{shift}}+1$. In
order to make the potential effectively bandlimited, the scale parameter is chosen to be $a=80/\pi,
100/\pi, 150/\pi$, for $\mu=10,20,30$, respectively. For the case $\mu=30$,
the number of LGL quadrature nodes is $N_{\text{quad}}=2000$ while for the rest
$N_{\text{quad}}=1000$. Before the 
error plateaus, the slope of the error curve is consistent with an 
exponential rate of convergence.}
\end{figure*}

\begin{figure*}[!htb]
\centering
\includegraphics[scale=1]{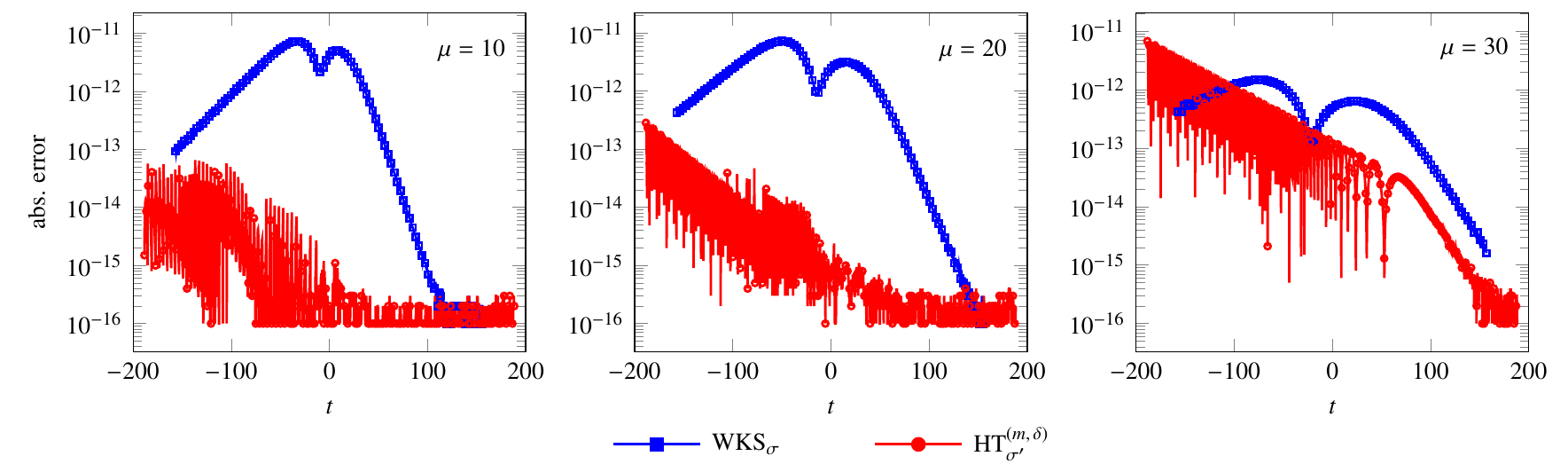}
\caption{\label{fig:sig_csech} The figure shows the absolute error in the
computed potential for the case of chirped secant-hyperbolic potential with
$\mu=10,20,30$. The computations are carried out with $N_{\text{shift}}=500$ for
the first two cases while the same for the last one is $N_{\text{shift}}=600$.
The number of LGL quadrature nodes for the first two cases is $N_{\text{quad}}=1000$
while for the last one $N_{\text{quad}}=2000$.}
\end{figure*}
 
Now, the estimate of computational complexity provided above excludes the
cost of computing the samples of $p(\tau)$ needed to compute $\mathcal{P}$ 
and $\vv{p}$ at each of the steps\footnote{By steps we mean
progressive translation of $p(\tau)$.}.  In view of the fact that function evaluations 
are, in general, expensive computationally, our algorithm should ensure that they
are used optimally. Given that $\mathcal{P}$ is a Hankel matrix, it
suffices to compute the first column and the last row which amounts to $2N-1$
evaluations of $p(\tau)$. The vector $\vv{p}$ is related to
the column vector $(\mathcal{P}_{k,0})$; therefore, it does not require additional evaluations of
the impulse response. Let the translation of $p(\tau)$, for the method WKS, be in the steps of $2h$
determined by
\begin{equation}\label{eq:n-os}
n_{\text{os}}h={\pi}/{\sigma},\quad h>0,\,n_{\text{os}}\geq1,
\end{equation}
where $n_{\text{os}}$ is referred to as the
\emph{over-sampling factor}. Consequently, the nodes over which one needs to sample
$p(\tau)$ are of the form $\tau_j=jh,\,j\in\field{Z}$. If the potential is supposed to
be determined over the grid $t_k=kh$ where $k=-K,-K+1,\ldots,K-1,K,$ and the number
of basis functions is $N$, then the impulse response must be sampled at the grid
points 
\begin{equation}
\left\{\begin{aligned}
&\tau_j=jh,\quad j=-\ovl{N},-\ovl{N}+1,\ldots,\ovl{N}-1,\ovl{N},\\
&\ovl{N}=n_{\text{os}}N+2K.
\end{aligned}\right.
\end{equation}
For the method HT, $\sigma$ is replaced by $\sigma'$ in~\eqref{eq:n-os} while
all the other aspects remain the same. For the examples considered in this
section, we set $\sigma=1$ and $n_{\text{os}}=10$ unless otherwise stated. 

With regard to the input required for the aforementioned methods, let us note that the nonlinear 
impulse response may not be available in a closed form. In fact, the inverse NFT is defined to take
the reflection coefficient $\rho(\xi)$ as input. The samples of the impulse response can
then be computed using the FFT algorithm with an appropriately large
over-sampling factor. Alternatively, if extremely high accuracy is demanded, we
may use methods that are specially designed for highly oscillatory integrals
such as the Fourier integral~\cite[Chap.~3]{DR1984}. One such method, 
attributed to Bakhvalov and Vasil'eva~\cite{BV1968}, is described in the 
Appendix~\ref{app:NIR-Bessel} where Legendre-Gauss-Lobbato (LGL) quadrature is used to
obtain the nonlinear impulse response in terms of the spherical Bessel
functions. In our tests, we have employed the latter method with the number of
LGL nodes set to $N_{\text{quad.}}=1000$ unless otherwise stated.

\begin{figure*}[!thb]
\centering
\includegraphics[scale=1]{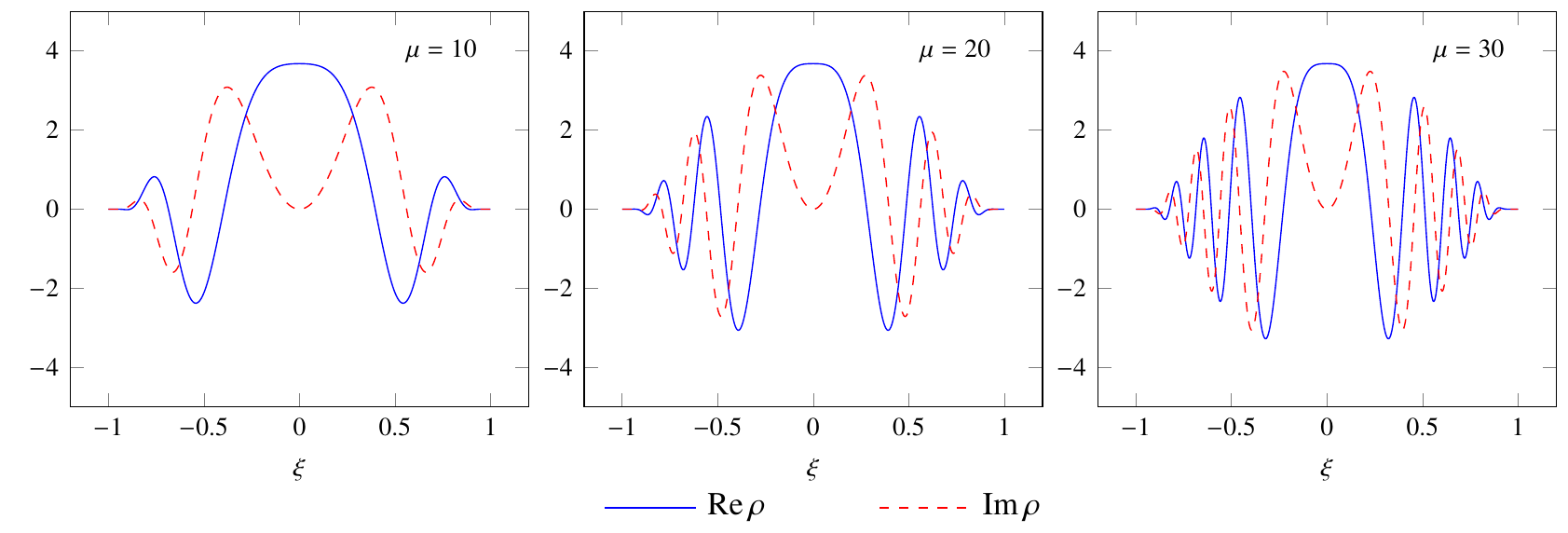}
\caption{\label{fig:bump_rho}The figure shows a chirped bump function defined
by~\eqref{eq:chirped-bump} where $A_0=10$, $\sigma=1$ and the chirp parameter 
$\mu\in\{10,20,30\}$.}
\end{figure*}

\begin{figure*}[!thb]
\centering
\includegraphics[scale=1]{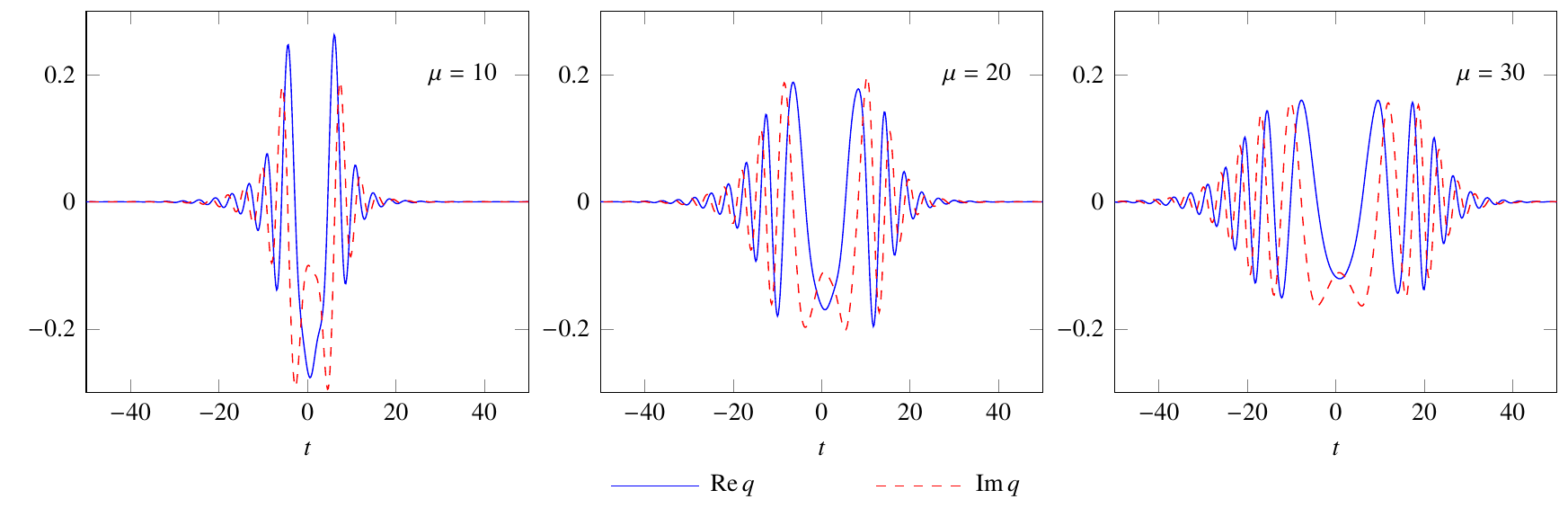}
\caption{\label{fig:sig_bump_ref}The figure shows the numerically computed scattering potential
corresponding to the chirped bump function as reflection coefficient depicted in
Fig.~\ref{fig:bump_rho}.}
\end{figure*}

\begin{figure*}[!thb]
\centering
\includegraphics[scale=1]{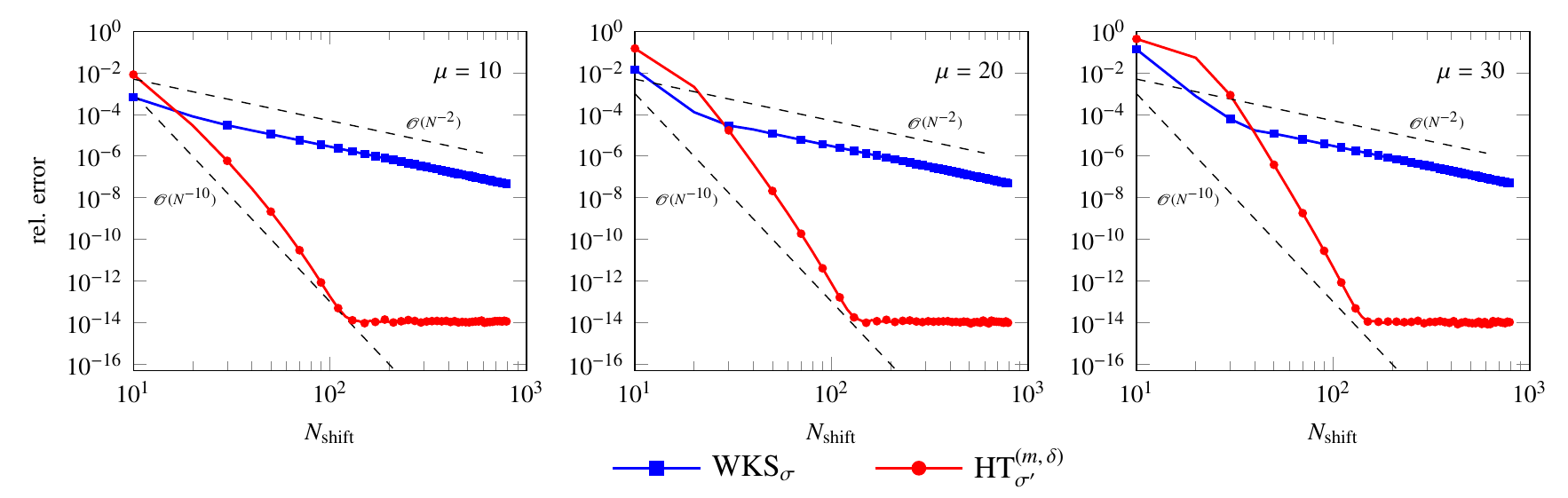}
\caption{\label{fig:convg_bump}The figure shows the convergence analysis of the
algorithms WKS$_{\sigma}$ and HT$^{(m,\,\delta)}_{\sigma'}$ for the chirped bump
function as reflection coefficient defined by~\eqref{eq:chirped-bump} with 
$A_0=10$ and $\mu\in\{10,20,30\}$. The error is quantified by~\eqref{eq:err-ref}
(with the reference solution computed using HT with $N_{\text{shift}}=2000$ and
$N_{\text{quad}}=6000$) and the number of basis functions, symmetrically translated about the origin, is
$N=2N_{\text{shift}}+1$. The slope of the
error curve for the method WKS is unambiguously consistent with a second order
of convergence. For the method HT, an algebraic rate of convergence 
better that $\bigO{N^{-10}}$ can be obtained using a linear fit before the error
plateaus.}
\end{figure*}

\begin{figure*}[!thb]
\centering
\includegraphics[scale=1]{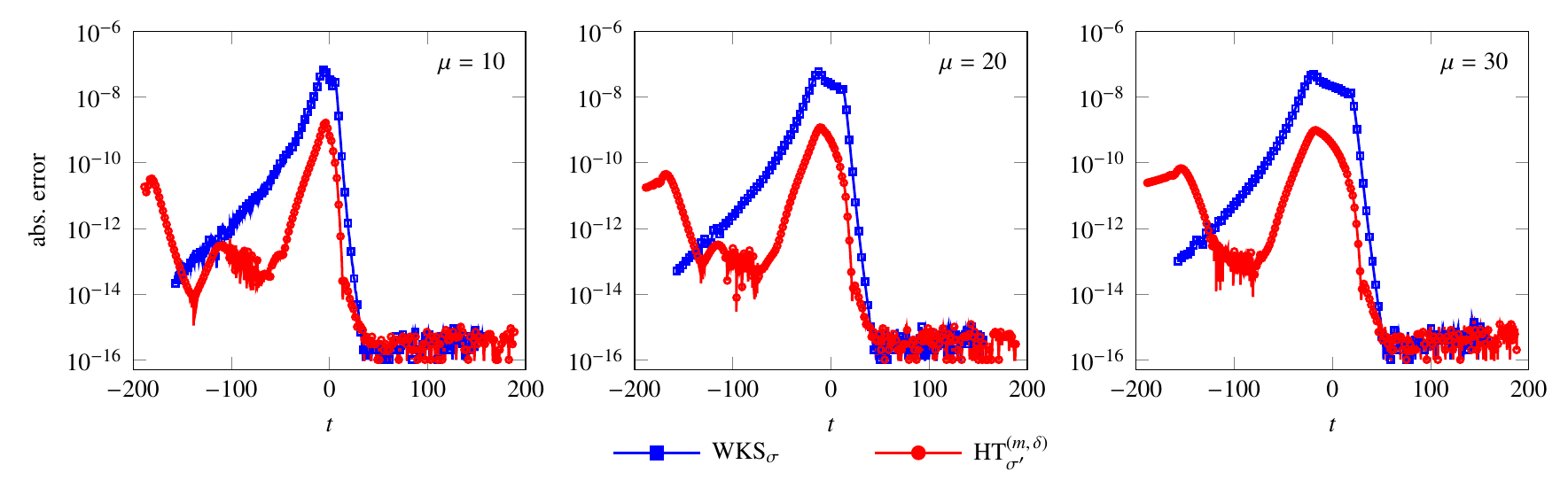}
\caption{\label{fig:sig_bump}The figure shows the error in the numerically computed
scattering potential corresponding to the chirped bump
function as reflection coefficient defined by~\eqref{eq:chirped-bump} with 
$A_0=10$ and $\mu\in\{10,20,30\}$. The error for the algorithms 
WKS$_{\sigma}$ and HT$^{(m,\,\delta)}_{\sigma'}$ is quantified by~\eqref{eq:err-ref} where the reference
solution is computed using the fast inverse NFT algorithm reported
in~\cite{V2018BL} with $2^{21}$ number of samples and the step-size is $2^{10}$
times smaller than that used in the algorithm being tested.}
\end{figure*}

\begin{figure*}[!thb]
\centering
\includegraphics[scale=1]{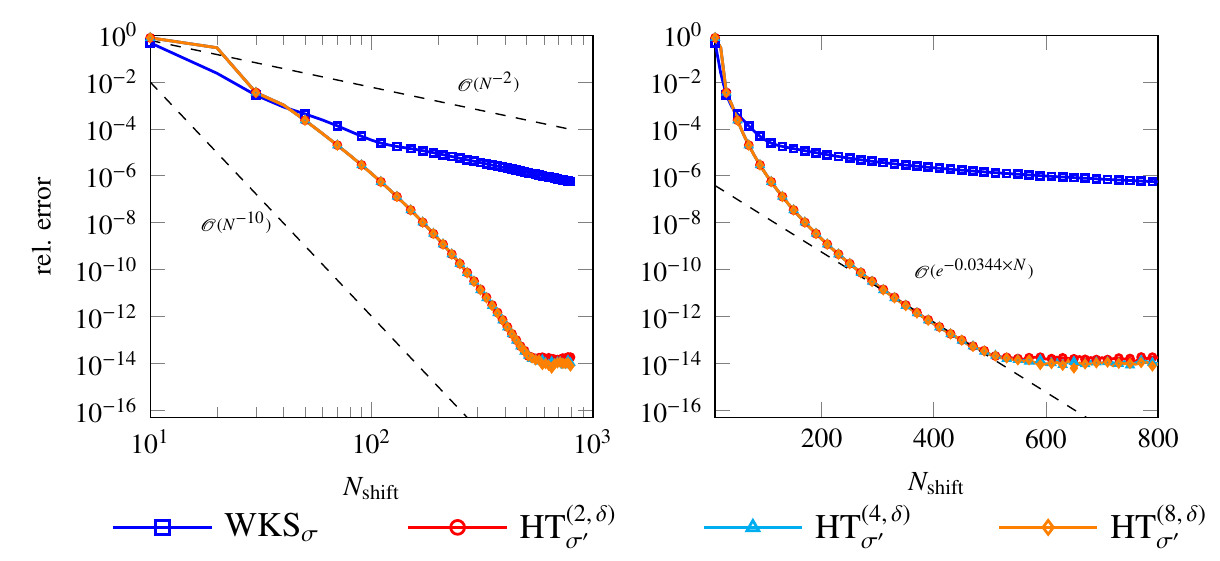}
\caption{\label{fig:comp_bump}The figure shows the convergence analysis of the
algorithms WKS$_{\sigma}$ and HT$^{(m,\,\delta)}_{\sigma'},\,m\in\{2,4,8\},$ for the chirped bump
function as reflection coefficient defined by~\eqref{eq:chirped-bump} with 
$A_0=10$, $n=5$ and $\mu=30$. }
\end{figure*}

\begin{figure*}[!th]
\centering
\includegraphics[trim=0.5cm 1.5cm 1cm 1cm, clip]{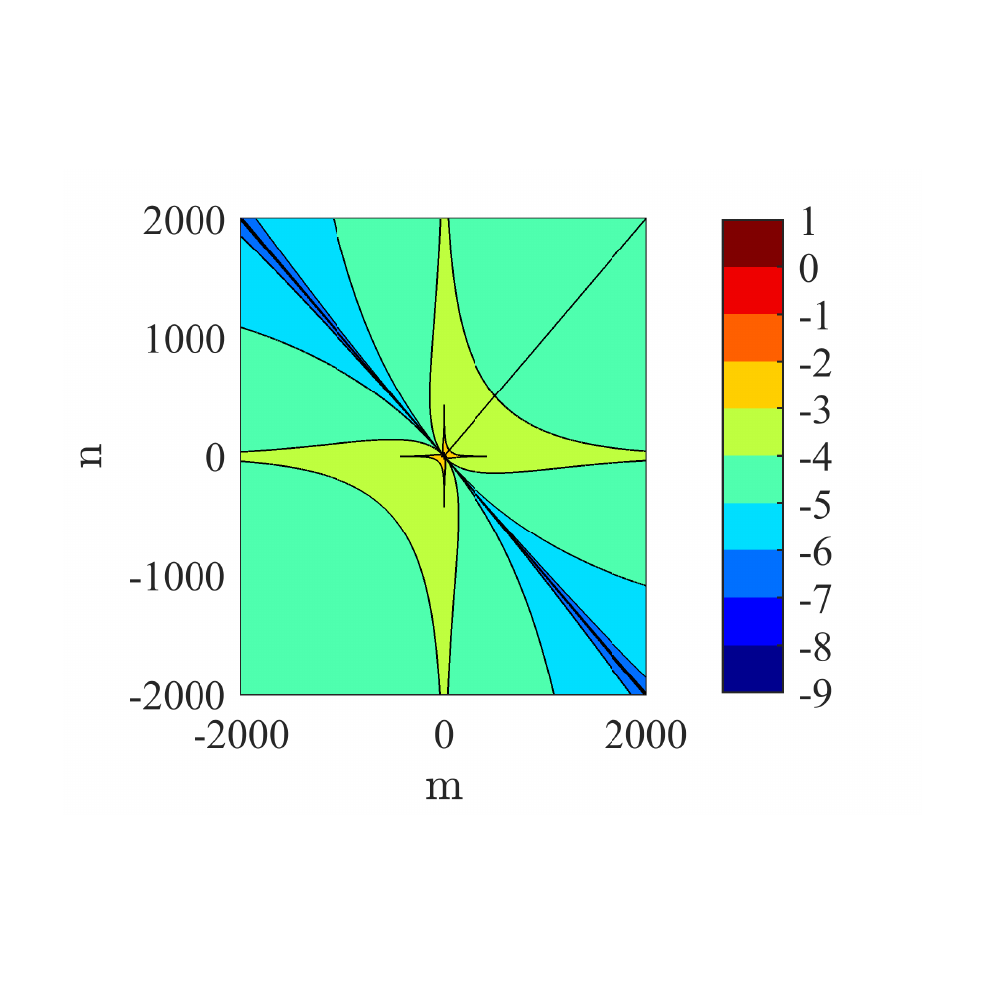}
\includegraphics[trim=0.5cm 1.5cm 1cm 1cm, clip]{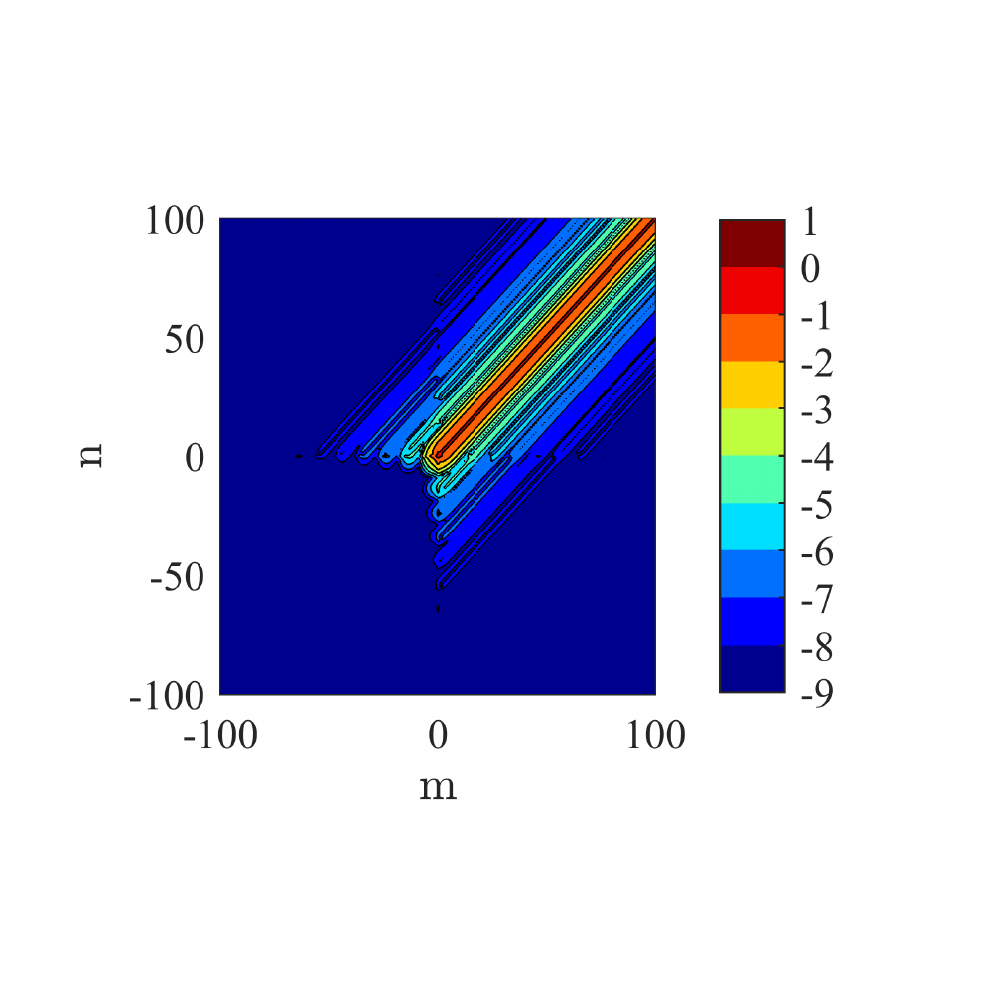}
\caption{\label{fig:qmat} The figure show the contour plot of
$\log_{10}|\mathcal{Q}_{mn}|$ for the method WKS (left) and the method HT$_{\sigma}^{(m,\,\delta)}$.}
\end{figure*}

Now we turn to to the error analysis of the proposed methods. In these tests, we 
restrict ourselves to the case $\kappa=-1$. For the purpose of convergence analysis, we 
choose the chirped secant-hyperbolic potential~\cite{TVZ2004}:
\begin{equation}
q(t) = A_0\frac{\exp[-2i\mu A_0\log(\cosh t)]}{\cosh(t)},
\end{equation}
which is not a nonlinearly bandlimited signal, however, it can be
considered effectively bandlimited\footnote{For a reflection coefficient which
is not compactly supported, if
$|\rho(\xi)|\leq{C}{(1+|\xi|)^{-\nu-1}}$ for $\xi\in\field{R}$ and some $\nu>0$, then
\begin{equation*}
\left|p(\tau)-\frac{1}{2\pi}\int_{-\sigma}^{\sigma}\rho(\xi)e^{i\xi\tau}d\xi\right|
\leq\frac{C}{\nu\pi(1+\sigma)^{\nu}},\quad \tau\in\field{R}.
\end{equation*}
Therefore, by choosing $\sigma$ large enough one can consider $\rho(\xi)$ as
effectively bandlimited.}. We assume that $\mu \geq 1$ so that the discrete spectrum 
is empty. The reflection coefficient is given by
\begin{equation}
\begin{split}
\rho(\xi)&= 
-\frac{A_0e^{-2i\mu A_0(\log2)}}
{\Gamma\left(1-\frac{iA_0\omega}{2\mu}\right)\Gamma\left(1-\frac{2iA_0\mu}{\omega}\right)}\cdot
\frac{\Gamma\left(\frac{1}{2}+i\xi-iA_0\mu\right)}
{\Gamma\left(\frac{1}{2}-i\xi+iA_0\mu\right)}\times\\
&\qquad\Gamma\left(\frac{1}{2}-i\xi+i\lambda\right)
\Gamma\left(\frac{1}{2}-i\xi-i\lambda\right),\\
\lambda &=A_0\mu\sqrt{1-\mu^{-2}},\quad\omega = \frac{2}{1 + \sqrt{1-\mu^{-2}}}.
\end{split}
\end{equation}
As $|\xi|\rightarrow\infty$, the reflection coefficient decays as
$\text{const.}\times e^{-\pi|\xi|}$. We set $A_0=1$ and let
$\mu\in\{10,20,30\}$. The potential corresponding to these choices of the
parameters is shown Fig.~\ref{fig:sig_csech_exact} with the corresponding
reflection coefficient shown in Fig.~\ref{fig:csech_rho}. In
the tests, we take the input as $\rho(a\xi)$ where $a$ is large enough so that
$\sigma=1$, effectively. Given that the scattering potential at any point on the chosen grid can be 
computed independently of other points,
it suffices to test the convergence of the methods at any arbitrary point, say,
$t_{\text{ref.}}$. We then quantify the error by
\begin{equation}\label{eq:err-true}
e_{\text{rel.}} = \frac{|q(t_{\text{ref.}})-q^{(\text{num.})}(t_{\text{ref.}})|}{|q(t_{\text{ref.}})|}.
\end{equation}
Note that, for the determination of the rate of convergence, we resort to 
a direct solver for the linear system involved in order to avoid all possible
sources of error.

The results of the convergence analysis is shown in Fig.~\ref{fig:convg_csech}.
It turns out that the rate of convergence in these examples is superior than
what is theoretically predicted. Both the methods exhibit exponential rate of
convergence before plateauing of the error curves takes place. Note that the
best accuracy achievable is remarkably close to the machine precision. Next, we may
also want to examine the pointwise error in the computed potential over a set of
grid points in order to
ascertain if the CG iteration converge to the right solution. This is tested in
Fig.~\ref{fig:csech_rho} which is consistent with the error levels reported in
the convergence analysis.

The next example is of a compactly supported reflection coefficient, 
the chirped  ``bump function'':
\begin{equation}\label{eq:chirped-bump}
\rho(\xi) =
A_0\exp\left[-\frac{1}{1-\left(\frac{\xi}{\sigma}\right)^{2n}}
+i\mu\left(\frac{\xi}{\sigma}\right)^2\right]\chi_{[-\sigma,\sigma]}.
\end{equation}
We set $\sigma=1$, $A_0=10$, $n=1$ and let $\mu\in\{10,20,30\}$. The potential corresponding 
to these choices of the parameters is shown Fig.~\ref{fig:sig_bump_ref} with the corresponding
reflection coefficient shown in Fig.~\ref{fig:bump_rho}. In the absence of a closed form 
solution of the inverse scattering problem, we choose to quantify the error by
\begin{equation}\label{eq:err-ref}
e_{\text{ref.}} = \frac{|q^{(\text{ref.})}(t_{\text{ref.}})-q^{(\text{num.})}(t_{\text{ref.}})|}
{|q^{(\text{ref.})}(t_{\text{ref.}})|},
\end{equation}
where $q^{(\text{ref.})}$ is the solution obtained using the method
HT$_{\sigma}^{(m,\,\delta)}$ with $N_{\text{shift}} = 2000$ and
$N_{\text{quad.}}=6000$. The results of the error analysis in this example 
must be interpreted with caution because $e_{\text{ref.}}$ is not the true
numerical error. The results of this numerical experiment is shown in 
Fig.~\ref{fig:convg_bump} where the method WKS shows an algebraic rate of convergence
(which also turns out to be superior than what was predicted). However, the convergence
behavior of HT is does not immediately confirm an algebraic rate because it
seems to change to an exponential rate. To clarify this, let us compare the
methods HT$^{(m,\,\delta)}_{\sigma'},\,m\in\{2,4,8\},$ for the chirped bump
function as reflection coefficient defined by~\eqref{eq:chirped-bump} with 
$A_0=10$, $n=5$ and $\mu=30$. The results are shown in 
Fig.~\ref{fig:comp_bump} where the plot on the right seems to confirm the earlier 
observation that convergence behavior might be exponential. Based on these
observation it reasonable to expect that the HT method exhibits exponential 
convergence for Schwartz class impulse response. A theoretical justification
for these observation is not available yet and we hope to address this in
the future.

The pointwise
error over a set of grid points is shown in Fig.~\ref{fig:sig_bump}. The
reference solution in this case is computed using the fast inverse NFT reported
in~\cite{V2018BL} with $2^{21}$ number of samples and the step-size is
$2^{-10}$-th of that used in the WKS or the HT method. The degree of agreement with the
reference solution is consistent with the convergence behavior determined earlier.

\begin{figure*}[!tbh]
\centering
\includegraphics[scale=1]{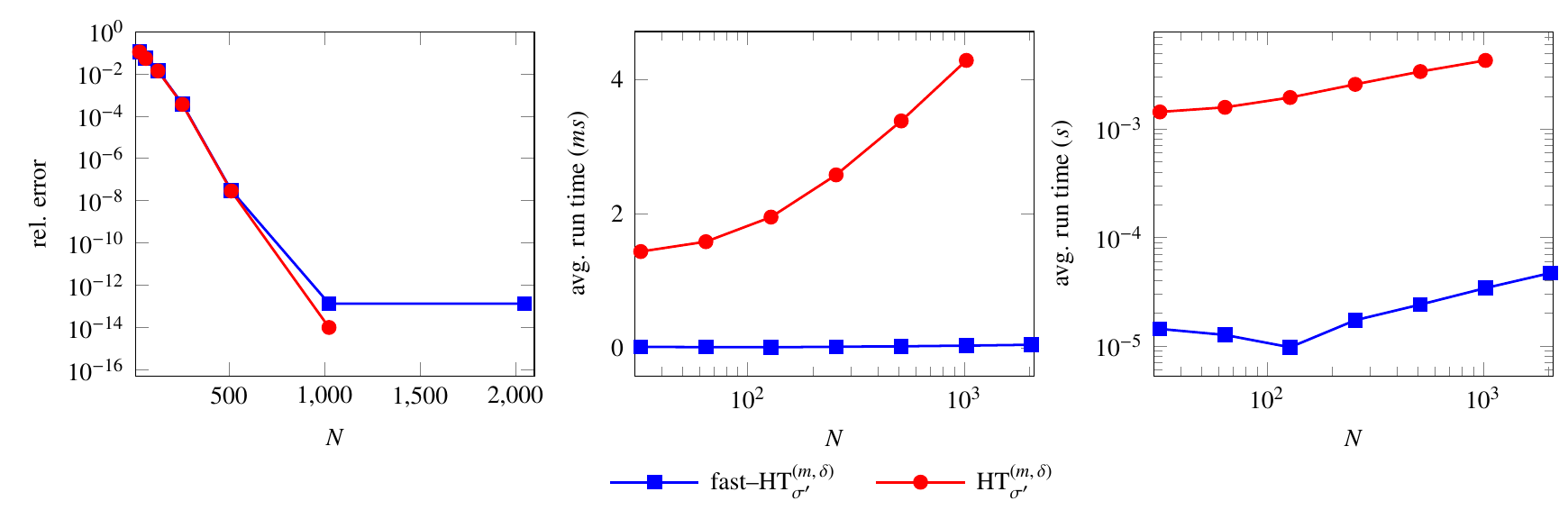}
\caption{\label{fig:comp_fnbl}The figure shows a comparison of convergence and
run-time behavior of the fast variant of HT$_{\sigma}^{(m,\,\delta)}$ with 
HT$_{\sigma}^{(m,\,\delta)}$. The example chosen for this experiment is the
chirped secant-hyperbolic potential with $\mu=10$. The tolerance for selecting
the dominant diagonals of the quadrature matrix in the fast algorithm is chosen to
be $10^{-12}$.}
\end{figure*}

\subsection{Fast Solver using a Sparse Approximation}
In this section, we would like to discuss how a fast variant of the method 
HT$_{\sigma}^{(m,\,\delta)}$ can be obtained by introducing a tolerance
$\epsilon$ to approximate its dense quadrature matrix with a sparse banded matrix.
This idea is motivated by the contour plot of the quadrature matrix 
in Fig.~\ref{fig:qmat}. Clearly, the quadrature 
matrix for the method HT exhibits an effectively banded structure compared to that of
WKS. The nature of the contour map of $\mathcal{Q}$
for HT can be easily understood as follows: Recalling
\begin{equation}
\mathcal{Q}_{nl}=\int_{0}^{\infty}\phi_n(s)\phi_l(s)ds,\quad l,n\in\field{Z},
\end{equation}
and letting $m,n<0$, we have
\begin{equation}
|\mathcal{Q}_{nl}|\leq\left(\frac{m}{\pi\delta}\right)^{2m}|l|^{-m}|n|^{-m}.
\end{equation}
If only $n<0$, then
\begin{equation}
|\mathcal{Q}_{nl}|\leq\left(\frac{m}{\pi\delta}\right)^{m}|n|^{-m}\|\phi_{l}\|_{\fs{L}^{2}}.
\end{equation}
Appealing to the symmetric nature of $\mathcal{Q}$, similar conclusion holds for
$l<0$. Therefore, the dense part of the matrix $\mathcal{Q}$ falls in the quadrant
where $n,l>0$. Consider $n,l>0$ and $n\neq l$. Then, without loss of generality, we can assume 
$n<l$ so that 
\begin{equation}
\begin{split}
|\mathcal{Q}_{nl}| & \leq\int_{0}^{(n+l)\frac{\pi}{2\sigma}}|\phi_n(s)\phi_l(s)|ds
+\int_{(n+l)\frac{\pi}{2\sigma}}^{\infty}|\phi_n(s)\phi_l(s)|ds\\
& \leq
\left(\frac{2m}{\pi\delta}\right)^{m}\frac{\|\phi_n\|_{\fs{L}^2}+\|\phi_l\|_{\fs{L}^2}}{|n-l|^{m}}
\le2\left(\frac{2m}{\pi\delta}\right)^{m}\frac{1}{|n-l|^{m}}.
\end{split}
\end{equation}
For $\epsilon>0$,
\begin{equation*}
%\begin{split}
%&2\left(\frac{2m}{\pi\delta}\right)^{m}\frac{1}{|n-l|^{m}}\leq\epsilon,\\
|n-l|\geq \left(\frac{2m}{\pi\delta}\right)\left(\frac{2}{\epsilon}\right)^{1/m},
%\end{split}
\end{equation*}
ensures that $|\mathcal{Q}_{nl}|\leq\epsilon$. Based on the preceding
inequalities, one can define the number of dominant diagonals, say,
$2N_{\text{band}}$ by
\begin{equation}
N_{\text{band}}(\epsilon)=\left[\left(\frac{2m}{\pi\delta}\right)
\left(\frac{2}{\epsilon}\right)^{1/m}\right] + 1,
\end{equation}
where $[x]$ denotes the integral part of $x\in\field{R}_+$. While this estimate
is important as it sets the upper bound\footnote{This bound can facilitate a
search based algorithm to look for more precise value of the number of dominant
diagonals. We leave these issues for future research.}, it is not so useful in practice because 
it greatly overestimates the number of dominant diagonals. Given that the quadrature matrix 
needs to be computed only once, it is rather easy to check the entries directly and determine the
sparsity of this matrix. We choose to set this tolerance to be
$\epsilon=10^{-12}$. Let $\mathcal{Q}_{\epsilon}$ and
$\mathcal{Q}^{1/2}_{\epsilon}$ denote the banded matrices derived from the dense
matrices $\mathcal{Q}$ and $\mathcal{Q}^{1/2}$, then the linear system
in~\eqref{eq:solve-by-CG} can be approximated by
\begin{equation}
(I-\kappa\mathcal{Q}^{1/2}_{\epsilon}\mathcal{P}^{\dagger}
\mathcal{Q}_{\epsilon}\mathcal{P}\mathcal{Q}^{1/2}_{\epsilon})\vs{u}_{1,\epsilon}
=\mathcal{Q}^{1/2}_{\epsilon}{\vv{p}},
\end{equation}
where $\vs{u}_{1,\epsilon}$ approximates $\vs{u}_{1}$. Let us now estimate the 
cost of one CG iteration if the matrix-vector multiplications
involved are carried out in a cascaded fashion. The cost of multiplying 
$\mathcal{Q}_{\epsilon}$ or $\mathcal{Q}^{1/2}_{\epsilon}$ with a vector is $\bigO{N_{\text{band}}N}$,
the cost of multiplying $\mathcal{P}$ or $\mathcal{P}^{\dagger}$ with a vector is
$\bigO{N\log N}$ (where we exploit the fact that they are Hankel matrices). Therefore
the total cost of one CG iteration is $\bigO{N\log N}+\bigO{N_{\text{band}}N}$.
In the asymptotic limit $\log N\gg N_{\text{band}}$ so that the cost works out to be 
$\bigO{N\log N}$. Therefore, the total cost per sample of the scattering
potential works out to be $\bigO{N_{\text{iter.}}N\log N}$. Finally, let us
observe that the approximation introduced above adds an error of
$\bigO{N\epsilon}$ to the original error estimates.

The fast method obtained above can be tested against the original method to 
determine its convergence and run-time behavior. The results of the
numerical experiment with the chirped secant-hyperbolic profile ($\mu=10$) is
shown in Fig.~\ref{fig:comp_fnbl}. Here the average run-time is the run-time per
sample averaged over the number of basis functions
$N\in\{2^5,\ldots,2^{11}\}$. Note that the
improvement in the complexity comes at a price of accuracy as
evidenced by somewhat early plateauing of error in Fig.~\ref{fig:comp_fnbl}. 

\section{Conclusion}\label{sec:conclude}
To conclude, we have presented a sampling theory approach to inverse scattering
transform which is shown to achieve algebraic orders of convergence provided the
regularity conditions on the input data is fulfilled. The convergence behavior
observed in the numerical experiments with Schwartz class (bandlimited or
effectively bandlimited) impulse response tends to exhibit exponential orders of
convergence. We hope to improve our theoretical estimates to explain these
observations in the future. The complexity of the proposed algorithms depend on
the linear solvers used. A conjugate gradient based iterative solver exhibits a
complexity of $\bigO{N_{\text{iter.}}N^2}$ per sample of the signal computed
where $N$ is the number of sampling basis functions used. Using 
a variant of the classical sampling series due to Helms and Thomas, we were able to 
achieve a complexity of $\bigO{N_{\text{iter.}}N\log N}$ by exploiting the
Hankel symmetry and approximately banded structure of the matrices involved. The
bandedness of the so called quadrature matrix can be controlled by a tolerance
$\epsilon$ which introduces an error of $\bigO{N\epsilon}$ in the computed
solution.

Finally, let us remark that, apart from the avenues of improvement mentioned
above, one can identify several other ways the performance of the
proposed algorithms can be improved. The first one has to do with the nature of 
the basis functions itself. We know from the work of Kaiblinger and
Madych~\cite{KM2006} that orthonormal sampling functions with rapid decay can be 
designed which can potentially reduce the errors committed in arriving at an 
effectively sparse quadrature matrix. Secondly,
the seed for iterative solvers is obtained by using a direct solver at least
once in order to start the algorithm when computing the signal over a grid. In a
parallel implementation this would no longer be a good choice; therefore, our
algorithm can benefit greatly from a cheaper method of ``guessing'' the seed.
 
%%%%%%%%%%%%%%%%%%%%%%%%%%%%%%%%%%%%%%%%%%%%%%%%%%%%%%%%%%%%%%%%%%%%%%%%%%%%%%%%%%%%%%%%%%%

%\bibliography{NBL_PRE}

%merlin.mbs apsrev4-1.bst 2010-07-25 4.21a (PWD, AO, DPC) hacked
%Control: key (0)
%Control: author (8) initials jnrlst
%Control: editor formatted (1) identically to author
%Control: production of article title (-1) disabled
%Control: page (0) single
%Control: year (1) truncated
%Control: production of eprint (0) enabled
\providecommand{\noopsort}[1]{}\providecommand{\singleletter}[1]{#1}%
\appendix
\section{The quadrature matrix}\label{sec:q-mat}

The entries of the quadrature matrix, denoted by $\mathcal{Q}$, are defined as
\begin{equation}
\mathcal{Q}_{nl}=\int_{0}^{\infty}\psi_n(s)\psi_l(s)ds,\quad l,n\in\field{Z}.
\end{equation}
It is possible to compute these integrals in terms of the Sine and the Cosine 
integrals which are defined as~\cite[Chap.~6]{Olver:2010:NHMF}
\begin{equation}
\begin{aligned} 
&\Si(t) = \int_0^t\frac{\sin s}{s}ds,
&&\si(t) = -\int_t^{\infty}\frac{\sin s}{s}ds,\\
&\Cin(t)= \int^{t}_{0}\frac{1-\cos s}{s}ds,
&&\Ci(t) = -\int^{\infty}_{t}\frac{\cos s}{s}ds,
\end{aligned}
\end{equation}
and
\begin{equation} 
\begin{split} 
&\si(t) = \Si(t)-\frac{\pi}{2},\\
&\Ci(t) = -\Cin(t)+\log t+\gamma,
\end{split} 
\end{equation}
where $\gamma$ is the Euler's constant. The diagonal entries of $\mathcal{Q}$ works out
to be
\begin{equation}
\mathcal{Q}_{mm}
=\frac{1}{\pi}\int^{\infty}_{0}\frac{\sin^2t}{(t-m\pi)^2}dt
=\frac{1}{2}-\frac{1}{\pi}\Si(-2m\pi).
\end{equation}
Turning to the off-diagonal elements, we have
\begin{equation}
\mathcal{Q}_{mn}=\frac{(-1)^{m+n}}{\pi}\int^{\infty}_{0}\frac{\sin^2t}{(t-m\pi)(t-n\pi)}dt,\quad
m\neq n.
\end{equation}
Note that the integrand of $\mathcal{Q}_{mn}$ is an entire function of $t$. For the moment, assuming
that the origin does not coincide with $n\pi$ or $m\pi$, one can deform the
contour of integration to write
\begin{align*}
&2\pi^2(m-n)(-1)^{m+n}\mathcal{Q}_{mn}\\
&=\lim_{R\rightarrow\infty}\int^{R}_{0}
\left[\frac{1}{(t-m\pi)}-\frac{1}{(t-n\pi)}\right]dt\\
&\qquad-\int^{\infty}_{0}
\left[\frac{1}{(t-m\pi)}-\frac{1}{(t-n\pi)}\right]\cos(2t)dt\\
&=-\log\left(\frac{m}{n}\right)+\Ci(-2m\pi)-\Ci(-2n\pi),
\end{align*}
which yields
\begin{multline}
\mathcal{Q}_{mn}=-\frac{(-1)^{m+n}}{2\pi^2(m-n)}\times\\\left[\Cin(-2m\pi)
-\Cin(-2n\pi)\right],\quad m\neq n.
\end{multline}
Note that the final result does not have any singularities; therefore, we
conclude that it is valid for all $m,n\in\field{Z},\,m\neq n$.
Using the symmetry properties of the Sine and Cosine integrals, we have
\begin{equation}
\mathcal{Q}_{mn}=
\begin{cases}
\frac{1}{2}-\frac{1}{\pi}\Si(2n\pi), & m=n\\
\frac{(-1)^{m+n}}{2\pi^2(m-n)}\left[\Cin(2|m|\pi)-\Cin(2|n|\pi)\right],& m\neq n.
\end{cases}
\end{equation}

\section{Computing the nonlinear impulse response}\label{app:NIR-Bessel}
The input to the inverse NFT is the reflection coefficient $\rho(\xi)$; however,
the GLM equation based approach requires us to compute the nonlinear impulse 
response which is defined by
\begin{equation}\label{eq:nir-int}
p(\tau)=\frac{\sigma}{2\pi}\int_{-1}^{1}\rho(\sigma\xi) e^{i\xi\sigma\tau}d\xi.
\end{equation}
Ordinarily this integral can be computed quite efficiently using the FFT
algorithm which is based on the trapezoidal rule. For large values of the 
quantity $\sigma\tau$, the accuracy of the trapezoidal rule may degrade;
therefore, if extremely high degree of accuracy is demanded we must turn to
other alternatives. It is well known that Gauss-type 
quadrature schemes tend to perform poorly in computing these 
integrals on account of the oscillatory nature of the integrand which
deviates considerably from polynomials, specially for larger values of
$\sigma\tau$. There is a vast amount of literature devoted to treating such
problems, for instance, see~\cite[Section~2.10]{DR1984}) and the references
therein. Here, we would like to choose the method due to Bakhvalov and 
Vasil'eva~\cite{BV1968} which begins with the series expansion
\begin{equation}\label{eq:rho-LT}
\rho(\sigma\xi)=\sum_{n=0}^{\infty}\hat{\rho}_n{\legendre}_n(\xi),\quad\xi\in(-1,1),
\end{equation}
where $\legendre_n(t)$ denotes the Legendre polynomials. Using the
Legendre-Gauss-Lobatto (LGL) nodes, a finite dimensional approximations of
$\hat{\rho}_n$ can be obtained via the Legendre transform~\cite{CHQZ2007}. Let
$\besselj_n(t)$ 
denote the spherical Bessel function of the first kind~\cite[Chap.~10]{Olver:2010:NHMF}.
Now, in order to obtain the exact result, we recall the identity
\begin{equation}
\int^{1}_{-1}{\legendre}_n(\xi)
e^{i\xi \sigma t}d\xi=2i^n\,{\besselj}_n(\sigma t).
\end{equation}
Plugging~\eqref{eq:rho-LT} into~\eqref{eq:nir-int}, we have
\begin{equation}\label{eq:p-besselj}
p(\tau)={\frac{\sigma}{\pi}}
\sum_{n=0}^{\infty}\hat{\rho}_ni^n\,{\besselj}_n(\sigma \tau).
\end{equation}
With precomputed LGL nodes and associated weights, the complexity of obtaining
$\hat{\rho}_n,\,n=0,1,N_{\text{quad.}}-1,$ is $\bigO{N_{\text{quad.}}^2}$ excluding 
the cost of evaluating $\rho(\xi)$. For an efficient method of evaluation of the 
resulting series for $p(\tau)$, one may use the Clenshaw's 
algorithm~\cite{C1955,D1976} which makes efficient use of the recurrence relation 
for the spherical Bessel functions.

\end{document}